\newcommand\redsout{\bgroup\markoverwith{\textcolor{red}{\rule[0.5ex]{2pt}{1pt}}}\ULon}
\newcommand{\extraspacing}{\vspace{3mm}\noindent}
\newcommand{\vgap}{\vspace{1mm}}
\theoremstyle{plain}
\newtheorem{theorem}{Theorem}[section]
\newtheorem{lemma}[theorem]{Lemma}
\newtheorem{corollary}[theorem]{Corollary}
\theoremstyle{definition}
\newtheorem{problem}[theorem]{Problem}
\newtheorem{example}[theorem]{Example}
\theoremstyle{remark}
\newcommand{\boxminipg}[2]{\begin{center}\fbox{\begin{minipage}{#1}#2\end{minipage}}\end{center}}
\newcommand{\myeqn}[1]{\begin{align}#1\end{align}}
\newcommand{\set}[1]{\left\{#1\right\}}
\newcommand{\restr}[2]{{\left.\kern-\nulldelimiterspace #1 \vphantom{|} \right|_{#2}}}
\newcommand{\geo}[1]{\textsf{Geometric}(#1)}
\newcommand{\tgeo}[2]{\textsf{TruncatedGeometric}(#1,#2)}
\newcommand{\unif}[2]{\textsf{Uniform}(#1,#2)}
\def\B{\mathcal{B}}
\def\C{\mathcal{C}}
\def\D{\mathcal{D}}
\def\P{\mathcal{P}}
\def\T{\mathcal{T}}
\def\bj{\bm{j}}
\def\Pr{\mathbf{Pr}}
\DeclareMathOperator*\expt{\mathbf{E}}
\newcommand{\cut}[1]{{}}
\newcommand{\calP}{\mathcal{P}}
\def\fp{\mathrm{p}}
\def\fq{\mathrm{q}}
\renewcommand{\max}[0]{\operatorname{max}}
\renewcommand{\min}[0]{\operatorname{min}}
\newcommand{\attset}{\operatorname{attset}}
\newcommand{\schema}{\operatorname{schema}}
\newcommand{\DA}{\textsf{DirectAccess}~}
\def\rdX{\mathbf{X}}
\def\rdI{\mathbf{I}}
\newcommand{\bu}{\bm{u}}
\newcommand{\bv}{\bm{v}}
\newcommand{\att}{\ensuremath{\mathbf{att}}}
\newcommand{\dom}{\textbf{dom}}
\newcommand{\In}{\ensuremath{N}}
\newcommand{\join}{\ensuremath{\operatorname{Join}}}
\newcommand{\key}{\textsf{key}}
\newcommand{\prob}{\gamma}
\begin{document}

\title{Subset Sampling over Joins}

\author{Aryan Esmailpour}
\orcid{0009-0000-3798-9578}
\affiliation{%
  \institution{
  %Department of Computer Science and Engineering, 
  University of Illinois Chicago}
  \city{Chicago}
  \country{USA}
}
\email{aesmai2@uic.edu}

\author{Xiao Hu}
\orcid{0000-0002-7890-665X}
\affiliation{%
  \institution{University of Waterloo}
  \city{Waterloo}
  \country{Canada}
}
\email{xiaohu@uwaterloo.ca}

\author{Jinchao Huang}
\orcid{0009-0009-2902-259X}
\affiliation{%
  \institution{The Chinese University of Hong Kong}
  \city{Hong Kong}
  \country{Hong Kong}
}
\email{jchuang@link.cuhk.edu.hk}

\author{Stavros Sintos}
\orcid{0000-0002-2114-8886}
\affiliation{%
  \institution{
  %Department of Computer Science, 
  University of Illinois Chicago}
  \city{Chicago}
  \country{USA}
}
\email{stavros@uic.edu}

\begin{abstract}
Subset sampling (also known as Poisson sampling), where the decision to include any specific element in the sample is made independently of all others, is a fundamental primitive in data analytics, enabling efficient approximation by processing representative subsets rather than massive datasets.
While sampling from explicit lists is well-understood, modern applications---such as machine learning over relational data---often require sampling from a set defined implicitly by a relational join. In this paper, we study the problem of \emph{subset sampling over joins}: drawing a random subset from the join results, where each join result is included independently with some probability. We address the general setting where the probability is derived from input tuple weights via decomposable functions (e.g., product, sum, min, max). Since the join size can be exponentially larger than the input, the naive approach of materializing all join results to perform subset sampling is computationally infeasible. We propose the first efficient algorithms for subset sampling over acyclic joins: (1) a \emph{static index} for generating multiple (independent) subset samples over joins; (2) a \emph{one-shot} algorithm for generating a single subset sample over joins; (3) a \emph{dynamic index} that can support tuple insertions, while maintaining a one-shot sample or generating multiple (independent) samples.
Our techniques achieve near-optimal time and space complexity with respect to the input size and the expected sample size.
\end{abstract}

\maketitle

\section{Introduction}
\label{sec:intro}
Sampling is a cornerstone technique in modern data analytics and machine learning, allowing systems to trade a small amount of accuracy for significant gains in performance by processing a representative subset of data rather than the entire dataset~\cite{olken1993random, cormode2012synopses}. A particularly powerful primitive is \emph{subset sampling} (also known as \emph{Poisson sampling}), where the decision to include any specific element in the sample is made independently of all others, based on a specific probability or weight. This independence is often crucial for theoretical guarantees in randomized algorithms and allows for uncoordinated execution in distributed systems~\cite{motwani1995randomized, duffield2007priority}.

While subset sampling from explicit lists of items is well-understood, a new challenge arises in modern data pipelines where the dataset of interest is not stored in a single table but is defined \emph{implicitly} as the result of a relational join. In this paper, we study the problem of Subset Sampling over Joins: efficiently drawing a random subset from the join results, where every join result is associated with a weight and is included in the sample independently with probability equal to its weight. Crucially, we focus on the standard setting where the weight of a join result is not arbitrary, but is derived from weights assigned to the input tuples via an aggregation function---such as the product, minimum, maximum, or summation of the component weights. A compelling motivation for this problem lies in dataset condensation (or coreset construction) for machine learning over multi-relational data~\cite{kumar2015learning, schleich2016learning}. 

\begin{example}
    \label{example:intro} 
    Consider a high-stakes scientific computing task---virtual screening in drug discovery---where a researcher wishes to train a graph neural network to predict the efficacy of complex molecular interactions. The training data is defined by joining massive tables of {\em chemical compounds}, {\em protein targets}, and {\em metabolic pathways}. The full set of valid interaction subgraphs (the join result) is combinatorially explosive---often exceeding petabytes in size---making it impossible to materialize or exhaustively simulate. To make training feasible, the system must generate a condensed, representative dataset~\cite{feldman2011unified, bachem2017practical}. Here, input tuples carry crucial domain-specific weights: a compound has a binding affinity score, and a protein target has a clinical relevance score. The importance of a resulting interaction tuple is naturally derived from these components (e.g., the product of affinity and relevance, representing the joint probability of success; or the minimum, representing a bottleneck constraint). By performing subset sampling with these derived weights, we create a manageable, high-quality summary of the chemical space.
\end{example}

However, performing this sampling efficiently is algorithmically difficult due to the implicit nature of the data. A naive approach would be to first compute the join results as well as their aggregated weights (e.g., the product of its components), and then apply the classic subset sampling algorithm. This strategy is prohibitively costly because of the ``materialization bottleneck'': the size of the join result can be exponentially larger than the input relations \cite{atserias2013size}, which is very costly to compute and store. In this naive process, the algorithm would spend the vast majority of its time generating join results that are immediately discarded by the coin flips, wasting massive amounts of computation and memory. To the best of our knowledge, no prior work has addressed the problem of relational subset sampling except this baseline. In this paper, we aim to fill this gap by providing the first efficient subset sampling algorithms that operate directly on the input relations. By exploiting the join structure of the query and the decomposable nature of the weight functions, we can effectively ``skip'' over the vast sea of rejected tuples and directly generate the successful samples. Our collective results establish subset sampling over joins as a tractable database primitive, which will pave the way for scalable, interactive analytics over massive multi-relational datasets.

\subsection{Problem Definition}
\label{sec:prob-def}
{\bf Subset Sampling.} Given a set of elements $ S=\{e_1, e_2, \ldots, e_n\}$ and a function $\fp$ that assigns each element $e\in S$ a probability $\fp(e)$, a \emph{subset sampling} query asks for a \emph{subset sample} of $S$. A subset sample of $S$ is a random subset $\mathbf{X}\in 2^S$, where each element $e\in S$ is sampled into $\mathbf{X}$ independently with probability $\fp(e)$.
More formally, the distribution of $\mathbf{X}$ is:
        $
            \Pr[\mathbf{X}=Y]=\left( \prod_{e\in Y}\fp(e) \right)\left( \prod_{e\in { S\setminus Y}}(1-\fp(e)) \right), Y\subseteq  S.
        $
We refer to $\Psi=\left< S,\fp\right>$ as a {\em subset sampling problem instance}.
We denote the expected size of $\mathbf{X}$, i.e., $E[|\mathbf{X}|]=\sum_{e\in S}\fp(e)$, as $\mu_\Psi$.

\vgap

\noindent {\bf(Natural) Joins.} Let $\att$ be a set where each element is called an {\em attribute}, and $\dom$ be another set where each element is called a {\em value}. A {\em tuple} over a set $U \subseteq \att$ is a function $\bu: U \rightarrow \dom$. For any subset $U' \subseteq U$, define $\bu[U']$ as the tuple $\bu'$ over $U'$ such that $\bu'(A) = \bu(A)$ for every $A \in U'$. We also call $U$ the support attributes of $\bu$. A {\em relation} is a set $R$ of tuples over the same set $U$ of attributes; we call $U$ the {\em schema} of $R$, a fact denoted as $\schema(R) = U$. Given a subset $U \subseteq \schema(R)$, the {\em projection of $R$ on $U$}---denoted as $R[U]$---is a relation with schema $U$ defined as $R[U] = \set{\text{tuple $\bu$ over $U \mid \exists$ tuple $\bm{v} \in R$ s.t.\ $\bu[U] = \bm{v}[U]$}}.$
We represent a {\em natural join} (henceforth, simply a ``join'') as a set $Q$ of relations. We denote the number of relations in $Q$ as $k=|Q|$. The input size of $Q$ is defined as the total number of tuples in $Q$, denoted as $\In = \sum_{R \in Q}|R|$. Define $\attset(Q) = \bigcup_{R \in Q} \schema(R)$. The join result is the following relation over $\attset(Q)$:
\[
    \join(Q) = \left\{
    \textrm{tuple $\bu$ over $\attset(Q)$} \mid \forall R \in Q, \, \textrm{$\bu[\schema(R)] \in R$}
    \right\}.
    % \label{eqn:join-Q}
\]
The join $Q$ can be characterized by a {\em schema graph} $G_Q = (V, E)$, which is a hypergraph where each vertex in $V$ is a distinct attribute in $\attset(Q)$, and each edge in $E$ is the schema of a distinct relation in $Q$. The set $E$ may contain identical edges because two (or more) relations in $Q$ can have the same schema. The term ``hyper'' suggests that an edge can have more than two attributes. The join $Q$ is {\em acyclic} \cite{yannakakis1981algorithms} if $G_Q$ is acyclic. A hypergraph is acyclic if there exists a tree $\mathcal{T}$, called a {\em join tree}, whose nodes have a one-to-one correspondence with the relations in $Q$ (i.e., hyperedges in $E$), such that for any attribute $A \in V$, the set of nodes corresponding to relations containing $A$ forms a connected subtree in $\mathcal{T}$. A join that is not acyclic is called a {\em cyclic join}.

As a variant of join, the semi-join between $R_1$ and $R_2$ is defined as $R_1 \ltimes R_2 = R_1 \Join (R_2[\schema(R_1) \cap \schema(R_2)])$, which returns all tuples in $R_1$ that can be joined with some tuple from $R_2$. When the second operand is a single tuple $\bv$, $R_1 \ltimes \bv$ is a shorthand for $R_1 \ltimes \{\bv\}$. As a special case, %As a base case representing no restriction, we define the semi-join with an empty tuple as 
$R \ltimes\emptyset=R$.
%For a tuple $\bu \in R_j$, the delta join $\Delta(Q,\bu)$ is equal to the set of the join results participated by $\bu$, i.e., $\Delta(Q,\bu)=\{\bv \in \join(Q): \bv[\schema(R_j)] = \bu\}.$
%If the relations in $Q$ can be listed as $R_1, R_2, \ldots, R_{k}$, we may also represent $\join(Q)$ as $R_1 \bowtie R_2 \bowtie \ldots \bowtie R_{k}$.

\smallskip
\noindent {\bf Subset Sampling over Joins.} Let $Q = \{R_1, R_2, \ldots, R_k\}$ be a join as defined above. Each relation $R_j \in Q$ is associated with a function $\fp_j: R_j \rightarrow [0,1]$ that assigns a weight from $[0,1]$ to each tuple in it. %For any tuple $\bm{u} \in \join(Q)$, its existence is determined by the constituent tuples from the base relations. Specifically, 
For each join result $\bm{u} \in \join(Q)$, its weight is a function $\mathcal{F}$ defined on the weights of these tuples that constitute this join result: 
\myeqn{
\label{eq:join-probability}
    \fp(\bm{u}) = \mathcal{F}\left(\fp_1(\bm{u}[\schema(R_1)]), \fp_2(\bm{u}[\schema(R_2)]), \ldots, \fp_k(\bm{u}[\schema(R_k)])\right),
}
where $\mathcal{F}$ can be the \textsf{MAX}, \textsf{MIN}, \textsf{PRODUCT}, or \textsf{SUM} function. 
It is required that $\fp(\bm{u}) \in [0,1]$.
Given a join instance $Q$ and a set of associated weight functions $\{\fp_j\}_{j \in \{1, 2, \ldots, k\}}$, a {\em relational subset sampling} query is a subset sampling query to the subset sampling instance $\Psi=\left<\join(Q),\fp\right>$, i.e., asks to sample a random subset $\mathbf{X}$ of $\join(Q)$, where each join result $\bu$ is sampled into $\mathbf{X}$ independently with probability defined by its weight $\fp(\bu)$. More formally, the distribution of $\mathbf{X}$ is
\myeqn{
            \Pr[\mathbf{X} =Y] =\left(\prod_{\bu\in Y}\fp(\bu) \right)\left(\prod_{\bu\in {\join(Q) \setminus Y}}(1-\fp(\bu)) \right), Y \subseteq \join(Q).
}
We denote the expected size of $\mathbf{X}$, i.e., $E[|\mathbf{X}|] = \sum_{\bu \in \join(Q)} \fp(\bu)$, as $\mu_{\Psi}$.

The first problem we study is an {\em indexing} (data structure) problem, where we wish to have an index that supports drawing multiple independent subset samples from join results efficiently:
\boxminipg{0.9\linewidth}{
\begin{problem}[Indexed Subset Sampling over Joins]
\label{prob:static-index}
    Given a join instance $Q$ and a set of associated weight functions $\{\fp_i\}_{i\in\{1,2,\ldots,k\}}$, the goal is to construct an index for answering subset sampling queries to the instance $\left<\join(Q),\fp\right>$. Additionally, the random subsets returned by distinct queries must be independent. 
\end{problem}
}
We are interested in the preprocessing time $t_p$ and space usage of the index constructed, as well as the time $t_s$ for answering a relational subset sampling query from the index. 

The second problem we study is a {\em one-shot} version of relational subset sampling where we wish to output one subset sample from join results efficiently:
\boxminipg{0.9\linewidth}{
\begin{problem}[One-shot Subset Sampling over Joins]
\label{prob:static-one-shot}
    Given a join instance $Q$ and a set of associated weight functions $\{\fp_i\}_{i\in\{1,2,\ldots,k\}}$, the goal is to compute one subset sample from the instance $\left<\join(Q),\fp\right>$.
\end{problem}
}
In this case, we are interested in the total answering time. Although any solution to Problem~\ref{prob:static-index} gives a solution to Problem~\ref{prob:static-one-shot} with total time $O(t_p + t_s)$, this may not be efficient enough.

\smallskip\noindent {\bf Dynamic Subset Sampling over Joins.} We also consider the dynamic setting with tuple insertions\footnote{In the fully dynamic case with both insertions and deletions, the subset sampling problem (with all weights as $1$) is at least as hard as maintaining the join results. For general (more precisely, a non-hierarchical) join query, the update time must be $\Omega(\sqrt{N})$ just to maintain the Boolean answer, when both insertions and deletions are allowed \cite{berkholz17:_answer}. Even for this Boolean problem, establishing a matching upper bound is still open, except for some specific joins.}. Now, each insertion is a triple $\langle \bu, R_h, \eta, p \rangle$ for $\eta \in \mathbb{Z}^+$, $p\in [0,1]$ and $R_h \in Q$, indicating that tuple $\bu$ is inserted into relation $R_h$ at timestamp $\eta$ with weight $p$. We follow the set semantics, so inserting a tuple into a relation that already has it has no effect. Consider the stream of input tuples, ordered by their timestamp. Let $Q^\eta$ be the join defined by the first $\eta$ tuples of the stream, and set $Q^0 = \emptyset$. Also, let $\{\fp_i^\eta\}_{i\in\{1,2,\ldots,k\}}$ be the associated weight functions for $Q^\eta$. Each join result $\bu \in Q^\eta$ has its probability $\fp^\eta(\bu)$ defined similar to (\ref{eq:join-probability}). We use $\In$ to denote the length of the stream, which is only used in the analysis. The algorithms will not need the knowledge of $\In$, so they work over an unbounded stream. We have the corresponding versions of both problems in the dynamic setting: 
\boxminipg{0.9\linewidth}{
\begin{problem}[Dynamic Indexed Subset Sampling over Joins]
\label{pro:dynamic-index}
    Suppose tuples come in a streaming fashion. The goal is to maintain an index for answering subset sampling queries to the instance $\left<\join(Q^\eta),\fp^\eta\right>$ for every timestamp $\eta\in\mathbb{Z}^+$. Additionally, the random subsets returned by distinct queries must be independent.
\end{problem}
}

\boxminipg{0.9\linewidth}{
\begin{problem}[Dynamic One-shot Subset Sampling over Joins]
\label{pro:dynamic-one-shot}
    Suppose tuples come in a streaming fashion. The goal is to maintain one subset sample from the instance $\left<\join(Q^\eta),\fp^\eta\right>$ for every timestamp $\eta\in\mathbb{Z}^+$.
\end{problem}
}
\noindent For the dynamic index, we are interested in the maintenance time $t_u$ of the index when a tuple is inserted, and the time $t_s$ for answering a relational subset sampling query from the index. 
For the dynamic one-shot problem, we care about the total running time. Again, any solution to Problem~\ref{pro:dynamic-index} yields a solution to Problem~\ref{pro:dynamic-one-shot} with total time $O(t_u \cdot \In + t_s)$.

\smallskip 
For all versions of the relational subset sampling problem, we study the data complexity~\cite{abiteboul1995foundations} and analyze the complexity in terms of the data-dependent quantities (such as input size $\In$ and expected size of subset samples $\mu_{\Psi}$), while taking the schema size of $Q$ (i.e., $|V|$ and $|E|$) as a constant.

\smallskip \noindent {\bf Computation Model.}
We discuss our algorithms in the \emph{real RAM} model of computation~\cite{BM75, PS85}. In particular, we assume that the following operations take constant time: (i)~accessing a memory location; (ii)~generating a random value from the standard uniform distribution $\unif{0}{1}$; and (iii)~performing basic arithmetical operations involving real numbers, such as addition, multiplication, division, comparison, truncation, and evaluating fundamental functions like log.
We denote by $\geo{p}$ the geometric distribution over $\{0, 1, \ldots\}$ with probability mass function $\Pr[X=k] = (1-p)^k p$. We denote by $\tgeo{p}{n}$ the geometric distribution conditioned on the value being strictly less than $n$, i.e., with support $\{0, 1, \ldots, n-1\}$.
Under the RAM model, we can generate a random value $x$ from $\geo{p}$ in $O(1)$ time~\cite{bringmann2012efficient} by setting $x=\lfloor\frac{\log \unif{0}{1}}{\log(1-p)}\rfloor$. Additionally, we can generate a random value $x$ from $\tgeo{p}{n}$ in $O(1)$ time~\cite{bringmann2012efficient} by setting $x=\lfloor\frac{\log(1-q\cdot \unif{0}{1})}{\log(1-p)}\rfloor$, where $q=1-(1-p)^n$.

\smallskip \noindent {\bf Math Conventions.} For an integer $x \in \mathbb{Z}^+$, the notation $[x]$ denotes the set $\{1, 2, \ldots, x\}$, and $\llbracket x\rrbracket$ denotes the set $\{0,1,\ldots,x\}$. Given an ordered set of elements $X$ and a pair of elements $x,x'\in X$, 
we use $x \prec x'$ 
%(resp., $x \succ x'$) 
to indicate that $x$ is smaller 
%(resp., larger) 
than $x'$, and $x \preceq x'$ 
%(resp., $x \succeq x'$) 
to indicate that $x$ is no larger 
%(resp., smaller) 
than $x'$. 
% we use $x \prec x'$ 
% (resp., $x \preceq x'$) 
% to indicate that $x$ is smaller 
% (resp., no larger) 
% than $x'$. 
We use $\log x$ to denote $\log_2 x$.

\smallskip
To help understanding, we list the main notations used by this paper in Table~\ref{tab:notations-full}.

\subsection{Related Work}
\label{sec:related}

\noindent {\bf Uniform Sampling over Joins.} \citet{chaudhuri1999random} and \citet{olken1993random} first established the uniform sampling over join problem, which returns a uniform sample from the join results. Zhao et al.~\cite{zhao2018random} introduced a linear space index for drawing uniform independent samples from acyclic joins. Later, \citet{chen2020random} investigated the cyclic joins and their results were improved to be conditionally optimal by two independent works \cite{deng2023join, Kim2023guaranteeing}. Most recently, Dai et al.~\cite{DHY24} proposed a reservoir sampling framework for maintaining uniform samples over joins in streaming settings. 

\vgap

\noindent {\bf Non-uniform Sampling over Joins.} Non-uniform sampling has been extensively studied in the context of online aggregation (OLA). The ripple join~\cite{haas1999ripple} and its variants generalize nested-loop joins to incrementally estimate aggregates with confidence intervals. Similarly, wander join~\cite{li2016wander} uses random walks over the join graph to provide unbiased estimators for aggregates like SUM or COUNT. These approaches differ fundamentally from ours: they are designed to estimate scalar statistics using algorithmic probabilities that minimize variance, whereas our goal is to materialize a concrete subset of independent samples according to user-defined importance weights.

Despite the substantial body of work on sampling over joins, we are not aware of any existing approach that can be adapted to support subset sampling over joins, indicating the need for novel techniques. 
%Likewise, in the classical computational setting, index structures for subset sampling are based on different ideas from those underlying uniform or weighted sampling~\cite{bringmann2012efficient}.
A straightforward approach to subset sampling over joins is to first materialize the join result and then build a standard index over all output tuples. However, the size of the join result can be orders of magnitude larger than that of the input database, making this approach prohibitively inefficient. More broadly, there is a growing line of work on solving optimization problems on join results without fully materializing the join~\cite{agarwal2024computing, esmailpour2024improved, chen2022coresets, curtin2020rk, moseley2021relational, merkl2025diversity, khamis2018ac, kumar2015learning, schleich2016learning,
abo2021relational, yang2020towards, cheng2019nonlinear, cheng2021efficient, schleich2019learning, carmeli2023tractable,  arenas2024towards, kara2024f}.

\subsection{Our Results}
\label{sec:our-results} 

Our main results are summarized in Table~\ref{tab:complexity-comparison}. In this paper, we focus on acyclic joins, and all of our algorithms can be extended to cyclic joins using the standard tree decomposition approach \cite{gottlob2014treewidth}, at the cost of increasing the input size.\footnote{The input size $N$ will be replaced by $N^{\textsf{fhtw}}$, where \textsf{fhtw} is the fractional hypertree width \cite{gottlob2014treewidth} of the schema graph $G_Q$.} 
Our contributions address three primary scenarios: 

\begin{itemize}[leftmargin=*]
    \item {\bf (Section~\ref{sec:static-index})} For the static setting, we design an index that can be constructed in $O(N \log N \log \log N)$ time and $O(N \log N)$ space, supporting subset sampling queries in expected $O(1 + \mu_{\Psi} \log N)$ time, where $N$ is the input size and $\mu_{\Psi}$ is the expected sample size.
    \item {\bf (Section \ref{sec:oneshot})} 
    Then, we present a one-shot algorithm that bypasses index construction to generate a single subset sample in $O(N \log^2 N + \mu_{\Psi})$ expected time.%, which is particularly efficient when only a single sample is required.
    \item {\bf (Section \ref{sec:dynamic})} Finally, we extend our framework to the dynamic setting with insertions, 
    %by resorting to the index \cite{DHY24} that can maintain approximated direct access of the acyclic join results efficiently.
    by applying and adapting the dynamic direct access index for the acyclic join in~\cite{DHY24}.
\end{itemize}
\begin{table}[t]
\centering
\begin{adjustbox}{max width=\textwidth}
\begin{tabular}{lcccc}
\toprule
\textbf{Static} & \textbf{Method} & \textbf{Preprocessing Time} & \textbf{Sampling Time} & \textbf{Space Usage} \\
\midrule
\multirow{2}{*}{Index} & Baseline & $O( N + |\join(Q)|)$ & $O(1 + \mu_\Psi)$ & $O(|\join(Q)|)$ \\
& {Our Algorithm} & $O(\In \log \In \log\log \In)$ & $O(1 + \mu_\Psi \log \In)$ & $O(\In \log \In)$ \\
\midrule
\multirow{2}{*}{One-shot} & Baseline & --- & $O( N + |\join(Q)|)$ & $O(|\join(Q)|)$ \\
& {Our Algorithm} & --- & $O(\In\log^2\In + \mu_\Psi)$ & $O(\In\log^2 \In +\mu_\Psi)$ \\
\midrule
\textbf{Dynamic} & \textbf{Method} & \textbf{Update Time} & \textbf{Sampling Time} & \textbf{Space Usage} \\
\midrule
Index & {Our Algorithm} & $O(\log^ 3N \log\log N)$ & $O(\mu_\Psi \log N)$ & $O(\In \log \In)$ \\
\midrule
One-shot & {Our Algorithm} & --- & $O(N\log^ 3N \log\log N + \mu_\Psi \log N)$ & $O(\In \log \In)$  \\
\bottomrule
\end{tabular}
\end{adjustbox}
\caption{Complexity of baseline and our algorithms on acyclic joins for product function. $\In$ is the input size, $|\join(Q)|$ is the join size, and $\mu_\Psi$ is the expected output size.}
\label{tab:complexity-comparison}
\vspace{-2em}
\end{table}

\smallskip
\noindent {\bf Remark 1.} In the main text, we assume the function $\mathcal{F}$ to be the product of the input weights. In Appendix~\ref{appendix:other-functions}, we show how other functions can be supported by simply adapting our algorithms. 

\subsection{Prior Results as Preliminaries}
\label{sec:previous-results}

\noindent {\bf Subset Sampling.}
The classic subset sampling problem with a given set of elements has been well studied~\cite{bringmann2012efficient,yi2023optimal, bhattacharya2024nearoptimal,huang2023subset}. A naive approach is to iterate through all $n$ elements and flip a biased coin for each, taking $O(n)$ time. So, the goal is to design algorithms with query time proportional to the expected output size $\mu_\Psi$, which can be significantly smaller than the total number of elements. 
In the dynamic setting, where elements and probabilities can change, any algorithm requires $\Omega(1+\mu_\Psi)$ query time and $\Omega(1)$ update time~\cite{yi2023optimal}.
In the static setting, \cite{bringmann2012efficient} provided the first index with an optimal query time of $O(1+\mu_\Psi)$. However, their index requires $O(\log^2 n)$ update time, which is suboptimal in the dynamic setting. Recently, \cite{bhattacharya2024nearoptimal,yi2023optimal,huang2023subset} optimally solved this dynamic problem with indexes that take expected query time $O(1+\mu_\Psi)$, $O(n)$ space and $O(1)$ update time.

\smallskip
\noindent {\bf Worst-Case Optimal Joins.} 
The AGM bound states that for any join instance $Q$ of size $\In$, the maximum number of join results is $\Theta(\In^{\rho^*})$~\cite{atserias2013size}, where $\rho^*$ is the fractional edge covering number of the schema graph $G_Q$. The worst-case optimal join algorithms can compute any join instance $Q$ of size $\In$ within $O(\In^{\rho^*})$ time \cite{ngo2018worst}. %Given the schema graph $G=(V,E)$ of a join $Q$, a {\em fractional edge covering} is a function $x$ that assigns a non-negative weight $x(e)$ to each edge $e \in E$ such that for every vertex $A \in V$, the sum of weights of all edges containing $A$ is at least 1, i.e., $\sum_{e \in E: A \in e} x(e) \ge 1$. The {\em fractional edge covering number} $\rho^*$ is the minimum possible total weight, $\sum_{e \in E} w(e)$, over all fractional edge coverings $w$. 

\smallskip
\noindent {\bf Direct Access for Joins.} The {\em direct access} problem for joins was first studied by \citet{carmeli2020answering}, which assumes a fixed ordering on the join results in $\join(Q)$ and asks for an index to return the join result in some specific position. For any acyclic join $Q$, an index can be built in $O(\In)$ time such that each direct access query can be answered in $O(\log \In)$ time \cite{zhao2018random,carmeli2020answering}. %For any cyclic join $Q$, incorporating the tree decomposition and worst-case optimal join algorithms into \cite{}, an index can be built in $O(\In^{\#\textrm{subw}} \log \In)$ time, such that each direct access query can be answered in $O(\log \In)$ time, where $\#\textrm{subw}$ is the $\#$sub-modular width of the join query \cite{}. 

\section{Classic Subset Sampling Revisited}
\label{sec:revisited}
% In this section, we revisit the classic subset sampling problem, which will be the building blocks of our relational sub set sampling algorithms. We assume a \DA oracle is available for the input set $ S$ (under some fixed ordering\footnote{The ordering can be arbitrary but must remain consistent across multiple invocations of this oracle. }), such that after $O(|S|)$ preprocessing time, given an integer $i\in[|S|]$, it returns the $i$-th element of $S$ in $O(1)$ time. All missing proofs are given in Appendix~\ref{appendix:revisited}.
In this section, we revisit the classic subset sampling problem, which serves as the foundation for our relational algorithms. We rely on the concept of a \DA oracle for the input set $S$ (under some fixed ordering\footnote{The ordering can be arbitrary but must remain consistent across multiple invocations of this oracle.}). Given an integer $i\in[|S|]$, this oracle returns the $i$-th element of $S$ in $O(1)$ time. All missing proofs are given in Appendix~\ref{appendix:revisited}.

\subsection{Perfect-Uniform Subset Sampling}
A subset sampling problem instance $\Psi = \left< S,\fp\right>$ is {\em perfect-uniform} if $\fp(e)=p$ for any $e \in  S$, which is also simplified as $\Psi = \left< S,p\right>$. 
We have the following folklore result for perfect-uniform scenario:
\begin{lemma} [Perfect-Uniform Subset Sampling]\label{lm:uss}
    For a perfect-uniform subset sampling instance $\Psi = \left< S,p\right>$
    %, the subset sampling query can be answered in expected $O(1+\mu_\Psi)$ time.
    , there is an index $\D$ that can be built within  $O(|S|)$ space and time, using which each subset sampling query can be answered in $O(1+\mu_\Psi)$ expected time.
\end{lemma}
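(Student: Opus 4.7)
The plan is to build the index as a direct-access array: since we only need $O(1)$-time access to the $i$-th element of $S$, we simply copy $S$ into an array of length $|S|$, which takes $O(|S|)$ time and space. All of the work will then be deferred to the query algorithm, which exploits the fact that the probabilities are uniform.

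The key idea for the query is the classical \emph{geometric jumping} trick. Instead of flipping a biased coin for every element of $S$ (which would cost $\Theta(|S|)$ time), I will walk through the indices of $S$ in increasing order, and at each step draw a single random variable $g \sim \geo{p}$ that records the number of consecutive failures before the next success. Concretely, the query algorithm maintains a cursor $i$, initialized to $0$, and a running output $\mathbf{X}$, initialized to $\emptyset$. It then repeats the following loop: draw $g \sim \geo{p}$, set $i \leftarrow i + g + 1$; if $i > |S|$, halt and return $\mathbf{X}$; otherwise retrieve the $i$-th element via the direct-access oracle and insert it into $\mathbf{X}$. Because the preliminaries guarantee that a sample from $\geo{p}$ can be produced in $O(1)$ time under the real-RAM model, each iteration costs $O(1)$.

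For correctness, I would argue that this procedure produces exactly the same distribution as the naive independent coin-flip procedure. The reason is that when all elements share probability $p$, the gaps between consecutive ``successful'' indices in the coin-flip process are i.i.d.\ $\geo{p}$ random variables, and conditioned on the set of successful indices, each subset is equally likely; so sequentially generating the successful indices via $\geo{p}$ draws yields the correct joint distribution on $\mathbf{X}$. For the running time, the total number of iterations equals one plus the number of elements actually sampled into $\mathbf{X}$, so the expected iteration count is $1 + \mu_\Psi$ (where $\mu_\Psi = p|S|$), giving expected query time $O(1 + \mu_\Psi)$.

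The only subtle point, which will be the main (albeit small) obstacle, is justifying that drawing $\geo{p}$ and adding $g+1$ to the cursor is really equivalent to the coin-flip model, including at the boundary (the final overshoot that terminates the loop). This can be handled either by a direct PMF computation showing that the induced distribution on $\mathbf{X}$ matches the desired product distribution, or by an equivalent coupling argument to a stream of i.i.d.\ $\mathrm{Bernoulli}(p)$ coins indexed by $\{1,\dots,|S|\}$. Either way the analysis is short, and the stated preprocessing/space/query bounds follow immediately.
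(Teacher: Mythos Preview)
Your proposal is correct and matches the paper's approach: the paper builds the same direct-access array and gives exactly your geometric-jump procedure as Algorithm~\ref{alg:uss-vanilla} (\textbf{uss-vanilla}), with the same correctness and $O(1+\mu_\Psi)$ running-time argument. The paper also precomputes $q = 1-(1-p)^{|S|}$ and presents a second variant (Algorithm~\ref{alg:uss-advanced}) that first tosses a coin for non-emptiness, but this is introduced only for the later batched setting and is not needed for Lemma~\ref{lm:uss} itself.
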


%\end{lemma}

% Algorithm~\ref{alg:uss-vanilla} has optimal query time in the expected sense, representing a significant improvement over the brute-force coin-flipping method. However, this algorithm is not well-suited for use in reductions, since it takes at least $O(1)$ time to decide whether to include any element. If there are $m$ subset sampling problem instances, we have to spend at least $O(m)$ time to answer subset sample queries to each of them, which becomes very expensive, especially when $m$ is large in our use case. 

\begin{wrapfigure}{r}{0.44\textwidth}
    \vspace{-1em}
   \begin{minipage}{0.44\textwidth}
      \begin{algorithm}[H]
    \caption{\textbf{uss-vanilla}$(S, p)$}
    \label{alg:uss-vanilla}
    %\KwIn{A set $S$ and a uniform probability $p$.}
    %\KwOut{A subset sample $\mathbf{X}$.}
    $\mathbf{X} \gets \emptyset$; $i \gets 0$\;
    \While{$i < |S|$}{
        $i \gets i + 1 + \geo{p}$\;
        Add $\DA(S,i)$ to $\mathbf{X}$\;
    }
    \Return $\mathbf{X}$\;
\end{algorithm}
\end{minipage}
 \ \ \ \ 
\begin{minipage}{0.44\textwidth}
\begin{algorithm}[H]
    \caption{\textbf{uss-advanced}$(S, p)$}
    \label{alg:uss-advanced}
    %\KwIn{A set $S$ and a uniform probability $p$.}
    %\KwOut{A subset sample $\mathbf{X}$.}
    $\mathbf{X} \gets \emptyset$; $q \gets 1-(1-p)^{|S|}$\; %\tcp{notice that $q \le \min\{1, |S|\cdot p\}$}
    \If{$\unif{0}{1} \le q$}{
        $i \gets 1 + \tgeo{p}{|S|}$\;
        Add $\DA(S, i)$ to $\mathbf{X}$\;
        \While{$i < |S|$}{
            $i \gets i + 1 + \geo{p}$\;
            Add $\DA(S, i)$ to $\mathbf{X}$\;
        }
    }
    \Return $\mathbf{X}$\;
\end{algorithm}
\end{minipage}
  \vspace{-2em}
\end{wrapfigure}

The index $\D$ consists of an array storing the elements of $S$, which allows us to retrieve the $i$-th element in $O(1)$ time. Additionally, we precompute the probability $q = 1 - (1-p)^{|S|}$ that the resulting sample is non-empty. The construction takes $O(|S|)$ time and space. Given this index, there are two algorithms to achieve the query time complexity. Algorithm~\ref{alg:uss-vanilla} represents the standard approach using geometric jumps to skip over rejected elements. Algorithm~\ref{alg:uss-advanced}, however, utilizes the precomputed probability $q$ to introduce a preliminary step: it first tosses a coin to decide whether to sample at least one element from $S$. While both algorithms are optimal for a single instance, Algorithm~\ref{alg:uss-advanced} offers a special benefit when handling multiple perfect-uniform instances. Instead of invoking the sampling procedure for every instance, 
this structure allows us to first sample 
% the algorithm first samples
the subset of instances that yield non-empty results, and then perform subset sampling only within those. This strategy significantly improves the total time complexity in the batched setting, as will be detailed in Section~\ref{sec:batched-rejection}.

% \begin{lemma} [Advanced Perfect-Uniform Subset Sampling]\label{lm:uss-advanced}
%    For a perfect-uniform subset sampling instance $\Psi = \left< S,p\right>$, Algorithm~\ref{alg:uss-advanced} answers the subset sampling query in expected $O(1+\mu_\Psi)$ time.
% \end{lemma}

\subsection{Rejection-based Subset Sampling}

The perfect-uniform condition is too restrictive for general applications. We can relax this requirement using a rejection-based strategy without affecting the asymptotic running time. Given a general instance $\Psi = \langle S, \fp \rangle$, we define an upper bound $p_{\max} = \max_{e \in S} \fp(e)$. We construct the index described in Lemma~\ref{lm:uss} for the bounding perfect-uniform instance $\Psi^+ = \langle S, p_{\max} \rangle$. This preprocessing takes $O(|S|)$ time.

%As shown in Algorithm~\ref{alg:ss-rejected}, 
The query procedure first retrieves a subset sample $\mathbf{X}$ from $\Psi^+$ using the index (via Algorithm~\ref{alg:uss-vanilla} or Algorithm~\ref{alg:uss-advanced}). Then, for each element $e' \in \mathbf{X}$, we retain it in the final sample with probability $\fp(e')/p_{\max}$. The overall runtime is proportional to the size of the intermediate sample $\mathbf{X}$. There are two conditions under which this runtime is efficiently bounded:

\begin{lemma} [$\beta$-uniform Subset Sampling]\label{lm:nss}
    Given a subset sampling instance $\Psi = \left< S,\fp\right>$, if $\max_{e\in S}\fp(e)\le\beta\cdot\min_{e\in S}\fp(e)$ for some parameter $\beta \ge 1$, there is an index $\D$ that can be built within $O(|S|)$ space and time, using which each subset sampling query can be answered in $O(1+\beta\cdot\mu_\Psi)$ expected time.
\end{lemma}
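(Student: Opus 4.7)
The plan is to implement the rejection-based strategy that the surrounding text already sketches, and then verify both correctness and the claimed runtime. First I would let $p_{\max} = \max_{e \in S} \fp(e)$ and $p_{\min} = \min_{e \in S} \fp(e)$, and build the perfect-uniform index from Lemma~\ref{lm:uss} on the bounding instance $\Psi^+ = \langle S, p_{\max}\rangle$; this takes $O(|S|)$ preprocessing time and space, and $p_{\max}$ can be computed in a single scan over $S$. A sampling query then proceeds in two stages: (i) draw an intermediate subset $\mathbf{X}$ from $\Psi^+$ using Algorithm~\ref{alg:uss-vanilla} (or Algorithm~\ref{alg:uss-advanced}), and (ii) for every $e \in \mathbf{X}$, independently keep $e$ with probability $\fp(e)/p_{\max} \in [0,1]$, outputting the filtered set.

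For correctness I would argue that the two-stage procedure simulates $\Psi = \langle S, \fp\rangle$ exactly. Each $e \in S$ is placed into $\mathbf{X}$ independently with probability $p_{\max}$ (by Lemma~\ref{lm:uss}), and then survives the filter independently with probability $\fp(e)/p_{\max}$, so its final inclusion probability is exactly $\fp(e)$. Because both the coins in stage~(i) and the coins in stage~(ii) are mutually independent across elements, the joint distribution of the output is precisely the product distribution required by subset sampling on $\Psi$.

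For the runtime bound, the key observation---which is the only non-cosmetic step---is the inequality
\[
    \mathbf{E}[|\mathbf{X}|] \;=\; |S|\cdot p_{\max} \;\le\; \beta\cdot |S|\cdot p_{\min} \;\le\; \beta\cdot \sum_{e\in S}\fp(e) \;=\; \beta\cdot \mu_\Psi,
\]
where the first inequality uses the $\beta$-uniform hypothesis $p_{\max}\le \beta\cdot p_{\min}$. By Lemma~\ref{lm:uss}, stage~(i) runs in expected $O(1+\mu_{\Psi^+}) = O(1+|S|\cdot p_{\max}) = O(1+\beta\cdot\mu_\Psi)$ time. Stage~(ii) performs one coin flip per element of $\mathbf{X}$, so it costs $O(1+\mathbf{E}[|\mathbf{X}|]) = O(1+\beta\cdot\mu_\Psi)$ expected time as well. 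Summing the two stages gives the claimed expected query time.

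There is no substantive obstacle here; the only subtle point is the chained inequality above, which converts the trivial bound $\mathbf{E}[|\mathbf{X}|] = |S|\cdot p_{\max}$ into something expressible in terms of $\mu_\Psi$, and which is exactly where the $\beta$-uniform assumption is used. Everything else is a direct invocation of Lemma~\ref{lm:uss} together with independent post-filtering. If desired, one can also handle the degenerate case $p_{\max}=0$ separately (in which case both $\mu_\Psi = 0$ and the output is deterministically $\emptyset$), but it fits the same bound.
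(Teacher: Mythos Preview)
Your proposal is correct and follows exactly the approach the paper sketches in the text preceding the lemma: build the perfect-uniform index for $\Psi^+=\langle S,p_{\max}\rangle$, sample an intermediate set, then reject each element with probability $1-\fp(e)/p_{\max}$. The key chained inequality $|S|\cdot p_{\max}\le \beta\cdot|S|\cdot p_{\min}\le \beta\cdot\mu_\Psi$ is precisely the intended use of the $\beta$-uniform hypothesis, and nothing further is needed.
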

\vspace{-0.5em}
\begin{lemma} [Light Subset Sampling]\label{lm:lpss}
    Given a subset sampling instance $\Psi = \left< S,\fp\right>$, if $\max_{e\in S}\fp(e) \le 1/| S|$, there is an index $\D$ that can be built within $O(|S|)$ space and time, using which each subset sampling query can be answered in $O(1)$ expected time.
\end{lemma}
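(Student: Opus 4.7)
The plan is to apply the rejection-based strategy described just before the statement: set $p_{\max} = \max_{e \in S}\fp(e)$ and build the index of Lemma~\ref{lm:uss} for the bounding perfect-uniform instance $\Psi^+ = \langle S, p_{\max}\rangle$. By Lemma~\ref{lm:uss}, this preprocessing takes $O(|S|)$ time and $O(|S|)$ space, matching the required resource bounds. The key observation that specializes this general scheme to the light regime is that the hypothesis $p_{\max} \le 1/|S|$ forces $\mu_{\Psi^+} = |S| \cdot p_{\max} \le 1$, so the bounding instance has a constant expected sample size.

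To answer a query, I would first invoke Algorithm~\ref{alg:uss-vanilla} on $\Psi^+$ to obtain an intermediate sample $\mathbf{X}^+$, and then independently keep each $e' \in \mathbf{X}^+$ in the final sample $\mathbf{X}$ with probability $\fp(e')/p_{\max}$. Correctness is the standard rejection argument: each element $e \in S$ is included in $\mathbf{X}$ with probability $p_{\max} \cdot (\fp(e)/p_{\max}) = \fp(e)$, and the inclusion events are mutually independent because both stages draw independent coins per element. This is exactly the distribution required by $\Psi = \langle S, \fp\rangle$.

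For the time bound, note that Algorithm~\ref{alg:uss-vanilla} performs $O(1)$ work per loop iteration (one geometric draw plus one \textsf{DirectAccess} call), and the number of iterations is at most $1 + |\mathbf{X}^+|$, whose expectation is $1 + \mu_{\Psi^+} \le 2$. The subsequent rejection step spends $O(1)$ per element of $\mathbf{X}^+$, contributing an expected $O(|S| \cdot p_{\max}) = O(1)$. Summing these two contributions gives an expected query time of $O(1)$, as claimed.

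The only real subtlety is choosing the right query procedure: one must rely on the $O(1 + \mu_{\Psi^+})$-expected-time guarantee of Lemma~\ref{lm:uss} rather than any worst-case bound, since in the worst case a very unlucky coin sequence could still sample every element. However, because $\mu_{\Psi^+}$ is bounded by a constant under the light hypothesis, this expected bound translates directly into the desired $O(1)$ expected query time, and no further algorithmic ideas beyond the generic rejection template are needed.
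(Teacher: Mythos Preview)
Your proposal is correct and follows exactly the approach the paper outlines in the paragraph preceding Lemmas~\ref{lm:nss} and~\ref{lm:lpss}: build the perfect-uniform index for $\Psi^+ = \langle S, p_{\max}\rangle$, sample via rejection, and exploit the light hypothesis to bound $\mu_{\Psi^+} = |S|\cdot p_{\max} \le 1$. The paper does not spell out a separate proof for this lemma, treating it as an immediate consequence of the generic rejection template together with the observation you identified.
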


\begin{algorithm}[t]
    \caption{\textbf{ss-rejected-batch}$(\{\Psi_i\}_{i\in[m]})$}
    \label{alg:ss-rejected-batch}
    \KwIn{Implicit access to the meta-index for $\fq$ and sub-indexes for $\{\Psi_i\}$.}
    \KwOut{A subset sample $\rdX$.}
    $\rdI \gets \textbf{ss-query}(\Psi_{\text{meta}})$ \tcp*{Query the meta-index}
    \ForEach{$i \in \rdI$}{
        $\rdX_i \gets \emptyset$\;
        $j \gets 1+\tgeo{p_i^+}{|S_i|}$ \tcp*{Simulate Algorithm~\ref{alg:uss-advanced} using sub-index}
        Add $\DA(S_i, j)$ to $\rdX_i$\;
        \While{$j < |S_i|$}{
            $j \gets j+1+\geo{p_i^+}$\;
            \lIf{$j \le |S_i|$}{Add $\DA(S_i, j)$ to $\rdX_i$}
        }
        \lForEach{$e \in \rdX_i$}{
            Remove $e$ from $\rdX_i$ with probability $1-\frac{\fp_i(e)}{p_i^+}$}
    }
    \Return $\bigcup_{i \in \rdI} \rdX_i$\;
\end{algorithm}
\vspace{-0.5em}
\subsection{Batched Rejection-based Subset Sampling}
\label{sec:batched-rejection}

Consider a general subset sampling problem instance $\Psi=\langle S,\fp\rangle$ that is partitioned into a set of $m$ disjoint sub-instances $\set{\Psi_i=\langle S_i,\fp\rangle}_{i\in[m]}$, where $S=\bigsqcup_{i\in[m]}S_i$. Each sub-instance has an upper bound $p_i^+$ on the maximum probability (i.e., $\max_{e\in S_i}\fp(e)\le p_i^+$) and is either light or $\beta$-uniform.

A naive approach would be to construct the index from Section 2.2 for each sub-instance and invoke the query procedure for each one. However, this would take $\Omega(m)$ time per query, as we must spend at least $O(1)$ time to check every sub-instance. To improve efficiency, we construct a composite index that supports a two-stage sampling process:
\begin{itemize}[leftmargin=*]
    \item {\bf Preprocessing phase:} First, for each sub-instance $\Psi_i$, we build the standard index for $\langle S_i, p_i^+ \rangle$ as described in Lemma~\ref{lm:uss}. Second, we construct a meta-index to identify which sub-instances are likely to contribute to the final sample. We define a probability function $\fq:[m]\to[0,1]$ where $\fq(i)=1-(1-p_i^+)^{|S_i|}$. Note that $\fq(i)$ is the probability that sampling from $S_i$ with uniform probability $p_i^+$ yields a non-empty set. We build an optimal subset sampling index (e.g., \cite{bhattacharya2024nearoptimal}) for the meta-instance $\Psi_{\text{meta}} = \langle [m], \fq \rangle$. The total preprocessing time is linear in $|S|$.
\item {\bf Query phase:}
% The query procedure
% %, 
% outlined in 
Algorithm~\ref{alg:ss-rejected-batch}
%, 
proceeds in two stages. First, it queries the meta-index to obtain a random set of indices $\rdI \subseteq [m]$. Then, only for the selected indices $i \in \rdI$, it invokes the query procedure on the sub-instance $\Psi_i$ (using Algorithm~\ref{alg:uss-advanced} logic) and performs rejection sampling.
\end{itemize}

\begin{lemma} [Batched Subset Sampling Index]
\label{lm:batched-index}
    Given a partitioned instance as defined above, there is an index that can be built in $O(|S|)$ time and space. Using this index, Algorithm~\ref{alg:ss-rejected-batch} returns a valid subset sample in $O\left(1 + \sum_{i=1}^m |S_i| \cdot p_i^+\right)$ expected time.
\end{lemma}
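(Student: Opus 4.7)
The plan is to verify three things in turn: (i) preprocessing cost, (ii) distributional correctness of the output, and (iii) the expected running time of Algorithm~\ref{alg:ss-rejected-batch}. For (i), for each sub-instance $\Psi_i$ we invoke Lemma~\ref{lm:uss} on $\langle S_i, p_i^+\rangle$ to build a perfect-uniform index in $O(|S_i|)$ space and time, which sums to $O(|S|)$ across $i$. For the meta-instance $\Psi_{\text{meta}} = \langle [m], \fq\rangle$, we invoke one of the optimal subset sampling indexes cited in Section~\ref{sec:previous-results} (e.g., \cite{bhattacharya2024nearoptimal}) in $O(m) = O(|S|)$ time and space, since $m \le |S|$.

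For (ii), I would argue that the composite two-stage procedure is distributionally identical to running Algorithm~\ref{alg:uss-advanced} independently on each $\langle S_i, p_i^+\rangle$ and then applying per-element rejection with probability $\fp(e)/p_i^+$. The key observation is that the meta-query produces a random subset $\rdI \subseteq [m]$ in which the indicators $\mathbb{1}[i\in\rdI]$ are mutually independent Bernoullis with parameter $\fq(i) = 1-(1-p_i^+)^{|S_i|}$; this is exactly the distribution of the outer coin in line~2 of Algorithm~\ref{alg:uss-advanced}, since $\fq(i)$ is the probability that a perfect-uniform sample of $\langle S_i, p_i^+\rangle$ is non-empty. Conditioned on $i\in\rdI$, the inner truncated-geometric followed by geometric jumps is precisely the body of Algorithm~\ref{alg:uss-advanced}, which by Lemma~\ref{lm:uss} yields a sample distributed as $\langle S_i, p_i^+\rangle$ conditioned on being non-empty. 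Marginalizing over $\rdI$, each $e \in S_i$ appears in $\rdX_i$ with probability $p_i^+$, mutually independently across all $e \in S$. The subsequent rejection step multiplies by $\fp(e)/p_i^+$, giving each $e$ independent inclusion probability $\fp(e)$, which is the required distribution for $\Psi$.

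For (iii), I decompose the runtime into meta-work and per-instance work. The meta-query runs in $O(1 + \mu_{\Psi_{\text{meta}}})$ expected time, where $\mu_{\Psi_{\text{meta}}} = \sum_{i=1}^m \fq(i)$. Bernoulli's inequality gives $(1-p_i^+)^{|S_i|} \ge 1 - |S_i|\,p_i^+$, so $\fq(i) \le |S_i|\,p_i^+$ and therefore $\mu_{\Psi_{\text{meta}}} \le \sum_i |S_i|\,p_i^+$. For each $i \in \rdI$, the inner loop performs $1 + |\rdX_i|$ basic operations; conditioned on $i \in \rdI$, we have $\mathbf{E}[|\rdX_i| \mid i \in \rdI] = |S_i|\,p_i^+/\fq(i)$. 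Multiplying by the invocation probability $\fq(i)$ and summing yields
\[
\sum_{i=1}^m \fq(i) \cdot \Bigl(1 + \tfrac{|S_i|\,p_i^+}{\fq(i)}\Bigr) \;=\; \sum_{i=1}^m \fq(i) \;+\; \sum_{i=1}^m |S_i|\,p_i^+ \;=\; O\Bigl(1 + \sum_{i=1}^m |S_i|\,p_i^+\Bigr),
\]
which absorbs both the meta-query cost and the rejection cost (the latter is bounded by the total sub-sample size).

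The main subtlety, and the only step requiring care, is the equivalence argument in (ii): one must check that simulating the outer coin of Algorithm~\ref{alg:uss-advanced} through the meta-instance does not introduce unwanted dependence across the sub-instances, and that conditioning on the meta-coin being $1$ is exactly the same event as conditioning on ``the perfect-uniform sample of $S_i$ is non-empty.'' Once these two identifications are made, the rest is a routine calculation using Bernoulli's inequality and linearity of expectation.
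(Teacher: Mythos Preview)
Your proposal is correct and follows essentially the same approach as the paper: the preprocessing bound, the meta-/sub-sampling decomposition of the running time, the identity $\expt[|\rdX_i|\mid i\in\rdI]=|S_i|p_i^+/\fq(i)$, and the final simplification via $\fq(i)\le |S_i|p_i^+$ all match the paper's proof. You are slightly more explicit than the paper in two places---you justify $\fq(i)\le |S_i|p_i^+$ via Bernoulli's inequality (the paper simply asserts it) and you spell out the distributional-equivalence argument for correctness (the paper omits this entirely)---but neither constitutes a different route.
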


\vspace{-1em}
\section{Indexed Subset Sampling over Joins}
\label{sec:static-index}

We start with Problem~\ref{prob:static-index} for the join instance $Q$ and the associated weight functions $\{\fp_i\}_{i\in[k]}$. We assume tuples in each relation $R_i \in Q$ are in some arbitrary fixed ordering, and relations in $Q$ are also in some arbitrary fixed ordering. All missing proofs of this section are given in Appendix~\ref{appendix:static-index}.

\subsection{First Attempt}
\label{sec:first-algorithm}
So far, we first assume that a \DA oracle is available for the input join instance $Q$ (under some fixed ordering), such that it receives an integer $i \in [|\join(Q)|]$, and returns the $i$-th element of $\join(Q)$. Similar to before, the ordering can be arbitrary but must remain consistent across multiple invocations of this oracle. %Let $\beta$ be the uniformity parameter of probabilities in each subspace. 
Let $L=\lceil2\rho^*\log\In\rceil$, where $\rho^*$ is the fractional edge covering number of the schema graph $G_Q$. Note that $2^L \ge |\join(Q)|$. 

\smallskip\noindent{\bf Step 1: Partition input relations.}
For each relation $R_i\in Q$, we partition tuples into $L+1$ sub-relations  
$R_i^{\langle 0 \rangle}, R_i^{\langle 1 \rangle}, \ldots, R_i^{\langle L \rangle}$ based on their probabilities, where $R_i^{\langle j \rangle}=\{\bu\in R_i \mid 2^{-j-1}<\fp_i(\bu)\le2^{-j}\}$ for $j\in 
%\{0,1, \ldots, L-1\}
\llbracket L-1\rrbracket
$ and $R_i^{\langle L \rangle}=\{\bu\in R_i|\fp_i(\bu)\le2^{-L}\}$. Then, each combination $\bj = (j_1,j_2,\ldots,j_k) \in \llbracket L \rrbracket^k$ defines a sub-join of $Q$, denoted as:
$Q_{\bj} = \{R_1^{\langle j_1\rangle}, R_2^{\langle j_2\rangle},\ldots, R_k^{\langle j_k \rangle}\}.$
The join results of $Q_{\bj}$ are a disjoint partition of the join results of $Q$, i.e., $\join(Q) = \bigsqcup_{\bj \in \llbracket L \rrbracket^k} \join(Q_{\bj})$. To answer a query to the subset sampling instance $\Psi=\left<\join(Q),\fp\right>$, we can return the disjoint union of the subsets sampled from all sub-instances
$\Psi_{\bj}=\left<\join(Q_{\bj}),\fp\right>$ for $\bj\in \llbracket L \rrbracket^k$. There are $(L+1)^k$ sub-instances in total.

\smallskip\noindent{\bf Step 2: Apply subset sampling to all sub-instances.} 
We first point out the following observation on each sub-instance in such a partition: 

\begin{lemma}[Either Light or Near-uniform Sub-instance] 
\label{lem:uniform-light}
Let $\bj = (j_1,j_2,\ldots,j_k)$. If $\sum_{i\in[k]} j_i \ge L$, the instance $\Psi_{\bj}$ is light, and otherwise, it is $2^k$-uniform. 
\end{lemma}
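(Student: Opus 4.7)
Since $\mathcal{F}$ is the product function, for any $\bu \in \join(Q_{\bj})$ we have $\fp(\bu) = \prod_{i=1}^{k} \fp_i(\bu[\schema(R_i)])$, and for each $i$ the factor $\fp_i(\bu[\schema(R_i)])$ is drawn from the bucket $R_i^{\langle j_i\rangle}$. The plan is to derive matching upper and lower bounds on $\fp(\bu)$ purely in terms of $\bj$, and then split on whether $\sum_i j_i \ge L$.

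First I would record the bucket bounds: for $j_i < L$, membership in $R_i^{\langle j_i \rangle}$ gives $2^{-j_i-1} < \fp_i(\bu[\schema(R_i)]) \le 2^{-j_i}$; for $j_i = L$, only $\fp_i(\bu[\schema(R_i)]) \le 2^{-L}$ is known. Taking products immediately yields the uniform upper bound $\fp(\bu) \le 2^{-\sum_i j_i}$ valid for every $\bj$, and, whenever every $j_i < L$, the matching lower bound $\fp(\bu) > 2^{-\sum_i j_i - k}$.

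For the light case $\sum_i j_i \ge L$, combine $\fp(\bu) \le 2^{-\sum_i j_i} \le 2^{-L}$ with the AGM-type count: by the choice $L = \lceil 2\rho^* \log N \rceil$ and the AGM bound, $|\join(Q_{\bj})| \le |\join(Q)| \le N^{\rho^*} \le 2^{L/2} \le 2^L$, so $\max_{\bu \in \join(Q_{\bj})} \fp(\bu) \le 2^{-L} \le 1/|\join(Q_{\bj})|$, which is exactly the light condition of Lemma~\ref{lm:lpss}.

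For the near-uniform case $\sum_i j_i < L$, observe that since each $j_i \ge 0$, we cannot have any $j_i = L$ (else the sum would already be $\ge L$), so every coordinate satisfies $j_i < L$ and both the upper and lower bucket bounds apply. Multiplying them gives $\fp(\bu) \in \bigl(2^{-\sum_i j_i - k},\, 2^{-\sum_i j_i}\bigr]$ for every $\bu \in \join(Q_{\bj})$, hence $\max_{\bu}\fp(\bu)/\min_{\bu}\fp(\bu) \le 2^k$, matching the $2^k$-uniform condition of Lemma~\ref{lm:nss}. There is no substantive obstacle here; the only subtlety is making sure the AGM-based inequality $2^L \ge |\join(Q)|$ is invoked correctly in the light case and that the case split cleanly excludes $j_i = L$ when $\sum_i j_i < L$.
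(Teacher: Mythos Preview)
Your proposal is correct and follows essentially the same argument as the paper: derive the product upper bound $\fp(\bu)\le 2^{-\sum_i j_i}$ in all cases, use $2^L \ge |\join(Q)| \ge |\join(Q_{\bj})|$ (via the AGM bound and the choice of $L$) to certify lightness when $\sum_i j_i \ge L$, and observe that $\sum_i j_i < L$ forces every $j_i < L$ so that the two-sided bucket bounds yield a ratio of at most $2^k$. The only cosmetic difference is that the paper states the light case as $\max \fp(\bu) \le 1/|\join(Q)|$ and leaves the final step $1/|\join(Q)| \le 1/|\join(Q_{\bj})|$ implicit, whereas you spell it out.
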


To efficiently sample from these sub-instances, we adopt the composite index strategy from Section~\ref{sec:batched-rejection}. We treat the sub-joins $\{Q_{\bj}\}_{\bj \in \llbracket L \rrbracket^k}$ as the disjoint sub-instances. We define the meta-probability $\fq(\bj) = 1-(1-2^{-\sum_{i\in[k]} j_i})^{|\join(Q_{\bj})|}$ and construct the meta-index for the instance $\langle \llbracket L \rrbracket^k, \fq \rangle$. This allows us to skip empty or non-selected sub-instances efficiently.
To put it formally:
\begin{itemize}[leftmargin=*]
    \item {\bf Preprocessing phase:} We partition input tuples by weights. For each sub-instance $Q_{\bj}$, we build a \DA oracle (the sub-index) and compute the join size (e.g., \cite{zhao2018random}). Additionally, we build the meta-index for $\fq$ as described in Section~\ref{sec:batched-rejection}.
    \item {\bf Query phase:} We invoke Algorithm~\ref{alg:ss-rejected-batch} using the constructed meta-index and sub-indexes. The algorithm first samples a set of indices $\mathbf{I}$ using the meta-index, and then retrieves tuples from the selected sub-instances $Q_{\bj}$ via their \DA oracles.
\end{itemize}

\smallskip
Notice that we partition each relation into $L+1$ sub-relations, which in turn defines $(L+1)^k$ sub-instances. We build \DA index~\cite{zhao2018random} for each sub-instance separately, so the cost is bounded by the total size of these sub-instances, which is $O(\In (L+1)^{k-1}) = O(\In \log^{k-1}\In)$. In answering a query, a \DA oracle can be supported in $O(\log \In)$ time, and each sub-instance is $2^k$-uniform, and all sub-instances form a partition of the original join result, so the cost of subset sampling over all sub-instances is $O(1+\mu_\Psi\log \In)$  following Lemma~\ref{lm:nss}, by setting $\beta = 2^k$. 
Putting everything together, we obtain our first result:
\begin{theorem}
\label{thm:logk}
Given an acyclic join instance $Q$ and a set of associated weight functions $\{\fp_j\}_{R_j \in Q}$, there is an index $\D$ that can be built within  $O(\In \log^{k-1}\In)$ space and time, using which each subset sampling query can be answered in $O(1+\mu_\Psi\log \In)$ expected time.
\end{theorem}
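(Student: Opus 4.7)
The plan is to execute the two-stage construction and sampling pipeline sketched in the preceding paragraphs, then carefully bound both the preprocessing cost and the query cost by exploiting the structure from Lemma~\ref{lem:uniform-light} together with the defining inequality $2^L \ge |\join(Q)|$.

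For \textbf{preprocessing}, I would first partition each $R_i$ into the buckets $R_i^{\langle 0\rangle},\ldots,R_i^{\langle L\rangle}$ in $O(|R_i|)$ time, which automatically induces the $(L+1)^k$ sub-joins $Q_\bj$. For each $\bj$ I would build the $O(\log N)$-time direct access index of~\cite{zhao2018random,carmeli2020answering} on $Q_\bj$, taking time and space linear in $\sum_i |R_i^{\langle j_i\rangle}|$ and yielding $|\join(Q_\bj)|$ as a byproduct. Summing over $\bj$ and swapping the order of summation, each tuple of $R_i$ is charged to exactly $(L+1)^{k-1}$ sub-joins, so the aggregate cost is $\sum_i |R_i|\cdot(L+1)^{k-1} = O(N\log^{k-1}N)$. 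Finally I would build the meta-index for $\Psi_{\text{meta}} = \langle \llbracket L\rrbracket^k, \fq\rangle$ with $\fq(\bj)=1-(1-2^{-\sum_i j_i})^{|\join(Q_\bj)|}$, using any linear-time subset sampling index; this takes $O((L+1)^k)=O(\log^k N)$ time and space, dominated by the sub-join indexes.

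For \textbf{queries}, I would invoke Algorithm~\ref{alg:ss-rejected-batch} on the collection $\{\Psi_\bj\}_\bj$ with $p_\bj^+ = 2^{-\sum_i j_i}$. Correctness is inherited from Lemma~\ref{lm:batched-index}, since the partition $\join(Q)=\bigsqcup_\bj \join(Q_\bj)$ matches the setup exactly. The only deviation from Lemma~\ref{lm:batched-index} is that each sub-index access now costs $O(\log N)$ rather than $O(1)$, so the expected query time becomes $O(1) + O(\log N)\cdot\Sigma$, where $\Sigma := \sum_\bj |\join(Q_\bj)|\cdot p_\bj^+$; the meta-index contribution $O(1+\mu_{\text{meta}})$ is absorbed since $\mu_{\text{meta}}\le\Sigma$.

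The main obstacle is to show $\Sigma = O(1+\mu_\Psi)$. I would split the sum according to Lemma~\ref{lem:uniform-light}. In the $2^k$-uniform regime $\sum_i j_i < L$ every coordinate $j_i < L$ lies in an actual geometric bucket, so every join result in $Q_\bj$ has weight at least $2^{-k}p_\bj^+$; hence $|\join(Q_\bj)|\cdot p_\bj^+ \le 2^k\mu_{\Psi_\bj}$, and summing yields $O(\mu_\Psi)$. In the light regime $\sum_i j_i \ge L$ the bottom bucket $R_i^{\langle L\rangle}$ destroys $2^k$-uniformity, so I would instead use $p_\bj^+ \le 2^{-L}$ combined with $2^L \ge |\join(Q)|$ to conclude
\[ \sum_{\bj:\sum_i j_i \ge L}|\join(Q_\bj)|\cdot p_\bj^+ \;\le\; 2^{-L}\sum_\bj |\join(Q_\bj)| \;=\; 2^{-L}|\join(Q)| \;\le\; 1. \]
Combining the two regimes gives $\Sigma = O(1+\mu_\Psi)$, and therefore the query runs in expected $O(1+\mu_\Psi \log N)$ time, as claimed.
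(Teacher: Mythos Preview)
Your proposal is correct and follows essentially the same approach as the paper: build the $(L+1)^k$ per-bucket \DA indexes plus the meta-index (bounding preprocessing by the $(L+1)^{k-1}$ charging argument), then invoke Algorithm~\ref{alg:ss-rejected-batch} and bound $\sum_\bj |\join(Q_\bj)|\,p_\bj^+$ via Lemma~\ref{lem:uniform-light}. If anything, your treatment of the query cost is slightly more explicit than the paper's---you separate the $2^k$-uniform and light regimes and bound the latter by $2^{-L}|\join(Q)|\le 1$, whereas the paper's proof gestures at Lemma~\ref{lm:batched-index} and the $\beta$-uniform clause without spelling out the light case.
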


%We briefly discuss the bottleneck in the preprocessing and query answering part of our first attempt.
%Next, we show an optimized index with the same query time, but using much smaller space and less preprocessing time.
Next, we improve upon this initial result by showing an optimized index with much smaller space and preprocessing time, while maintaining the same query time. 

\subsection{Optimized Index for Subset Sampling over Joins}
\label{sec:index-optimized}
The framework in Section~\ref{sec:first-algorithm} establishes feasibility but is not space-efficient because it treats every combination of weight buckets $\bj \in \llbracket L \rrbracket^k$ as a distinct sub-instance. We now present an optimized index that reduces space usage to $O(\In \log \In)$.

\smallskip\noindent{\bf Merging Sub-instances by Score.}
Recall from Lemma~\ref{lem:uniform-light} that the classification of a sub-instance $Q_{\bj}$ (as either light or near-uniform) depends solely on the sum of indices $\sum_{i=1}^k j_i$. This observation allows us to merge sub-instances. Instead of maintaining $(L+1)^k$ disjoint partitions, we group join results based on their total weight ``score.''

For any tuple $\bu$ in a relation $R_i$, let its \emph{score} be $\phi(\bu) = \lfloor - \log \fp_i(\bu) \rfloor$. For a join result $\bu \in \join(Q)$, its \emph{score} is the sum of component scores: $\bar{\phi}(\bu) = \sum_{i=1}^k \phi(\bu[\schema(R_i)])$.
We partition the join results $\join(Q)$ into disjoint buckets based on this score. Let $\mathcal{B}_\ell = \{\bu \in \join(Q) \mid \bar{\phi}(\bu)=\ell\}$ denote the set of join results with score $\ell$.
We treat these buckets as the disjoint sub-instances required by the batched framework in Section~\ref{sec:batched-rejection}. Specifically:
\begin{enumerate}[leftmargin=*]
    \item \textbf{Buckets $\ell < L$:} Each bucket $\mathcal{B}_\ell$ contains join results with probabilities in the range $(2^{-\ell-1}, 2^{-\ell}]$. Thus, $\mathcal{B}_\ell$ forms a $2$-uniform sub-instance with upper bound $p^+_\ell = 2^{-\ell}$.
    \item \textbf{Tail Bucket $\ell \ge L$:} We merge all join results with score $\ell \ge L$ into a single tail bucket $\mathcal{B}_{\ge L} = \bigsqcup_{\ell \ge L} \mathcal{B}_\ell$. Since any $\bu \in \mathcal{B}_{\ge L}$ has $\fp(\bu) \le 2^{-L} \le 1/|\join(Q)|$, this combined bucket is a \emph{light} sub-instance (Lemma~\ref{lm:lpss}) with upper bound $p^+_{\ge L} = 2^{-L}$.
\end{enumerate}

\noindent{\bf Algorithm Overview.}
We apply Algorithm~\ref{alg:ss-rejected-batch} to these $L+1$ sub-instances.
\vspace{-0.2em}
\begin{itemize}[leftmargin=*]
    \item {\bf Preprocessing phase:} We compute the size $|\mathcal{B}_\ell|$ for each $\ell < L$ to build the meta-index. For the tail bucket $\mathcal{B}_{\ge L}$, we do not maintain an exact count or index; we simply assign it a weight proxy of $\In^{\rho^*}$ (an upper bound on join size) for the meta-index, or handle it via a fallback mechanism since its selection probability is negligible.
    
    \item {\bf Query phase:} We invoke Algorithm~\ref{alg:ss-rejected-batch} with inputs $\{\langle \mathcal{B}_\ell, \fp \rangle \mid \ell < L\}$ and the tail bucket. 
    \begin{itemize}[leftmargin=*]
        \item If the meta-sampling selects a bucket $\mathcal{B}_\ell$ with $\ell < L$, we use a specialized \DA oracle (described below) to retrieve a given index in $\mathcal{B}_\ell$ in $O(\log \In)$ time.
        \item If the tail bucket $\mathcal{B}_{\ge L}$ is selected (which happens with very low probability), we materialize the necessary results from $\mathcal{B}_{\ge L}$ on the fly to support the retrieval.
    \end{itemize}
\end{itemize} 
The 
%core 
challenge is to efficiently support \DA for the buckets $\mathcal{B}_\ell$ ($\ell
< L
% \in \llbracket L-1\rrbracket
$) without materializing them. We next describe how this can be achieved.

\smallskip
\noindent {\bf Join Tree with Notations.} As mentioned, any acyclic join $Q$ has a join tree $\T$ such that (1) there is a one-to-one correspondence between relations in $Q$ and nodes in $\T$, and (2) for each attribute $x$, the set of nodes containing $x$ forms a connected subtree. With a slight abuse of notation, we also use $R_i$ to denote the corresponding node in $\T$ for relation $R_i \in Q$. For simplicity, we assume the nodes in $\T$ are ordered by the in-order traversal.
For node $R_i$, let $\mathcal{C}_i$ be the child nodes of $R_i$ if $R_i$ is not a leaf. For a node $R_j \in \C_i$, let $R_{\textsf{next}(j)}$ be the immediately next child node after $R_j$ in $\C_i$. For the largest child node $R_{j^*}$, we set $\textsf{next}(j^*)=\mathsf{null}$. For node $R_i$, we use $\textrm{parent}(i)$ to denote the (unique) parent node of $R_i$ if $R_i$ is not the root. Let $\key(i) = \schema(R_i) \cap \schema(R_{\textrm{parent}(i)})$ be the (common) join attributes between relation $R_i$ and its parent if $R_i$ is not the root. For the root node $r$, $\key(r) = \emptyset$. 
For a node $R_i$, let $\T_i$ be the subtree rooted at node $R_i$. For an arbitrary child $R_j \in \C_i$, we define $\T_i^j$ to be the subtree of $\T_i$ that excludes any subtrees rooted at the child nodes 
% of 
in
$\C_i$ coming before $R_{j}$, i.e., $\displaystyle{\T^j_i = \T_i \setminus \cup_{R_{j'} \in \C_i: R_{j'} \prec R_j} \T_{j'}}$.
For completeness, we also define $\T^\emptyset_i = \T_i$.

\smallskip
\noindent {\bf Data Structure.}
We first remove all dangling tuples from the database instance.
For each relation $R_i \in Q$, we store input tuples in a hash table, so that for any subset of attributes $U \subseteq \schema(R_i)$ and a tuple $t \in \dom(U)$, we can get the list of tuples $R_i \ltimes t$ in $O(1)$ time. We take an arbitrary join tree $\T$ for $Q$ and store the following statistics to augment $\T$.

\smallskip
\noindent {\bf $\mathbf{W}$-values.}
Consider any internal node $R_i \in Q$.
For each tuple $\bu \in R_i$, we store $L$ counters $W^j_{i,\bu}(\ell)$ with $\ell \in \llbracket L-1 \rrbracket$ specifically for each child node $R_j \in \C_i$, where $W^j_{i,\bu}(\ell)$ stores the number of join results produced by relations in the subtree $\T^j_i$ that is also participated by $\bu$, with score $\ell$, i.e., 
    \begin{equation}
    \vspace{-0.5em}
        \label{eq:W}
        W^j_{i,\bu}(\ell) = \left|\left\{\bm{t} \in (\Join_{R_h \in \T^j_i} R_h)\ltimes \bu: \ \sum_{R_h \in \T^j_i} \phi(\bm{t}[\schema(R_h)]) = \ell\right\}\right|.
    \end{equation}
    Furthermore, we define and store $W^\emptyset_{i,\bu}(\ell)$, the number of join results produced by relations in the subtree $\T_i$ that is also participated by $\bu$, with score $\ell$, by replacing $\T^j_i$ in (\ref{eq:W}) with $\T^\emptyset_i$. For completeness, for the largest node $R_{j^*} \in \C_i$, we also define $W^{\textrm{next}(j^*)}_{i,\bu}(0) = 1$ and $W^{\textrm{next}(j^*)}_{i,\bu}(\ell) = 0$ for every $\ell \in [L]$.
    Moreover, for each score $\ell \in \llbracket L-1\rrbracket$, and each child node $R_j \in \C_i$, we build a prefix-sum array on $\langle W^j_{i, \bu}(\ell): \bu \in R_i\rangle$ under the pre-determined ordering of tuples in $R_i$. For each leaf node $R_i \in Q$, we only store the counter $W^\emptyset_{i,\bu}(\ell)$ for each tuple $\bu \in R_i$.

\smallskip
\noindent {\bf $\mathbf{M}$-values.} For a non-root node $R_i$, we compute the projection of $R_i$ onto $\key(i)$, as $R_i[\key(i)]$. For each tuple $\bm{v} \in R_i[\key(i)]$, we also compute $M_{i,\bv}(0), M_{i,\bv}(1), \ldots, M_{i,\bv}(L-1)$: for $\ell \in \llbracket L-1 \rrbracket$, $M_{i,\bv}(\ell)$ stores the sum of counters $W^\emptyset_{i,\bu}(\ell)$ over all tuples $\bm{u} \in R_i \ltimes \bv$, i.e.,
    \begin{equation}
    \label{eq:M}
    M_{i,\bv}(\ell) = \sum_{\bu \in R_i \ltimes \bv} W^\emptyset_{i,\bu}(\ell).
    \end{equation}

   We briefly describe how to preprocess the data structures (see Appendix~\ref{appendix:static-index} for complete details).
   We compute $W$/$M$ values in a bottom-up way. $M$-values will be computed according to (\ref{eq:M}) straightforwardly once $W$-values are well computed. Consider a leaf node $R_i$, we have $W^\emptyset_{i,\bu}(\phi(\bu))=1$ and $W^\emptyset_{i,\bu}(j)=0$ for any $j \in \llbracket L-1 \rrbracket\setminus\{\phi(\bu)\}$. Consider an internal node $R_i$. Suppose $W$-values and $M$-values are well defined for each child $R_j \in \mathcal{C}_i$. We compute $W^{j}_{i, \bu}$ in a decreasing ordering of nodes in $\C_i$ and follow a recursive structure. If $\ell < \phi(\bu)$, $W^{j}_{i, \bu}(\ell) = 0$, and otherwise, 
    \begin{align}
    \label{W-recursive}
        W^j_{i,\bu}(\ell) = \sum_{\substack{(\ell_1,\ell_2) \in \llbracket \ell - \phi(\bu) \rrbracket: \ell_1 + \ell_2 = \ell - \phi(\bu)}} M_{j,\bu[\key(j)]}(\ell_1)\cdot W^{\textsf{next}(j)}_{i, \bu}(\ell_2).
    \end{align}
    Note that $W^\emptyset_{i,\bu}(\ell)$ can be computed similarly by resorting to the first child node. Importantly, the computation of (\ref{W-recursive}) can be sped up by Fast Fourier Transform~\cite{cooley1965algorithm, cormen2022introduction}, and the whole computation only takes $O(\In L\log L)$ time. Lastly, the bucket size $|\B_\ell|$ for each score $\ell \in \llbracket L-1\rrbracket$ can be rewritten as $|\B_\ell| = \sum_{\bu \in R_r} W^\emptyset_{r,\bu}(\ell)$ for the root node $r$. A pseudo-code for this stage is shown in Algorithm~\ref{alg:preprocess}.

    \begin{algorithm}[t]
    \caption{\textsc{\bf RecursiveAccess}$(i, j, \bv, \ell, \tau)$}
    \label{alg:fast-recursive}
    \KwIn{$i,j\in [k]$ such that $R_j \in \C_i$, tuple $\bv$, score $\ell \in \llbracket L-1 \rrbracket$, integer $\tau \in \mathbb{Z}^+$.}
    \KwOut{The $\tau$-th tuple in the join result of $\T^j_i$, with score $\ell$, that can be joined with $\bv$.}
    $\bu \gets$ the smallest tuple in $R_i \ltimes \bv$ such that $\sum_{\bu' \in R_i \ltimes \bv: \bu'\preceq\bu} W^j_{i,\bu'}(\ell) \ge \tau$ \label{locate-u}\;
    \lIf{$R_i$ is a leaf node}{\Return $\bu$\label{base-case}}
    $\tau \gets \tau - \sum_{\bu' \in R_i \ltimes \bv: \bu' \prec \bu} W^j_{i,\bu'}(\ell)$ \label{correct-index-u}\; 
    $\Phi \gets \{(\ell_1, \ell_2) \in \llbracket \ell \rrbracket^2: \ell_1 + \ell_2 =\ell - \phi(\bu)\}$ in lexicographical ordering by $(\ell_1, \ell_2)$ \label{phi}\;
    $(\ell_1, \ell_2) \gets$ the smallest pair in $\Phi$ that $\displaystyle{\sum_{(\ell'_1,\ell'_2) \in \Phi: (\ell'_1,\ell'_2) \preceq (\ell_1, \ell_2)} M_{j, \bu[\key(j)]}(\ell_1') \cdot W^{\textsf{next}(j)}_{i, \bu}(\ell_2') \ge \tau}$ \label{locate-score-pairs}\;
    $\tau \gets \tau - \displaystyle{\sum_{(\ell'_1,\ell'_2) \in \Phi: (\ell'_1,\ell'_2) \prec (\ell_1, \ell_2)} M_{j, \bu[\key(j)]}(\ell'_1) \cdot W^{\textsf{next}(j)}_{i, \bu}(\ell'_2)}$ \label{correct-index-score-pairs}\;
    $\tau_1 \gets \left\lceil \dfrac{\tau}{W^{\textsf{next}(j)}_{i, \bu}(\ell_2)} \right\rceil$ \label{left-index}, $\tau_2 \gets ((\tau-1) \bmod W_{i,\bu}^{\textrm{next}(j)}(\ell_2)) + 1$ \label{right-index}\;
    $\bu_1 \gets \textsc{\bf RecursiveAccess}(j, \emptyset, \bu, \ell_1, \tau_1)$\label{call-one}\;
    $\bu_2 \gets \textsc{\bf RecursiveAccess}(i, \textrm{next}(j), \bu, \ell_2, \tau_2)$\label{call-two}\;
    \Return $\bu_1 \Join \bu \Join \bu_2$\;
\end{algorithm}

%\smallskip
% \noindent \textbf{Algorithm for Supporting \textsf{DirectAccess}.}
% Recall that our goal is for any bucket $\mathcal{B}_\ell$ (that contains all join results of $\join(Q)$ with score $\ell$), given any integer $\tau \in [|\mathcal{B}_\ell|]$, to return the $\tau$-th tuple in bucket $\mathcal{B}_\ell$. Again $\ell \in \llbracket L-1\rrbracket$. Recall that the relations inherit the order of their corresponding nodes in $\T$, and within each relation $R_i\in Q$ the tuples are totally ordered; hence, for every score $\ell$, the join result in $\mathcal{B}_\ell$ is endowed with a fixed, lexicographic ordering.
\noindent \textbf{Algorithm for \textsf{DirectAccess}.}
Our goal is to retrieve the $\tau$-th tuple from bucket $\mathcal{B}_\ell$ given score $\ell \in \llbracket L-1\rrbracket$ and rank $\tau \in [|\mathcal{B}_\ell|]$. To ensure the ``$\tau$-th'' tuple is well-defined, we rely on the canonical lexicographic ordering of join results. This ordering is naturally induced by the fixed traversal order of relations in the join tree $\T$ and the total ordering of tuples within each relation $R_i$.

\smallskip

We present a general procedure \textsc{\bf RecursiveAccess}$(i, j, \bv, \ell, \tau)$, that returns the $\tau$-th tuple in the join results of the relations in the subtree $\T_i^j$, with score $\ell$, that can be joined with $\bv$ on its support attributes. The original problem can be solved by invoking {\bf RecursiveAccess}$(r, \emptyset, \emptyset, \ell, \tau)$.

Algorithm~\ref{alg:fast-recursive} begins by identifying the specific tuple $\bu \in R_i$ that participates in the target join result. Line~\ref{locate-u} performs a binary search on the prefix-sum stored over $W^j_{i,*}(\ell)$ to locate the first tuple $\bu$ where the cumulative count of join results in the subtree $\T^i_j$ (with score $\ell$) covers the rank threshold $\tau$. Line~\ref{correct-index-u} then subtracts the contribution of tuples preceding $\bu$ from $\tau$. This reduces the problem to retrieving the (updated) $\tau$-th join result within the subtree $\T^i_j$ specifically involving $\bu$. We then distinguish two cases. If $R_i$ is a leaf node, we simply return $\bu$. In the general case, we decompose the retrieval into $\T_j$ and $\T^{\textrm{next}(j)}_i$ based on the fact that $\T^{j}_i = \T_j \bigsqcup \T^{\textrm{next}(j)}_i$. The objective is to determine the local indexes to query within these sub-problems. Recall that $W^{j}_{i,\bu}(\ell)$ aggregates disjoint contributions from score pairs $(\ell_1, \ell_2)$ (Equation~\ref{W-recursive}). The algorithm iterates through these pairs in lexicographical order (Line~\ref{phi}) to find the specific pair $(\ell_1, \ell_2)$ that covers the current rank threshold $\tau$ (Line~\ref{locate-score-pairs}). After further adjusting $\tau$ to account for preceding pairs (Line~\ref{correct-index-score-pairs}), we effectively map the remaining rank to indexes $\tau_1$ and $\tau_2$ within the Cartesian product of valid results from $\T_j$ and $\T^{\textrm{next}(j)}_i$ (Line~\ref{left-index}). Finally, we invoke the procedure recursively on $\T_j$ and $\T^{\textrm{next}(j)}_i$ and combine the returned results.

We provide the complete analysis in Appendix~\ref{appendix:static-index}. At last, we obtain:
\begin{theorem}
\label{thm:best}
Given an acyclic join instance $Q$ of size $\In$, and a set of associated weight functions $\{\fp_j\}_{R_j \in Q}$, there exists an index $\D$ of size $O(\In \log\In)$ that can be built in  $O(\In \log \In \log\log \In)$ time, such that each subset sampling query can be answered in $O(1+\mu_\Psi\log \In)$ expected time.
\end{theorem}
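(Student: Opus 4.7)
The plan is to verify four pieces---correctness and the $O(\log \In)$ cost of \textsc{RecursiveAccess}, the $O(\In\log\In)$ space of the augmented join tree, the $O(\In\log\In\log\log\In)$ preprocessing time, and finally the expected query bound obtained by combining these with the batched framework of Lemma~\ref{lm:batched-index}.

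First I would argue correctness and runtime of \textsc{RecursiveAccess} by induction on the height of the subtree $\T_i^j$. The prefix-sum array on $\langle W^j_{i,\bu}(\ell)\rangle_{\bu\in R_i\ltimes\bv}$ lets the binary search in Line~\ref{locate-u} find in $O(\log \In)$ time the unique $\bu\in R_i\ltimes\bv$ hosting the $\tau$-th join result of score $\ell$, and the rank adjustment in Line~\ref{correct-index-u} reduces the task to retrieving the updated $\tau$-th result within $\T_i^j$ restricted to $\bu$. The recursion (\ref{W-recursive}) expresses this count as a sum of products over score pairs $(\ell_1,\ell_2)$, so Lines~\ref{phi}--\ref{correct-index-score-pairs} scan at most $L+1=O(\log \In)$ pairs to identify the one covering $\tau$, and Line~\ref{left-index} performs standard base-arithmetic on the Cartesian product of the two sub-problems to split $\tau$ into $\tau_1,\tau_2$ passed to the recursive calls on $\T_j$ and $\T_i^{\textsf{next}(j)}$. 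Since $\T$ has $k=O(1)$ nodes, the recursion unfolds in $O(1)$ total calls, each doing $O(\log \In)$ work, so each direct access costs $O(\log \In)$.

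Next I would bound the space: every tuple $\bu$ in an internal node $R_i$ stores $O(|\C_i|)=O(1)$ vectors of $W$-values of length $L+1$ together with their prefix sums, and every projected tuple in $R_i[\key(i)]$ stores one $M$-array of length $L$; summing over the $k=O(1)$ nodes of $\T$ yields $O(\In L)=O(\In\log \In)$, and the meta-index of Lemma~\ref{lm:batched-index} adds only $O(L)$. For preprocessing, I would compute $W$- and $M$-values bottom-up; at each internal node and each $\bu\in R_i$, Equation~(\ref{W-recursive}) is precisely the convolution of two length-$L$ vectors, which FFT evaluates in $O(L\log L)=O(\log \In\log\log \In)$ time. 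Charging one such convolution to each tuple per child, the total FFT cost is $O(\In\log \In\log\log \In)$; computing $M$-values and prefix sums is subsumed in $O(\In L)$.

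Finally, by Lemma~\ref{lem:uniform-light} each bucket $\B_\ell$ with $\ell<L$ is $2$-uniform with upper bound $p^+_\ell=2^{-\ell}$, and the tail bucket $\B_{\ge L}$ is light with $p^+_{\ge L}=2^{-L}\le 1/|\join(Q)|$. Invoking Lemma~\ref{lm:batched-index} with the $O(\log \In)$-time direct-access oracle in place of the $O(1)$ one yields expected query time $O\bigl(1+\log \In\cdot\sum_\ell |\B_\ell|\,p^+_\ell\bigr)$. Since $p^+_\ell\le 2\fp(\bu)$ for every $\bu\in\B_\ell$ when $\ell<L$, the finite-score contribution telescopes to $O(\mu_\Psi)$; the tail contributes $|\join(Q)|\cdot 2^{-L}=O(1)$ because the AGM bound gives $|\join(Q)|=O(\In^{\rho^*})$ and $L=\lceil 2\rho^*\log \In\rceil$. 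The part that requires the most care is handling the tail bucket $\B_{\ge L}$ without ever materializing it: one must confirm that the meta-index picks it with probability small enough that even a worst-case-optimal fallback to materialize the few sampled results is absorbed into the $O(1)$ additive term, and that the FFT accounting across the constant-size tree survives the bottom-up scheduling.
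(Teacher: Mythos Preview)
Your proposal is correct and follows essentially the same route as the paper: the same score-bucket decomposition, the same FFT-based bottom-up computation of the $W$/$M$ arrays giving the $O(\In\log\In\log\log\In)$ preprocessing and $O(\In\log\In)$ space, the same $O(\log\In)$ analysis of \textsc{RecursiveAccess} (binary search on prefix sums plus a linear scan over $O(L)$ score pairs, times $O(k^2)=O(1)$ recursive frames), and the same application of the batched rejection framework for the query bound. Your handling of the tail bucket also matches the paper's: it uses the proxy size $\In^{\rho^*}$ in the meta-index, and when the tail is selected it materializes $\B_{\ge L}$ via a worst-case-optimal join in $O(\In^{\rho^*})$ time, which against selection probability $\le \In^{-\rho^*}$ yields $O(1)$ expected cost---exactly the calculation you sketched. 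One cosmetic difference: the paper proves correctness of \textsc{RecursiveAccess} not by your direct induction that it returns the $\tau$-th tuple, but by showing the map $(\ell,\tau)\mapsto\text{output}$ is injective (distinct inputs yield distinct join results); combined with the count $|\B_\ell|$ this is equivalent, and your inductive argument is arguably more transparent. Also, a small wording slip: the buckets $\B_\ell$ are $2^k$-uniform rather than $2$-uniform (the product of $k$ half-open dyadic intervals has ratio $2^k$), but since $k=O(1)$ this is immaterial and the paper itself writes ``$2$-uniform'' in the same loose sense.
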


\section{One-shot Subset Sampling over Joins}
\label{sec:oneshot}
\renewcommand{\P}{\mathcal{P}}

We observe that the data structure from Theorem~\ref{thm:best} provides a baseline solution for the one-shot problem. By constructing the index and issuing a single query, we can generate a subset sample in $O(\In \log \In \log\log\In + \mu_\Psi \log \In)$ expected time. In this section, we present a specialized algorithm for the one-shot setting that runs in $O(\In \log^2\In + \mu_\Psi)$ expected time. This approach improves upon the baseline solution when the expected sample size is large (i.e., $\mu_\Psi \gg \In$), as it eliminates the logarithmic factor associated with $\mu_\Psi$. All omitted proofs and details are provided in Appendix~\ref{appendix:oneshot}.

\smallskip
\noindent{\bf High-level Idea.} We first keep all statistics computed in the preprocessing phase as in Section~\ref{sec:index-optimized}. We conceptually partition the join results into buckets. Recall that the high-level idea of answering one subset sampling query is to invoke Algorithm~\ref{alg:ss-rejected-batch} on instances $\{\langle \mathcal{B}_\ell, \fp \rangle: \ell \in \llbracket L-1\rrbracket\}$ with $p^+_{\ell} = 2^{-\ell}$, and $\langle \bigsqcup_{\ell\ge L} \mathcal{B}_\ell, p \rangle$ with $p^+_{L} = 2^{-L}$. Whenever a join result needs to be retrieved from  $\mathcal{B}_\ell$, if $\ell < L$, we use the \DA oracle to support the retrieval, and otherwise, we compute all the results in $\bigsqcup_{\ell\ge L} \mathcal{B}_\ell$ to support the retrieval. In the first case, whenever the \DA oracle on bucket $\mathcal{B}_\ell$ with $\ell < L$ is invoked, it pays $O(\log \In)$ time for each retrieval. To overcome this barrier in the query time, we need to compute additional statistics. 
%Note that the second case is handled exactly the same as before. 
Careful inspection of Algorithm~\ref{alg:fast-recursive} reveals that the $O(\log \In)$ cost comes from the binary search in Line \ref{locate-u} and the exhaustive search in Line \ref{locate-score-pairs}. A natural idea is to precompute all these statistics. Consider an arbitrary relation $R_i \in Q$ with an arbitrary child node $R_j \in \C_i$. For every score $\ell \in \llbracket L-1 \rrbracket$, and every tuple $\bv \in R_i[\key(i)]$, we compute an array $X_{i,j,\bv,\ell}$ of size $|R_i \ltimes \bv|$, where for every tuple $\bu \in R_i \ltimes \bv$, $$X_{i,j, \bv, \ell}(\bu) = \sum_{\bu' \in R_i \ltimes \bv,\, \bu' \preceq \bu} W^j_{i,\bu'}(\ell).$$ 
Next, for every score $\ell \in \llbracket L-1 \rrbracket$ and every tuple $\bu \in R_i$, we compute an array $Y_{i,j,\bu,\ell}$ of size at most $L$, where for every pair $(\ell_1,\ell_2)\in \llbracket \ell\rrbracket^2$ with $\ell_1 + \ell_2 = \ell - \phi(\bu)$, in lexicographically sorted order, $$Y_{i,j,\bu, \ell}(\ell_1,\ell_2) = \sum_{(\ell_1',\ell_2')\in \llbracket \ell \rrbracket: (\ell_1',\ell_2') \preceq (\ell_1,\ell_2), \ell_1' + \ell_2' = \ell - \phi(\bu)} M_{j,\bu[\key(j)]}(\ell_1') \cdot W_{i,\bu}^{\textrm{next}(j)}(\ell_2').$$ 
We note that by constructing $X$ and $Y$ arrays, we do not define a new index for subset sampling over joins. Instead, these statistics
%will be used to apply one-shot subset sampling over $\join(Q)$.
are computed when answering one-shot subset sampling query.

Each invocation of Algorithm~\ref{alg:fast-recursive} is identified by a quintuple $(i,j,\bv,\ell,\tau)$ as its input. 
The high-level idea is to maintain the list of quintuples $(i,j,\bv,\ell,\tau)$ on which we will invoke Algorithm~\ref{alg:fast-recursive}, and run multiple invocations of Algorithm~\ref{alg:fast-recursive} together by merging certain operations. 
%A quintuple $(i',j',*,*,*)$ is dominated by another one $(i,j,*,*,*)$, if $R_i$ is an ancestor of $R_{i'}$ in $\T$, or if $R_j \prec R_{j'}$ when $i=i'$. In other words, we consider handling the quintuples in a lexicographically increasing order of $(i,j)$. 
We group these quintuples by $(i,j,\bv,\ell)$, where quintuples inside each group only have different $\tau$-values sorted by $\tau$ in increasing order. We maintain the set of quintuples $\calP_{i,j}$ we should satisfy in each call of   Algorithm~\ref{alg:fast-recursive}. For a node $R_i$ and $R_j\in\C_i\cup\{\emptyset\}$, $\calP_{i,j}$ is defined as the parameter set of all invocations of Algorithm~\ref{alg:fast-recursive} on $(i,j)$.
%\jc{It would be better for the readers' understanding if the subscript $i,j$ of $\calP_{i,j}$ is explained here (I saw it in the Algorithm 7:... paragraph below). As there is a sentence ``We will group these quintuples by $(i,j,\bv,\ell)$'' before this one.}
Initially, for each request $\DA(S_\ell,\tau)$ issued in
%{\bf ss-rejected-batch}
Algorithm~\ref{alg:ss-rejected-batch},  
%indicating that the $\tau$-th join result in the bucket $\mathcal{B}_\ell$ is sampled, 
we add a quintuple $(r,\emptyset, \emptyset, \ell, \tau)$ to $\calP_{r,\emptyset}$, where $R_r$ is the root node. 
Below, we describe a general procedure {\bf BatchRecursiveAccess}$(i, j, \P_{i,j})$. %Initially we call \textsc{\bf BatchRecursiveAccess}$(r, \emptyset, \P_{r,\emptyset})$. 
The new algorithm executes the same steps as in Algorithm~\ref{alg:fast-recursive}, but it computes all smallest tuples $\bu$ and pairs $(\ell_1,\ell_2)$ for all quintuples simultaneously. 

\smallskip
\noindent {\bf {\bf BatchRecursiveAccess}$(i, j, \P_{i,j})$:} This procedure takes as input a node $R_i \in Q$ with a child node $R_j \in \C_i$, and the set of quintuples $\P_{i,j}$.
%, which is the parameter set of all invocations of Algorithm~\ref{alg:fast-recursive} on $(i,j)$. 
The output will be exactly the collection of tuples returned by Algorithm~\ref{alg:fast-recursive} for each invocation parameterized by quintuple $(i,j,\bv,\ell,\tau) \in \P_{i,j}$.
The pseudocode is shown in Algorithm~\ref{alg:one-shotv2}.

We first sort $\P_{i,j}$ based on $\tau$ using the Radix sort algorithm~\cite{cormen2022introduction}.
Since $\tau$ is polynomially bounded in $\In$, using radix sort saves a factor of $\log \In$ in the sorting time.
Then we group $\P_{i,j}$ by the different $(i,j,\bv, \ell)$ creating the groups $\P_{i,j,\bv,\ell}$. We notice that all quintuples in each $\P_{i,j,\bv,\ell}$ are sorted by $\tau$ using the sorted ordering of $\P_{i,j}$. 
The goal is to compute the smallest tuple $\bu$ for each quintuple, as in Line~\ref{locate-u} of Algorithm~\ref{alg:fast-recursive}. %\jc{problematic pointer, line and algorithm mismatch} 
For every score $\ell\in \llbracket L-1\rrbracket$, we consider every quintuple $(i,j,\bv,\ell,\tau)\in \P_{i,j,\bv,\ell}$ following the sorted order. We traverse table $X_{i,j,\bv,\ell}$ until we get the first tuple $\bu$ such that $X_{i,j,\bv,\ell}(\bu)\geq \tau$. We also update the quintuple $(i,j,\bv,\ell,\tau)$ to $(i,j,\bv,\ell,\tau-X_{i,j,\bv,\ell}(\mathsf{prev}(\bu)))$ as in Line~\ref{correct-index-u} of Algorithm~\ref{alg:fast-recursive}. 
If $R_i$ is a leaf node we return all smallest tuples $\bu$ we computed from every quintuple in $\P_{i,j}$. For later reference, we denote the returned tuple $\bu$ for the quintuple $(i,j,\bv,\ell,\tau)$ as $\vec{\bu}(i,j,\bv,\ell,\tau)$.

Then we proceed in a similar manner to compute the pairs $(\ell_1,\ell_2)$.
We sort (the updated) $\P_{i,j}$ by $\tau$ using the Radix sort algorithm. Then we group $\P_{i,j}$ by the different $(i,j,\bv, \ell)$ creating the groups $\P_{i,j,\bv,\ell}$.
The goal is to compute the smallest pair $(\ell_1,\ell_2)$ for each quintuple, as in Line~\ref{locate-score-pairs} of Algorithm~\ref{alg:fast-recursive}. For every score $\ell\in \llbracket L-1\rrbracket$ we consider every quintuple $(i,j,\bv,\ell,\tau)\in \P_{i,j,\bv,\ell}$ following the sorted order. We traverse table $Y_{i,j,\bv,\ell}$ until we get the first pair $(\ell_1,\ell_2)$ such that $Y_{i,j,\bv,\ell}(\ell_1,\ell_2)\geq \tau$.
We compute $\tau_1, \tau_2$ as in Line~\ref{left-index} of Algorithm~\ref{alg:fast-recursive}. Furthermore, we add the quintuple $(j,\emptyset,\bu,\ell_1,\tau_1)$ to the (initially empty) set $\P_{j,\emptyset}$
and the quintuple $(i,\mathsf{next}(j),\bu,\ell_2,\tau_2)$ in (initially empty) set $\P_{i,\mathsf{next}(j)}$.
Notice that these sets are used as input to the next calls $\textsc{\bf BatchRecursiveAccess}(j, \emptyset, \P_{j,\emptyset})$ and $\textsc{\bf BatchRecursiveAccess}(i, \mathsf{next}(j), \P_{i,\mathsf{next}(j)})$ as in Lines~\ref{call-one} and~\ref{call-two} of Algorithm~\ref{alg:fast-recursive}.

Now, we will generate tuples to be returned for quintuples in $\P_{i,j}$. Suppose the sets of tuples returned by the two recursive invocations are available.
For every quintuple $(i,j,\bv,\ell,\tau)\in \P_{i,j}$, let $\vec{\bu}(j,\emptyset,\bu,\ell_1,\tau_1)$ be the tuple returned for the quintuple $(j,\emptyset,\bu,\ell_1,\tau_1)\in \P_{j,\emptyset}$, and $\vec{\bu}(i,\textrm{next}(j),\bu,\ell_2,\tau_2)$ be the tuple returned for the quintuple $(i,\textrm{next}(j),\bu,\ell_1,\tau_1)\in \P_{i,\textrm{next}(j)}$,
We simply join tuple $\vec{\bu}(j,\emptyset,\bu,\ell_1,\tau_1)$, and tuple $\vec{\bu}(i,\textrm{next}(j),\bu,\ell_2,\tau_2)$ together with tuple $\bu$ to form the tuple $\vec{\bu}(i,j,\bu,\ell,\tau)$. Finally, we will return all tuples for each quintuple $(i,j,\bv,\ell,\tau)\in \P_{i,j}$.  

\smallskip
\smallskip 

%Due to the space constraints, we provide the complete analysis in Appendix~\ref{appendix:oneshot}. At last, we obtain:
\vspace{-0.5em}
\begin{theorem}
\label{thm:bestOneShot}
For an acyclic join instance $Q$ of size $\In$, and a set of associated weight functions $\{\fp_j\}_{R_j \in Q}$, there exists an algorithm that returns one subset sample of the instance $\Psi=\langle\join(Q),\fp\rangle$ in $O\left(\min\left\{ \In \log \In \log \log \In + \mu_\Psi \log \In, \: \In \log^2 \In + \mu_\Psi \right\}\right)$ expected time.
\end{theorem}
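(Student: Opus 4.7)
The plan is to prove correctness and running time of the \textsc{\bf BatchRecursiveAccess} procedure (yielding the bound $O(\In \log^2 \In + \mu_\Psi)$), and then combine it with the indexed solution of Theorem~\ref{thm:best} (which answers a single query in $O(\In \log \In \log\log \In + \mu_\Psi \log \In)$) to obtain the claimed minimum.

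For correctness, the key observation is that $X_{i,j,\bv,\ell}$ stores exactly the prefix-sum array that the binary search on Line~\ref{locate-u} of Algorithm~\ref{alg:fast-recursive} implicitly consults, while $Y_{i,j,\bu,\ell}$ stores exactly the cumulative weighted products that the exhaustive search on Line~\ref{locate-score-pairs} scans. Hence for every quintuple $(i,j,\bv,\ell,\tau)\in \P_{i,j}$, the tuple $\bu$ and score pair $(\ell_1,\ell_2)$ selected by the batched procedure coincide with those Algorithm~\ref{alg:fast-recursive} would return, and the quintuples pushed recursively into $\P_{j,\emptyset}$ and $\P_{i,\textrm{next}(j)}$ are the same as in the single-query recursion. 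A short induction on join-tree depth then lifts this to equality of the final join tuples returned for each quintuple, establishing that the output distribution matches that of invoking Algorithm~\ref{alg:fast-recursive} independently inside Algorithm~\ref{alg:ss-rejected-batch}.

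Next I would bound the cost of materializing $X$ and $Y$. Since $|X_{i,j,\bv,\ell}| = |R_i \ltimes \bv|$, summing over $\bv \in R_i[\key(i)]$ contributes $|R_i|$, and summing over $\ell \in \llbracket L-1 \rrbracket$ and the constantly-many children per node yields $O(\In \log \In)$ total size and construction time. For $Y_{i,j,\bu,\ell}$, the relevant pairs $(\ell_1,\ell_2)$ number at most $\ell+1 \le L$; summing gives $O(L^2)$ entries per tuple $\bu$, for a global bound of $O(\In \log^2 \In)$, which dominates the $O(\In \log \In \log\log \In)$ cost of the $W$- and $M$-values inherited from Section~\ref{sec:index-optimized}.

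The main obstacle is the query-side analysis: I must amortize the linear scan of the $X$- and $Y$-tables so as not to pay $\log \In$ per quintuple. Within any group of $\P_{i,j}$ sharing the same table index, after radix-sorting by $\tau$---linear time because $\tau \le \In^{\rho^*}$ is polynomially bounded---the positions required in the relevant $X$ or $Y$ table are monotone in $\tau$, so a single left-to-right sweep services every quintuple in the group. Summing sweep lengths over all groups is at most the total table sizes, i.e., $O(\In \log \In)$ for $X$ and $O(\In \log^2 \In)$ for $Y$, plus $O(|\P_{i,j}|)$ for per-quintuple bookkeeping. Because each invocation of \textsc{\bf BatchRecursiveAccess} spawns at most two child invocations and the join tree has constant depth in data complexity, the total quintuples processed across all levels is $O(m_0)$, where $m_0$ is the number of \textsf{DirectAccess} requests initially issued by Algorithm~\ref{alg:ss-rejected-batch} and has expectation $O(1+\mu_\Psi)$ by Lemma~\ref{lm:batched-index}. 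Summing preprocessing and recursion therefore gives $O(\In \log^2 \In + \mu_\Psi)$ in expectation. Finally, since $\mu_\Psi$ can be computed exactly in $O(\In)$ time on an acyclic join as a \textsf{SUM}-aggregate over $\join(Q)$, we can dispatch at preprocessing time to whichever of the two strategies is asymptotically cheaper, delivering the claimed minimum.
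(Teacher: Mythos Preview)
Your proposal is correct and mirrors the paper's own argument essentially step for step: the same $X$/$Y$ precomputation with the same $O(\In\log\In)$ and $O(\In\log^2\In)$ size bounds, the same radix-sort-then-monotone-sweep amortization over the tables, and the same reduction of total quintuple count to $O(1+\mu_\Psi)$ via Lemma~\ref{lm:batched-index} and the constant number of $(i,j)$ pairs. The only point you add beyond the paper is the explicit dispatch step (computing $\mu_\Psi$ as a sum-product aggregate in $O(\In)$ time to pick the cheaper of the two strategies); the paper simply states the min of the two bounds without spelling this out, so your completion here is a welcome clarification rather than a departure.
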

\vspace{-0.5em}

% \jc{It seems in essence, this section propose another data structure that trades preprocessing time for query time. And the one-shot compleixty follows from adding the new preprocessing and query time. Readers may want explanation.}
% \stavros{This one cannot be used as another data structure. The query time would be super linear on $N$ if we say this is like another data structure, which is not common to have data structures with $\Omega(N)$ time. In this section this is an algorithm that just uses $W$ and $M$ values to compute the sample. If we get a new query, someone needs to pay again $\Omega(N)$ time to construct the sample which is not what we want from a query procedure of a data structure.}\jc{I know that, but it is not clear to the readers. In the analysis appendix, the sentence about computing X and Y arrays at first glance looks like preprocessing. It would be good to explain that is for the query.}
% \stavros{I added a sentence after the definition of X, Y. Not sure if this is enough, or you want to highlight something more. We never say in this section that we construct a data structure. Are you saying we should add something in the appendix?} \jc{That's nice. When reading the appendix, that possible understanding occurs to me.}

\section{Dynamic Subset Sampling over Joins}
\label{sec:dynamic}
In this section, we move to the dynamic setting, where tuples are inserted one by one. We first adapt the techniques of~\cite{DHY24} to directly extend our non-optimal (static) index from Theorem~\ref{thm:logk} to a dynamic variant that maintains an approximate \textsf{DirectAccess} index. We then introduce new techniques and ideas to extend our optimized static index from Theorem~\ref{thm:best} to the dynamic setting. Throughout, we assume that, for each relation, tuples are ordered by their insertion timestamps.

\subsection{First Attempt} 
\label{sec:approximate-DA}
The high-level idea is to approximately maintain the static index presented in Section~\ref{sec:static-index}. However, as pointed by \cite{DHY24}, even maintaining the join size for line-3 join $Q = \{R_1(X,Y), R_2(Y,Z), R_3(Z,W)\}$ requires at least $\Omega(\sqrt{N})$ time for each insertion, which is too costly in the streaming setting. Hence, we follow the approach proposed by \cite{DHY24} by maintaining an approximate \DA index more efficiently while not sacrificing the overall sampling efficiency asymptotically. 
We first define the delta query $\Delta \join(Q, \bu)$ as the set of new join results generated when tuple $\bu$ is inserted into a relation of $Q$. Formally, if $\bu$ is inserted into $R_i$, $\Delta \join(Q, \bu) = \join(Q \setminus \{R_i\} \cup \{R_i \cup \{\bu\}\}) \setminus \join(Q)$.
%}

\begin{theorem}[\cite{DHY24}]
\label{the:acyclic}
Given an initially empty acyclic join $Q$, we can maintain an index $\mathcal{L}$ on $Q$ that supports the following operations:
\begin{itemize}[leftmargin=*]
    \item When a tuple $t$ is added to $Q$, $\mathcal{L}$ can be updated in amortized $O(\log N)$ time.
    \item The index defines an array $J \supseteq \join(Q)$ where the tuples in $\join(Q)$ are the real tuples and the others are dummy. 
    The index can return $|J|$ in $O(1)$ time. For any given $j\in [|J|]$, it returns $J[j]$ in $O(\log N)$ time. Furthermore, $J$ is guaranteed to be $\lambda$-dense\footnote{A stream $S = \langle x_1, x_2, \ldots, x_n \rangle$ is \textit{$\lambda$-dense} for $0< \lambda \le 1$, if $r_i \ge \lambda \cdot (i-1)$ for all $i$, where $r_i$ is the number of real (non-dummy) items in the first $i-1$ items.} for some constant probability $0<\prob\le 1$.
    \item The above also holds when $\join(Q)$ is replaced by the delta query $\Delta \join(Q, \bu)$ for any $\bu \not\in Q$.
\end{itemize}
where $N$ is the total number of insertions, but the index does not need the knowledge of $N$ in advance. 
\end{theorem}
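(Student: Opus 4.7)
The plan is to adapt the Yannakakis-style bottom-up evaluation of the acyclic join $Q$ to a streaming regime where each insertion can only touch $\bigo{\log N}$ state. Because maintaining exact subtree join counts already requires $\Omega(\sqrt{N})$ time per insertion on a line-3 join, I would give up exactness and instead track $2$-approximate upper bounds, following the reservoir-sampling philosophy of~\cite{DHY24}. The index $\mathcal{L}$ is essentially a ``lazy'' version of the counters $W^\emptyset_{i,\bu}$ and $M_{i,\bv}$ from Section~\ref{sec:index-optimized}, trimmed to a single aggregate per tuple.

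First, I would root the join tree $\T$ at an arbitrary node $R_r$ and, for each node $R_i \in Q$ and each tuple $\bu \in R_i$, keep a counter $\hat W_{i,\bu}$ that is always a power of $2$ and always upper bounds the true number of join results in the subtree $\T_i$ that are extended by $\bu$. Analogous projection counters $\hat M_{i,\bv}$ would be kept for each $\bv \in R_i[\key(i)]$. When an insertion pushes a true count past its current upper bound, the counter is doubled, and the doubling propagates to the parent through a constant number of $\hat M$-level updates. Since every counter is bounded by a polynomial in $N$, each counter is doubled $\bigo{\log N}$ times over the whole stream, and a standard amortized charging argument yields $\bigo{\log N}$ time per insertion.

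Second, the virtual array $J$ would be defined by the same recursive partitioning used in Algorithm~\ref{alg:fast-recursive}, but with every $W$-value replaced by its upper bound $\hat W$. This gives $|J| = \sum_{\bu \in R_r} \hat W_{r,\bu}$, which is maintained incrementally in $\bigo{1}$ and returned in $\bigo{1}$. To retrieve $J[j]$ one descends the tree level by level, at each level doing a binary search over a $\hat W$-weighted prefix-sum array; whenever the query index lands in the portion of a counter that exceeds its true count, $J[j]$ is declared dummy. Each level costs $\bigo{\log N}$, the depth of $\T$ is a constant, so the whole query runs in $\bigo{\log N}$. The $\lambda$-density of $J$ with $\lambda = 2^{-k}$ follows because at each of the $k = |Q|$ levels the true count is at least half of its upper bound, so a uniformly random position of $J$ is real with probability at least $2^{-k}$; a Chernoff-type argument using uniformly random intra-relation orderings upgrades this to a constant-probability $\prob$ density guarantee for every prefix.

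Finally, the delta-query version is handled by running the identical scheme on the instance $Q$ with the relation containing $\bu$ semi-joined down to the singleton $\{\bu\}$; all bounds transfer verbatim. The step I expect to be the main obstacle is exactly the $\lambda$-density argument: independence of the random orderings across the $k$ relations does not by itself guarantee density along every prefix of $J$ simultaneously, since overestimates can concentrate at the beginning of one child's range. I would handle this by a level-by-level tail bound on the real-count of each prefix combined with a union bound over the $\bigo{1}$ depth of $\T$, which is the technical core taken from~\cite{DHY24}.
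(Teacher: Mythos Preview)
The paper does not prove this theorem at all: it is stated with the attribution \texttt{[\textbackslash cite\{DHY24\}]} and used as a black box imported from prior work. There is therefore no ``paper's own proof'' to compare your proposal against. What the paper \emph{does} do, in Section~\ref{sec:dynamic-optimized-index} and Appendix~\ref{appndx:dynamic}, is develop an optimized dynamic index using approximate power-of-$2$ counters $\tilde W$ and $\tilde M$ with doubling propagation---essentially the same mechanism you sketch---but that machinery is used to prove Theorem~\ref{thm:dynamic}, not Theorem~\ref{the:acyclic}.

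Your outline is in the right spirit and tracks the ideas the paper attributes to~\cite{DHY24}. Two points are worth flagging. First, your amortization claim (``each counter is doubled $O(\log N)$ times, hence $O(\log N)$ per insertion'') skips a step: a single doubling of $\hat M_{i,\bv}$ can force recomputation at \emph{every} parent tuple $\bu \in R_{\textsf{parent}(i)} \ltimes \bv$, not just one, so the charging argument must be to the $\hat M$ counters themselves (there are $O(N)$ of them, each doubling $O(\log N)$ times) rather than a per-insertion bound; this is how the paper argues the analogous bound in Appendix~\ref{appndx:dynamic}. Second, you correctly identify the $\lambda$-density of every prefix of $J$ as the real crux; your proposed fix (random intra-relation orderings plus a union bound over depth) is plausible but is exactly the part that~\cite{DHY24} supplies and that this paper does not reproduce, so you would need to consult that reference to complete the argument.
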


Using Theorem~\ref{the:acyclic}, we derive a simple dynamic algorithm by adapting the framework from Section~\ref{sec:first-algorithm}. 
As in Section~\ref{sec:first-algorithm}, we define  $(L+1)^k$ sub-instances $Q_{\bj}$ for $\bj \in \llbracket L \rrbracket^k$.
%Recall that in Section~\ref{sec:first-algorithm}, we partitioned each relation into $L+1$ sub-relations based on probability buckets. This partition defines $(L+1)^k$ sub-instances $Q_{\bj}$ for $\bj \in \llbracket L \rrbracket^k$.
In the dynamic setting, we maintain a dynamic index $\mathcal{L}_{\bj}$ (as described in Theorem~\ref{the:acyclic}) for \emph{each} sub-instance $Q_{\bj}$. 
When a tuple $t$ is inserted into relation $R_i$, we first determine its bucket index $a$ such that $t \in R_i^{\langle a \rangle}$. This tuple $t$ participates in $(L+1)^{k-1}$ sub-instances. %(specifically, all $Q_{\bj}$ with $\bj_i =a$). 
We update the corresponding $(L+1)^{k-1}$ indexes. Since each update takes $O(\log \In)$ time amortized, the total amortized update time is $O(L^{k-1} \log \In)$.
To answer a subset sampling query, we use the batched rejection sampling strategy. For each sub-instance $Q_{\bj}$, the index $\mathcal{L}_{\bj}$ provides the size of the implicit array $|J_{\bj}|$. We use $|J_{\bj}|$ as an upper bound on the size of the sub-join to define the meta-sampling probability $\fq(\bj)$. Suppose a sub-instance $Q_{\bj}$ is selected and the algorithms needs to retrieve the tuple $u = J_{\bj}[z]$. If $u$ is a dummy tuple (which happens with probability at most $1-\prob$) we reject it. If $u$ is real, we proceed with the standard rejection check based on its weight. Since $\prob$ is a constant, the expected query time remains asymptotically the same as the static case.
\vspace{-0.2em}
\begin{corollary}
\label{cor:basic-dynamic}
Given an acyclic join instance $Q$ and a set of associated weight functions $\{\fp_j\}_{R_j \in Q}$, we can maintain a dynamic index under tuple insertions using $O(\In \log^{k-1}\In)$ space, with amortized update time $O(\log^k \In)$, while supporting answering the subset sampling query in $O(1 +  \mu_\Psi \log \In)$ expected time, where $N$ is the total number of insertions. 
\end{corollary}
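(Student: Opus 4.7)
The plan is to mirror the static construction from the proof of Theorem~\ref{thm:logk} while swapping the linear-space static direct-access indices for the dynamic approximate indices supplied by Theorem~\ref{the:acyclic}. Specifically, I keep the weight-bucket partition of every relation $R_i$ into sub-relations $R_i^{\langle 0\rangle},\ldots,R_i^{\langle L\rangle}$, so that $\join(Q)$ decomposes into the same $(L+1)^k$ disjoint sub-joins $Q_{\bj}$ for $\bj\in\llbracket L\rrbracket^k$. For every $\bj$, I maintain one dynamic index $\mathcal{L}_{\bj}$ from Theorem~\ref{the:acyclic} over $Q_{\bj}$, which exposes an implicit $\prob$-dense array $J_{\bj}\supseteq\join(Q_{\bj})$, reports $|J_{\bj}|$ in $O(1)$, and returns any $J_{\bj}[z]$ in $O(\log \In)$. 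On top of these sub-indices I maintain an $O(1)$-update dynamic subset-sampling structure (as discussed in Section~\ref{sec:previous-results}) over the meta-instance $\langle\llbracket L\rrbracket^k,\fq\rangle$ with $\fq(\bj)=1-(1-p^+_{\bj})^{|J_{\bj}|}$ and $p^+_{\bj}=2^{-\sum_i j_i}$.

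For the update bound, each inserted tuple lies in exactly one bucket of one relation and therefore participates in exactly $(L+1)^{k-1}$ sub-joins. Each sub-index absorbs its update in amortized $O(\log \In)$ by Theorem~\ref{the:acyclic}, giving $O(\log^{k-1} \In\cdot \log \In)=O(\log^k \In)$ in total. Each such sub-index update can change the reported $|J_{\bj}|$, triggering a single $O(1)$-time update of $\fq(\bj)$ in the meta-structure; the combined $O(\log^{k-1}\In)$ meta-cost is absorbed into the same $O(\log^k \In)$ bound. The space bound follows by summing the linear space of the $(L+1)^k$ sub-indices: each input tuple is replicated in $(L+1)^{k-1}$ of them, yielding $O(\In\log^{k-1}\In)$ total.

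A query then runs the batched rejection procedure of Algorithm~\ref{alg:ss-rejected-batch} verbatim. When the meta-sampler selects $\bj$ and the geometric-jump subroutine requests slot $z$ of $J_{\bj}$, I call $\mathcal{L}_{\bj}$ to fetch $J_{\bj}[z]$ in $O(\log \In)$, discard it if the slot is dummy, and otherwise apply the standard rejection with probability $\fp(\bu)/p^+_{\bj}$. Correctness is immediate: every true result $\bu\in\join(Q_{\bj})$ occupies a deterministic slot that is examined with probability $p^+_{\bj}$ and kept with probability $\fp(\bu)/p^+_{\bj}$, so $\bu$ ends up in the output independently with probability exactly $\fp(\bu)$, while dummy slots never survive.

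The main technical obstacle is controlling the expected query time despite the density slack introduced by the dummy tuples. By Lemma~\ref{lm:batched-index} the expected number of candidates probed is $O(1+\sum_{\bj} |J_{\bj}|\cdot p^+_{\bj})$, and each probe costs $O(\log \In)$. The $\prob$-density guarantee of Theorem~\ref{the:acyclic} gives $|J_{\bj}|\le |\join(Q_{\bj})|/\prob + O(1)$, so this sum is within a constant factor of the static quantity $\sum_{\bj}|\join(Q_{\bj})|\cdot p^+_{\bj}$, which Lemma~\ref{lem:uniform-light} (light or $2^k$-uniform classification) already bounds by $O(\mu_\Psi)$ as in Theorem~\ref{thm:logk}. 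Multiplying by the per-probe $O(\log \In)$ cost yields the claimed $O(1+\mu_\Psi\log \In)$ expected query time. The delicate step is verifying that the density guarantee propagates cleanly into the expectation bound rather than being spoiled by the random choices inside the sub-indices, which is where the adapted framework of~\cite{DHY24} is crucial.
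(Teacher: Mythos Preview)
Your proposal is correct and follows essentially the same approach as the paper: partition into the $(L+1)^k$ sub-instances $Q_{\bj}$, maintain a dynamic approximate \DA index $\mathcal{L}_{\bj}$ from Theorem~\ref{the:acyclic} for each, charge the $O(\log^k\In)$ amortized update to the $(L+1)^{k-1}$ affected sub-instances, and answer queries via the batched rejection framework while discarding dummy slots. Your explicit bound $|J_{\bj}|\le |\join(Q_{\bj})|/\prob+O(1)$ and the observation that this preserves the $O(\mu_\Psi)$ expected probe count are slightly more detailed than the paper's one-line appeal to ``$\prob$ is a constant,'' but the argument is the same.
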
 
\vspace{-0.5em}

%\vspace{-0.5em}
\subsection{Optimized Dynamic Index for Subset Sampling over Joins}
\label{sec:dynamic-optimized-index}
To improve both space and update time, we adapt the optimized index from Section~\ref{sec:index-optimized} to the dynamic setting. This adaptation is nontrivial, since the index in Section~\ref{sec:index-optimized} is inherently static and therefore requires new ideas. Rather than maintaining disjoint indexes for every bucket combination, we maintain the aggregated statistics (the $W$ and $M$ values) directly. Because maintaining exact counts is too costly, we instead store \emph{approximate} statistics, rounded up to the next power of $2$.

\begin{algorithm}[t]
    \caption{\textbf{Update}$(i, \bv, \ell, \Delta)$}
    \label{alg:update}
    \KwIn{$i \in [k]$, tuple $\bv \in \dom(\key(i))$, score $\ell$, and 
    $\tilde{W}^\emptyset_{i, \bu}(\ell)$ increased by $\Delta$ for $\bu \in R_i \ltimes \bv$.}
    Update the prefix-sum tree built for $\{\tilde{W}_{i,\bu'}^\emptyset(\ell): \bu' \in R_i,\bu'[\key(i)] = \bu[\key(i)]\}$\label{line:prefsum}\;
    $\hat{M}_{i,\bv}(\ell) \gets \hat{M}_{i,\bv}(\ell) + \Delta$\label{line:dynM}\; 
    $\displaystyle{\tilde{M}_{i,\bv}(\ell) \gets 2^{\left\lceil\log \hat{M}_{i,\bv}(\ell)\right\rceil}}$\;
    \If{$\tilde{M}_{i,\bv}(\ell)$ changes and $\textsf{parent}(i) \neq \mathsf{null}$}{
        $p \gets \textsf{parent}(i)$\;
        \ForEach{tuple $\bu \in R_{p} \ltimes \bv$}{
             Compute $\Tilde{W}^j_{p, \bu}(\ell')$ for each score $\ell'\in \llbracket L -1\rrbracket$ and $R_j\in \C_p$ according to (\ref{eq:tilde-W})\label{line:dynW}\; 
             \ForEach{$\ell'\in \llbracket L -1\rrbracket$}{
             \If{$\Tilde{W}^\emptyset_{p, \bu}(\ell')$ is increased (by $\Delta'$)}{
                $\textbf{Update}(p, \bu[\key(p)], \ell',\Delta')$\;
             }
             }
        }
    }
\end{algorithm}
\smallskip
 
For each relation $R_i \in Q$, each child node $R_j \in \C_i$, and each tuple $\bu \in R_i$, we maintain an upper bound $\tilde{W}^j_{i,\bu}(\ell)$ on $W^j_{i,\bu}(\ell)$. Similarly, for each non-root node $R_i$, tuple $\bv \in R_i[\key(i)]$, and score $\ell \in \llbracket L-1 \rrbracket$, we 
maintain $\hat{M}_{i,\bv}(\ell)$ as the sum of the approximate counters:
\begin{equation}
    \label{eq:M-dynamic}
    \hat{M}_{i,\bv}(\ell) = \displaystyle{\sum_{\bu \in R_i\ltimes\bv} \tilde{W}^\emptyset_{i,\bu}(\ell)},
    \vspace{-0.3em}
\end{equation}
and use it to compute the upper bound $\displaystyle{\tilde{M}_{i,\bv}(\ell) = 2^{\left\lceil\log \hat{M}_{i,\bv}(\ell)\right\rceil}}$.
% maintain $\hat{M}_{i,\bv}(\ell)$ as an upper bound of the sum of approximated counters:
%     \begin{equation}
%     \label{eq:M-dynamic}
%     \hat{M}_{i,\bv}(\ell) = \displaystyle{\sum_{\bu \in R_i\ltimes\bv} \tilde{W}^\emptyset_{i,\bu}(\ell)},
%     \vspace{-0.3em}
%     \end{equation}
%     as well as $\displaystyle{\tilde{M}_{i,\bv}(\ell) = 2^{\left\lceil\log \hat{M}_{i,\bv}(\ell)\right\rceil}}$. Now, we will define $\tilde{W}$-values on top of $\tilde{M}$-values.
Consider a leaf node $R_i$. As $\C_i = \emptyset$, we have $\tilde{W}^{\emptyset}_{i, \bu}(\ell) = 1$ if $\ell = \phi(\bu)$, and $\tilde{W}^{\emptyset}_{i, \bu}(\ell) = 0$ otherwise. Consider an internal node $R_i$. Suppose $\tilde{W}$ and $\tilde{M}$-values have been computed for each child node $R_j \in \C_i$. We compute $\tilde{W}^{j}_{i, \bu}$ in a decreasing ordering of nodes in $\C_i$. Let $R_{\textsf{next}(j)} \in \C_i$ be the immediate next sibling of $R_j$.
% \vspace{-1em}
\begin{equation}
    \label{eq:tilde-W}
    \tilde{W}^{j}_{i, \bu}(\ell) =\sum_{\substack{(\ell_1,\ell_2) \in \llbracket \ell - \phi(\bu) \rrbracket: \ell_1 + \ell_2 = \ell - \phi(\bu)}} \tilde{M}_{j,\bu[\key(j)]}(\ell_1)\cdot \tilde{W}^{\textsf{next}(j)}_{i, \bu}(\ell_2).
\end{equation}
Also, 
we define $\tilde{W}^{\textrm{next}(j^*)}_{i,\bu}(0) = 1$ and $\tilde{W}^{\textrm{next}(j^*)}_{i,\bu}(\ell) = 0$ for $\ell > 0$ for the largest node $R_{j^*} \in \C_i$.

We point out that $\tilde{W}^j_{i,\bu}(\ell)$ (resp. $\tilde{M}_{i,\bv}(\ell)$) is a constant-approximation of $W^j_{i,\bu}(\ell)$ (resp. $M_{i,\bv}(\ell)$). %(specifically, within a factor of $2^{2^{|\T_i|}}$),
%and similarly,  is a constant-approximation of $\hat{M}_{i,\bv}(\ell)$ and .
For each score $\ell \in \llbracket L-1\rrbracket$, and each child node $R_j \in \C_i$, we maintain a dynamic prefix-sum tree on the values $\tilde{W}^j_{i, \bu}(\ell)$ over all tuples $\bu \in R_i$ under the pre-determined ordering. This allows us to support point updates and prefix sum queries in $O(\log \In)$ time.

\smallskip
\noindent {\bf Update Procedure.}
When a new tuple $\bu$ is inserted into relation $R_i$, we first initialize its local statistics. If $R_i$ is a leaf node, $\tilde{W}^\emptyset_{i,\bu}(\ell) = 1$ if $\ell = \phi(\bu)$ and $\tilde{W}^\emptyset_{i,\bu}(\ell) = 0$ otherwise. If $R_i$ is an internal node, we compute $\tilde{W}^\emptyset_{i,\bu}(\ell)$ for all score $\ell \in \llbracket L-1\rrbracket$ via Equation~(\ref{eq:tilde-W}), which takes $O(L \log L)$ time using FFT. 
After initializing $\bu$, we must update the statistics in the parent node. Specifically, inserting $\bu$ increases $\hat{M}_{i,\bv}(\ell) = \sum_{\bu' \in R_i \ltimes \bv} \tilde{W}^\emptyset_{i,\bu'}(\ell)$ for $\bv = \bu[\key(i)]$. We update $\hat{M}_{i,\bv}(\ell)$ and check if $\tilde{M}_{i,\bv}(\ell)$ changes. If $\tilde{M}_{i,\bv}(\ell)$ changes, we must propagate this change to the parent relation $R_{\textsf{parent}(i)}$. This propagation is described in Algorithm~\ref{alg:update}. When a tuple $\bu$ is inserted into a leaf $R_i$, we just invoke \textsc{Update}$(i,\bu[\key(i)], \phi(\bu), 1)$. If it is inserted in an inner node we invoke \textsc{Update}$(i,\bu[\key(i)], \ell, \tilde{W}^\emptyset_{i,\bu}(\ell))$ for every $\ell\in \llbracket L-1\rrbracket$.
%After we insert $\In$ tuples, we reconstruct the dynamic index from scratch.

\smallskip
\noindent {\bf Complexity Analysis.} The update cost is dominated by the propagation of changes. Note that $\tilde{M}_{i,\bv}(\ell)$ is an upper bound that increases at most $O(\log \In)$ times for any fixed $i, \bv, \ell$, since $\tilde{M}_{i,\bv}(\ell)$ only increases if $\hat{M}_{i,\bv}(\ell)$ gets doubled.
When $\tilde{M}_{i,\bv}(\ell)$ changes, we perform convolution operations for tuples in the parent node that can be joined with $\bv$. Since $L = O(\log \In)$, the convolution takes $O(L \log L)$ time. 
We show that the total number of times $\tilde{M}$ values change across the entire stream of length $N$ is bounded as $O(NL\log N)$. So, the amortized update time is $O(L^2 \log \In \log L) = O(\log^3 \In \log\log \In)$. The space complexity remains $O(\In L) = O(\In \log \In)$.

% \smallskip
% \noindent {\bf Query Procedure.} %Similar to the static case but with the approximate statistics, we invoke Algorithm~\ref{alg:ss-rejected-batch} on the buckets defined by the approximate scores. 
% Similar to the static case, we invoke Algorithm~\ref{alg:ss-rejected-batch} on the buckets.
% When retrieving the $\tau$-th tuple from $\mathcal{B}_\ell$ using the approximate index, we use a modified Algorithm~\ref{alg:fast-recursive}. Since $\tilde{W}$ and $\tilde{M}$ are upper bounds, the index defines a superset of the actual join results (including ``dummy'' tuples).
% If the retrieved tuple corresponds to a ``dummy'' tuple, we reject it. Since $\tilde{W}$ is a constant approximation (within factor $2^{|\T|}$), the acceptance probability is constant (specifically $\Omega(2^{-k})$). Thus, the expected query time remains $O(1 +  \mu_\Psi \log \In)$.
\smallskip
\noindent {\bf Query Procedure.} The query procedure follows the optimized strategy from Section~\ref{sec:index-optimized}, invoking Algorithm~\ref{alg:ss-rejected-batch} on the score buckets $\mathcal{B}_\ell$ (for $\ell < L$) and the tail bucket $\mathcal{B}_{\ge L}$. To retrieve the $\tau$-th tuple from a selected bucket, we employ a modified version of Algorithm~\ref{alg:fast-recursive} that traverses the approximate statistics $\tilde{W}$ and $\tilde{M}$. As discussed in Section~\ref{sec:approximate-DA}, since these statistics are upper bounds, the index implicitly defines a superset of the join results containing ``dummy'' tuples. If the retrieved tuple is a dummy, we reject it. Since $\tilde{W}$ is a constant-factor approximation of $W$, the probability of drawing a real tuple is at least $2^{-c}$ for some constant $c$. Consequently, sampling $\chi$ tuples takes $O(\chi)$ expected time, and the overall expected query time remains $O(1 + \mu_\Psi \log \In)$.

We state the following theorem, with details deferred to Appendix~\ref{appndx:dynamic}.
\begin{theorem}\label{thm:dynamic}
    Given an acyclic join instance $Q$ and a set of associated weight functions $\{\fp_j\}_{R_j \in Q}$, we can maintain a dynamic index under tuple insertions using $O(\In \log \In)$ space, with amortized update time of $O(\log^3 \In \log\log\In)$, while supporting answering the subset sampling query in $O(1 +  \mu_\Psi \log \In)$ expected time, where $N$ is the total number of insertions. 
\end{theorem}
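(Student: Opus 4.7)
\textbf{Proof Plan for Theorem~\ref{thm:dynamic}.} The plan is to establish the three bounds (space, amortized update time, expected query time) separately, treating the data structure described in Section~\ref{sec:dynamic-optimized-index} as given, and filling in the analysis that justifies why the approximate statistics $\tilde{W}, \tilde{M}$ behave well under insertions.

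For space, I would first enumerate the stored quantities: for each relation $R_i$ and each tuple $\bu \in R_i$, we keep $\tilde{W}^j_{i,\bu}(\ell)$ for the $O(1)$ children $R_j \in \C_i \cup \{\emptyset\}$ and for $\ell \in \llbracket L-1 \rrbracket$; for each non-root $R_i$ and each $\bv \in R_i[\key(i)]$ we keep $\hat{M}_{i,\bv}(\ell)$ and $\tilde{M}_{i,\bv}(\ell)$; and for each child slot $(i,j,\ell)$ we keep a dynamic prefix-sum tree over $\tilde{W}^j_{i,\cdot}(\ell)$ supporting point update and prefix-sum query in $O(\log N)$ time. Since $L = O(\log N)$ and $|V|,|E|$ are constants, summing gives $O(N \log N)$ space.

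For the amortized update time, the crucial observation is that $\tilde{M}_{i,\bv}(\ell) = 2^{\lceil \log \hat{M}_{i,\bv}(\ell) \rceil}$ is monotone nondecreasing in a streaming setting with insertions only, and is upper bounded by $N^{\rho^*}$, so it can change at most $O(\log N)$ times for any fixed triple $(i,\bv,\ell)$. I would then charge the cost in a parent-centric way: each tuple $\bu \in R_p$ has its $\tilde{W}^j_{p,\bu}$ vector recomputed only when $\tilde{M}_{j,\bu[\key(j)]}(\ell)$ changes for some $j \in \C_p$ and $\ell \in \llbracket L-1\rrbracket$. Summing $O(\log N)$ changes per $(j,\ell)$ over the constant number of children and $L$ scores gives $O(\log^2 N)$ recomputations per fixed $\bu$, i.e., $O(N \log^2 N)$ total across the stream. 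Each recomputation costs $O(L \log L)$ via FFT on (\ref{eq:tilde-W}) plus $O(L \log N)$ for updating the $L$ affected prefix-sum trees, so the total work over the stream is $O(N \log^3 N \log\log N)$, yielding amortized $O(\log^3 N \log\log N)$ per insertion. The main obstacle here is making the charging argument rigorous: one must verify that the propagation in Algorithm~\ref{alg:update} indeed performs work only proportional to the number of triggered $\tilde{M}$ changes times the FFT convolution cost, and that the cascade stops naturally at nodes whose $\tilde{M}$ value does not change (so there is no runaway recursion up the constant-height tree).

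For the query procedure and its expected time, I would argue correctness and complexity jointly. Correctness follows from invoking the batched rejection framework of Lemma~\ref{lm:batched-index}: since $\tilde{M}$ and $\tilde{W}$ are upper bounds computed by the same recursion as $M,W$, they define an implicit superset of $\join(Q)$ in which a dummy fraction of at most $1 - 2^{-c}$ is present, for some constant $c$ depending on the tree height (as each layer inflates by a factor $\le 2$). Each retrieval via the modified version of Algorithm~\ref{alg:fast-recursive} on the approximate statistics costs $O(\log N)$ (binary search on the prefix-sum tree in Line~\ref{locate-u} and the at most $L$ candidate pairs in Line~\ref{locate-score-pairs}); since each trial returns a real tuple with constant probability, the expected number of trials to realize the $\mu_\Psi$ successes of the genuine instance is $O(1 + \mu_\Psi)$, and the meta-index of Lemma~\ref{lm:batched-index} over the $L+1$ buckets contributes $O(1)$ in expectation. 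Combining gives expected query time $O(1 + \mu_\Psi \log N)$, completing the proof.
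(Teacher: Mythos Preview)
Your plan is essentially the paper's own route: enumerate the stored arrays for space, use the doubling argument on $\tilde{M}$ to bound the number of FFT recomputations, and invoke the constant-factor approximation $\tilde{W}\le c^\star W$ together with Lemma~\ref{lm:batched-index} for the query bound. Two points keep it from closing cleanly, though.

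First, the update-time arithmetic does not add up once you include the prefix-sum term. You count $O(N\log^2 N)$ recomputations and charge each one $O(L\log L)+O(L\log N)$; but $O(N\log^2 N)\cdot O(L\log N)=O(N\log^4 N)$, which overshoots the claimed $O(N\log^3 N\log\log N)$. The paper's accounting (Appendix~\ref{appndx:dynamic}) charges only the FFT cost $O(L\log L)$ per trigger and does not separately bill the prefix-sum maintenance; if you want to match the stated bound you must argue the prefix-sum updates are absorbed by that term (or else drop the extra $O(L\log N)$ from your per-recomputation estimate).

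Second, you implicitly treat $L$ as fixed, but $L=\Theta(\log N)$ depends on the stream length, which is not known in advance. The paper handles this explicitly by rebuilding the index from scratch and setting $N\gets 2N$ whenever the number of insertions doubles; this step is required for the amortized bound to hold over an unbounded stream and should be added to your plan.
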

%\vspace{-0.1em}

This index can be used to generate a one-shot subset sample with the following running time:
\vspace{-0.1em}
\begin{corollary}
    \label{cor:dynamic}
    Given an acyclic join instance $Q$ and a set of associated weight functions $\{\fp_j\}_{R_j \in Q}$, we can maintain a one-shot subset sample in $O(N \log^3 \In \log\log\In +  \mu_\Psi \log \In)$ expected time, where $N$ is the total number of insertions. 
\end{corollary}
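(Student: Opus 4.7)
The plan is to derive Corollary~\ref{cor:dynamic} as an essentially immediate consequence of Theorem~\ref{thm:dynamic} via the generic reduction from the dynamic indexed problem (Problem~\ref{pro:dynamic-index}) to the dynamic one-shot problem (Problem~\ref{pro:dynamic-one-shot}) already stated in the paper: any dynamic index with amortized update time $t_u$ and expected query time $t_s$ yields a dynamic one-shot algorithm running in total expected time $O(t_u \cdot N + t_s)$. Instantiating with the index of Theorem~\ref{thm:dynamic}, which achieves $t_u = O(\log^3 N \log\log N)$ and $t_s = O(1+\mu_\Psi \log N)$, the bound $O(N\log^3 N \log\log N + \mu_\Psi \log N)$ drops out directly.

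Concretely, I would process the insertion stream by maintaining the dynamic index of Theorem~\ref{thm:dynamic}, spending $O(\log^3 N \log\log N)$ amortized time per insertion, for a cumulative stream-processing cost of $O(N\log^3 N \log\log N)$. When a sample is needed (at the end of the stream, or at any prescribed timestamp), I would invoke the sampling procedure of Theorem~\ref{thm:dynamic} a single time, at expected cost $O(1+\mu_\Psi \log N)$. Adding these two contributions yields the claimed running time.

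If one reads Problem~\ref{pro:dynamic-one-shot} strictly, as requiring that a valid sample be \emph{continuously} available at every intermediate timestamp $\eta$ rather than only on demand, I would augment the above plan with an incremental maintenance step: at each insertion of a tuple $\bu_\eta$ into some $R_{i_\eta}$, perform one subset-sampling query restricted to the delta join $\Delta\join(Q^\eta,\bu_\eta)$ (using the delta-access capability of the underlying direct-access structure, cf.\ Theorem~\ref{the:acyclic}, extended to the optimized index of Section~\ref{sec:dynamic-optimized-index} in the obvious way) and take the union of its output with the running sample. Because tuple weights are fixed at insertion time and the family $\{\Delta\join(Q^\eta,\bu_\eta)\}_{\eta=1}^{N}$ partitions $\join(Q^N)$, the expected delta-sample sizes sum to $\mu_\Psi$, so the total cost of the $N$ delta queries is $O(N+\mu_\Psi \log N)$, which is absorbed into the maintenance cost and does not change the overall bound. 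The one point requiring care, and what I view as the main (though still modest) obstacle, is verifying that concatenating independent subset samples on the disjoint deltas really realizes the target subset-sampling distribution on $\join(Q^\eta)$; this follows from the per-join-result independence of the subset-sampling coin flips together with the fact that each join result appears in exactly one delta.
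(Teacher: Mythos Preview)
Your proposal is correct and matches the paper's approach: the corollary is stated in the paper without a separate proof, as an immediate instantiation of the generic reduction from Problem~\ref{pro:dynamic-index} to Problem~\ref{pro:dynamic-one-shot} (spelled out in Section~\ref{sec:prob-def}) applied to the index of Theorem~\ref{thm:dynamic}. Your additional delta-query argument for the ``continuous'' reading of Problem~\ref{pro:dynamic-one-shot} goes beyond what the paper actually does, but is not needed for the corollary as the paper intends it.
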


\vspace{-0.5em}
\section{Conclusion}
\label{sec:conclusion}
\vspace{-0.2em}
 
We presented the first efficient framework for subset sampling over joins without full materialization. By exploiting the join structure of query and decomposability of weight functions, our algorithms achieve near-optimal complexity for static indexing, one-shot sampling and dynamic maintenance under insertions. This work opens several intriguing avenues for future research. We aim to extend our theoretical framework to support a broader class of weight definitions. While this paper addressed standard aggregation functions, efficient sampling under complex, non-monotonic, or holistic aggregation functions remains open. Furthermore, we plan to bridge the gap between theory and practice by implementing and evaluating our indexes on real-world machine learning workloads over relational data. %This empirical analysis will be essential to quantify the practical performance gains over traditional materialization pipelines.

\newpage
\bibliographystyle{abbrvnat}
\bibliography{reference}

\appendix

\section{Full Notation Table}\label{appendix:notations}
\begin{table}[H]
\centering
\renewcommand{\arraystretch}{1.1}
\begin{tabular}{c|c|l}
\toprule & \textbf{Notation} & \textbf{Description} \\
\midrule
\midrule
\multirow{7}{*}{\rotatebox[origin=c]{90}{\shortstack[c]{\textbf{General}}}} 
  & $S$ & A set of elements. \\ 
  & $\fp$ & A probability function assigning a probability to each element in $S$. \\ 
  & $\Psi = \langle S, \fp \rangle$ & A subset sampling problem instance. \\ 
  & $\mathbf{X}$ & A random subset sample from a subset sampling instance. \\ 
  & $\mu_\Psi$ & The expected size of a subset sample $\mathbf{X}$ from $\Psi$. \\ 
  & $[x]$ & The set of integers $\{1, 2, \ldots, x\}$. \\ 
  & $\llbracket x \rrbracket$ & The set of integers $\{0,1, 2, \ldots, x\}$. \\ 
  & \DA & A Direct Access oracle. \\
\midrule
\midrule
\multirow{14}{*}{\rotatebox[origin=c]{90}{\shortstack[c]{\textbf{Subset Sampling over Joins \ \ \ \ \ \ \ \ }}}} 
  & $\att, \dom$ & The set of all attributes and the domain of all values. \\ 
  & $\bu, \bv$ & Tuples. \\ 
  & $R_i$ & The $i$-th relation in a join query. \\ 
  & $\schema(R_i)$ & The schema (set of attributes) of relation $R_i$. \\ 
  & $Q = \{R_1, \ldots, R_k\}$ & A join query consisting of $k$ relations. \\ 
  & $\In$ & The input size of $Q$, i.e., $\sum_{i=1}^k |R_i|$. \\ 
  & $\join(Q)$ & The set of result tuples from the join query $Q$. \\ 
  & $G=(V,E)$ & The schema hypergraph of a join query $Q$. \\ 
  & $\rho^*$ & The fractional edge covering number of schema graph $G$. \\
  & $\fp_i(\bu)$ & The probability associated with a tuple $\bu \in R_i$. \\ 
  & $\fp(\bu)$ & Prob. of join result $\bu$, defined as $\prod_{i=1}^k \fp_i(\bu[\schema(R_i)])$. \\ 
  & $L$ & The number of partitions/buckets, defined as $\lceil\rho^*\log\In\rceil$. \\ 
  & $\phi(\bu)$ & The score of a tuple $\bu$. \\ 
  & $\mathcal{B}_\ell$ & The set (bucket) of join results with score $\ell$. \\ 
\midrule
\midrule
\multirow{8}{*}{\rotatebox[origin=c]{90}{\shortstack[c]{\textbf{Optimized Index}}}} 
  & $\mathcal{T}$ & Join tree for an acyclic join. \\ 
  & $\T_i$ & The subtree of $\T$ rooted at relation $R_i$. \\
  & $\C_i$ & The set of child nodes of relation $R_i$ in $\T$. \\
  & $\T_i^j$ & The partial subtree of $\T_i$ excluding children preceding $R_j$. \\
  & $\key(i)$ & Common join attributes between $R_i$ and its parent in $\mathcal{T}$. \\ 
  & $W^j_{i,\bu}(\ell)$ & \# of join results with score $\ell$ in $\T_i^j$ involving $\bu$. \\ 
  & $M_{i,\bv}(\ell)$ & An aggregated count of $W$-values for tuples projecting to $\bv$. \\
\midrule
\midrule
\multirow{5}{*}{\rotatebox[origin=c]{90}{\shortstack[c]{\textbf{Dynamic}}}} 
  & $\eta$ & A timestamp in the streaming setting. \\
  & $Q^\eta$ & The join defined by the first $\eta$ tuples in the stream. \\
  & $\fp^\eta$ & The weight functions associated with $Q^\eta$. \\
  & $\tilde{W}^j_{i,\bu}(\ell)$ & An approximated upper bound of $W^j_{i,\bu}(\ell)$. \\
  & $\tilde{M}_{i,\bv}(\ell)$ & An approximated upper bound of $M_{i,\bv}(\ell)$. \\
\bottomrule
\end{tabular}
\caption{Table of Notations} \label{tab:notations-full}
\end{table}

\section{Missing Materials in Section \ref{sec:revisited}}
\label{appendix:revisited}

\begin{comment}
\begin{algorithm}[h]
    \caption{\textbf{ss-rejected}$(S, \fp)$}
    \label{alg:ss-rejected}
    $\mathbf{X} \gets \textbf{uss-vanilla}(S, \max_{e\in S}\fp(e))$ \textbf{or} $\textbf{uss-advanced}(S, \max_{e\in S}\fp(e))$\;
    \lForEach{$e \in \mathbf{X}$}{
        Remove $e$ from $\mathbf{X}$ with probability $1-\frac{\fp(e)}{\max_{e\in S}\fp(e)}$}
    \Return $\mathbf{X}$\;
\end{algorithm}
\end{comment}

\smallskip
% \noindent{\bf Proof of Lemma~\ref{lm:batched-index}.} 
\begin{proof}[Proof of Lemma~\ref{lm:batched-index}]
The preprocessing phase constructs two levels of indexes: standard subset sampling indexes for each sub-instance $\Psi_i$ (taking $O(\sum |S_i|) = O(|S|)$ time) and a meta-index for $\Psi_{\text{meta}} = \langle [m], \fq \rangle$ (taking $O(m) \le O(|S|)$ time). The space complexity is clearly $O(|S|)$.

We now analyze the expected query time of Algorithm~\ref{alg:ss-rejected-batch}. The running time consists of two parts:
    
\noindent \textbf{Meta-sampling:} Using the meta-index, generating the set of indices $\rdI$ takes expected time $O(1 + \mu_{\Psi_{\text{meta}}})$, where $\mu_{\Psi_{\text{meta}}} = \sum_{i=1}^m \fq(i)$.
    
\noindent \textbf{Sub-sampling:} For each selected index $i \in \rdI$, we generate an intermediate sample $\rdX'_i$ using the geometric jump procedure (simulating \textbf{uss-advanced}). The cost for a specific $i$ is proportional to the size of this intermediate sample. The total expected cost is:
\[ 
    \expt[\text{Cost}_{\text{sub}}] = \sum_{i=1}^m \fq(i) \cdot O\left( \expt[|\rdX'_i| \mid \rdX'_i \neq \emptyset] \right). 
\]
Note that $\expt[|\rdX'_i| \mid \rdX'_i \neq \emptyset] = \frac{\expt[|\rdX'_i|]}{\Pr[\rdX'_i \neq \emptyset]} = \frac{|S_i|p_i^+}{\fq(i)}$. Substituting this back into the summation, the $\fq(i)$ terms cancel out:
\[ 
    \expt[\text{Cost}_{\text{sub}}] = O\left( \sum_{i=1}^m \fq(i) \cdot \frac{|S_i|p_i^+}{\fq(i)} \right) = O\left( \sum_{i=1}^m |S_i|p_i^+ \right). 
\]
    
Combining both parts, the total expected time is $O(1 + \sum_{i=1}^m \fq(i) + \sum_{i=1}^m |S_i|p_i^+)$. Since $\fq(i) \le |S_i|p_i^+$, the total complexity simplifies to $O(1 + \sum_{i=1}^m |S_i|p_i^+)$.

Finally, if each $\Psi_i$ is $\beta$-uniform, we have $p_i^+ \le \beta \cdot \min_{e \in S_i} \fp_i(e)$. Summing over all elements in $S_i$, we get $|S_i| p_i^+ \le \beta \sum_{e \in S_i} \fp_i(e) = \beta \mu_{\Psi_i}$. Summing over all $i$, the total cost is bounded by $O(1 + \sum_i \beta \mu_{\Psi_i}) = O(1 + \beta \mu_\Psi)$.
\end{proof}

\section{Missing Materials in Section~\ref{sec:static-index}}
\label{appendix:static-index}
\subsection{Analysis of the High-level Framework}
\begin{proof}[Proof of Lemma~\ref{lem:uniform-light}]
% \noindent{\bf Proof of Lemma~\ref{lem:uniform-light}.}
     For the largest probability, we always have $\displaystyle{\max_{\bu\in\join(Q_{\bj})}\fp(\bu)\le2^{-\sum_{i\in[k]} j_i}}$.
     If $\sum_{i\in[k]} j_i \ge L$, we have $\max_{\bu\in\join(Q_{\bj})}\fp(\bu) \le 2^{-\sum_{i\in[k]} j_i} \le 2^{-L} \le 1/|\join(Q)|$, following our assumption on $L$. Hence, $\Psi_{\bj}$ is light. Otherwise, we must have $L\notin\{j_1,j_2,\ldots,j_k\}$. In this case, for the smallest probability, we have $\displaystyle{2^{-k-\sum_{i\in[k]} j_i} \le  \min_{\bu\in\join(Q_{\bj})}\fp(\bu)}$. Hence, $\Psi_{\bj}$ is $2^k$-uniform since $\max_{\bu\in\join(Q_{\bj})}\fp(\bu) \le  \min_{\bu\in\join(Q_{\bj})}\fp(\bu)$
\end{proof}
% \qed

\begin{proof}[Proof of Theorem~\ref{thm:logk}]
The proof follows from analyzing the preprocessing cost and the query time of our sampling strategy.

\extraspacing{\bf Preprocessing and Data Structure.} The preprocessing phase constructs the composite index defined in Lemma~\ref{lm:batched-index}.
\begin{enumerate}[leftmargin=*]
    \item {\bf Partitioning and Sub-indexes.} We partition each relation $R_i \in Q$ into $L+1$ sub-relations, defining $(L+1)^k$ sub-instances $Q_{\bj}$. For each sub-instance, we construct a \DA oracle (sub-index). According to~\cite{zhao2018random}, building a \DA oracle for an acyclic join takes time linear in its input size. Since each tuple participates in $(L+1)^{k-1}$ sub-instances and $L = O(\log \In)$, the total time and space are $O(\In \log^{k-1}\In)$.

    \item {\bf Meta-index.} We build the meta-index for $\Psi' = \langle \llbracket L \rrbracket^k, \fq \rangle$. As the universe size is $(L+1)^k = O(\log^k \In)$, this index can be built in $O(\log^k \In)$ time and space, which is subsumed by the cost of building the sub-indexes.
\end{enumerate}

\extraspacing{\bf Query Answering.} The query proceeds according to Algorithm~\ref{alg:ss-rejected-batch}.
\begin{enumerate}[leftmargin=*]
    \item {\bf Stage 1: Meta-sampling.} We query the meta-index to obtain indices $\mathbf{I}$. The expected time is $O(1 + \mu_{\Psi'})$, where $\mu_{\Psi'} = \sum_{\bj} \fq(\bj)$. As shown in the main text, $\mu_{\Psi'} \le \mu_\Psi + 1$.
    
    \item {\bf Stage 2: Sub-sampling.} For each $\bj \in \mathbf{I}$, we draw samples from $Q_{\bj}$. Since each sub-instance is either light or $2^k$-uniform, the expected cost per selected instance is proportional to the number of samples times the \DA access cost $O(\log N)$. Following the analysis in Lemma~\ref{lm:batched-index}, the total expected cost is $O((1 + \mu_\Psi) \log \In)$.
\end{enumerate}
Summing these gives the total expected query time $O(1 + \mu_\Psi \log \In)$.
\end{proof}

 \subsection{Preprocessing for Optimized Algorithm}
 We will show how to compute the statistics to build the desired index. The complete pseudocode is provided in Algorithm~\ref{alg:preprocess}. Below, we show the intuition in detail. Our preprocessing phase performs computation in a bottom-up way. We note that $M$-values are defined on top of $W$-values, hence for each node $R_i \in Q$, $M$-values will be computed according to (\ref{eq:M}) straightforwardly once $W$-values are well computed. Below, we focus on computing $W$-values.
   
   Consider a leaf node $R_i$. By definition of (\ref{eq:W}), we have $W^\emptyset_{i,\bu}(\phi(\bu))=1$ and $W^\emptyset_{i,\bu}(l)=0$ for any $l \in \llbracket L-1 \rrbracket\setminus\{\phi(\bu)\}$. Consider an internal node $R_i$. Suppose $W$-values and $M$-values are well defined for each child $R_j \in \mathcal{C}_i$. We next compute $W$-values for each tuple $\bu \in R_i$ and each child node $R_j \in \C_i$. More specifically, we compute $W^{j}_{i, \bu}$ in a decreasing ordering of nodes in $\C_i$. Recall that for the largest node $R_{j^*} \in \C_i$, $W^{j^*}_{i, \bu}(0)=1$ and $W^{j^*}_{i, \bu}(\ell)=0$ for $\ell \in [L-1]$. We next compute $W^{j}_{i, \bu}(\cdot)$, assuming $W^{\textsf{next}(j)}_{i,\bu}(\cdot)$ has been computed. %for any other child node $R_{j'} \in \C_i$ coming after $R_j$, i.e., $R_j \prec R_{j'}$. 
   We distinguish the following two cases on the score $\ell$ (recall that $\ell \in \llbracket L-1\rrbracket$): 
\begin{itemize}[leftmargin=*]
    \item {\bf Case 1: $\ell < \phi(\bu)$}. In this case, $W^j_{i,\bu}(\ell) =0$ since any join result participated by $\bu$ has its score at least $\phi(\bu)$;
    \item {\bf Case 2: $\phi(\bu) \le \ell < L$.} First, the Cartesian product of join result in each subtree $\T_{j'}$ rooted at any child node $R_{j'} \in \C_i$ coming no earlier than $R_j$, that can be joined with $\bu$, is essentially the join result in $\T^j_i$ that can be joined with $\bu$. We also need to take the scores into consideration---all possible combinations of $\langle \ell_{j'}\in \llbracket L-1 \rrbracket: R_{j'} \in \mathcal{C}_i, R_j \preceq R_{j'}\rangle$ with $\sum_{R_{j'} \in \mathcal{C}_i} \ell_{j'} = \ell - \phi(\bu)$. Each combination will contribute $\prod_{R_{j'} \in \mathcal{C}_i: R_j \preceq R_{j'}} M_{j',\bu[\key(j')]}(\ell_{j'})$ to $W^j_{i,\bu}(\ell)$. However, the number of such combinations is as large as $O(L^k)$, which translates to $O(\log^{k} \In)$ time even for computing a single $W$-value. A critical observation is that the products of $M$-values over $R_{j'}$ with $R_j \prec R_{j'}$ is actually captured by $W^{\textsf{next}(j)}_{i, \bu}$, where $R_{\textsf{next}(j)}$ is the immediately next node of $R_j$ in $\C_i$. Putting these observations together, we obtain:
    \begin{align}
    \label{W-recursive-apd}
        W^j_{i,\bu}(\ell) = & 
         \displaystyle{\sum_{\substack{\langle \ell_{j'}\in \llbracket L - 1\rrbracket: R_{j'} \in \mathcal{C}_i, R_j \preceq R_{j'}\rangle: \\
         \ell - \phi(\bu) = \sum_{R_{j'} \in \mathcal{C}_i} \ell_{j'}}} \ \prod_{R_{j'} \in \mathcal{C}_i: R_j \preceq R_{j'}} M_{j',\bu[\key(j')]}(\ell_{j'})} \nonumber \\
        = & \sum_{\substack{(\ell_1,\ell_2) \in \llbracket \ell - \phi(\bu) \rrbracket: \\ \ell_1 + \ell_2 = \ell - \phi(\bu)}} M_{j,\bu[\key(j)]}(\ell_1)\cdot  \displaystyle{\sum_{\substack{\langle \ell_{j'}\in \llbracket L-1 \rrbracket: R_{j'} \in \mathcal{C}_i, R_j \prec R_{j'}\rangle: \\ \ell_2 = \sum_{R_j \in \mathcal{C}_i} \ell_j}} \ \prod_{R_{j'} \in \mathcal{C}_i: R_j \prec R_{j'}} M_{j',\bu[\key(j')]}(\ell_{j'})} \nonumber \\
        =& \sum_{\substack{(\ell_1,\ell_2) \in \llbracket \ell - \phi(\bu) \rrbracket: \\ \ell_1 + \ell_2 = \ell - \phi(\bu)}} M_{j,\bu[\key(j)]}(\ell_1)\cdot W^{\textsf{next}(j)}_{i, \bu}(\ell_2).
    \end{align}
\end{itemize}
Hence $W^j_{i,\bu}(\ell)$ can computed by (\ref{W-recursive-apd}). At that point, the $M$-values for the child node $R_j$ and the $W$-values for the subsequent node $R_{\textsf{next}(j)}$ have already been computed. 
Note that $W^\emptyset_{i,\bu}(\ell)$ can be computed similarly by (\ref{W-recursive-apd}):
$$W^\emptyset_{i,\bu}(\ell) = \sum_{\substack{(\ell_1,\ell_2) \in \llbracket \ell - \phi(\bu) \rrbracket: \\ \ell_1 + \ell_2 = \ell - \phi(\bu)}} M_{j^\circ,\bu[\key(j^\circ)]}(\ell_1)\cdot W^{\textsf{next}(j^\circ)}_{i, \bu}(\ell_2),$$
where $R_{j^\circ}$ is the first node in $\C_i$.

To speed up the process, let's fix a node $R_i \in Q$, a tuple $\bu \in R_i$, and a child node $R_j \in \C_i$ as an example. Now, we need to compute $W^\emptyset_{i,\bu}(\cdot)$ for $L$ different scores. There is a convolution structure between $W^{j}_{i, \bu}(\cdot)$, $M_{j, \key(j)}(\cdot)$ and $W^{\textsf{next}(j)}_{i, \bu}(\cdot)$, so we apply the Fast Fourier Transform for computing these $L$ scores together. As we will show in the analysis of the preprocessing step, this observation improves the preprocessing time from $O(\In L^2)$ to $O(\In L\log L)$ (Lemma~\ref{lem:preprocessing}). 

After all $W$-values are computed, for each internal node $R_i\in Q$ with each child node $R_j \in \C_i$, each value $\bv \in R_i[\key(i)]$, and each score $\ell \in \llbracket L-1\rrbracket$, we build a prefix-sum array on $\langle W^j_{i, \bu}(\ell): \bu \in R_i\ltimes \bv\rangle$ under the pre-determined ordering. 

Lastly, we compute the bucket size $|\B_\ell|$ for each score $\ell \in \llbracket L-1\rrbracket$, which by definition can be rewritten as $|\B_\ell| = \sum_{\bu \in R_r} W^\emptyset_{r,\bu}(\ell)$ for the root node $r$.

\begin{algorithm}[h]
    \caption{\textbf{Preprocess$(Q)$}}
    \label{alg:preprocess}
    \KwIn{Join tree $\T$, Relations $\{R_i\}_{R_i \in \T}$.}
    \ForEach{node $R_i \in \T$ in bottom-up order}{
        \If{$R_i$ is a leaf node}{
            \ForEach{tuple $\bu \in R_i$}{
                Set $W^\emptyset_{i,\bu}(\phi(\bu)) \gets 1$ and $W^\emptyset_{i,\bu}(\ell) \gets 0$ for $\ell \neq \phi(\bu)$\;
                $M_{i, \bu[\key(i)]} \gets M_{i, \bu[\key(i)]} + W^\emptyset_{i,\bu}$\tcp*[r]{element-wise vector addition}
            }
        }
        \Else(\tcp*[h]{$R_i$ is an internal node}){
            \ForEach{tuple $\bu \in R_i$}{
                Let $R_{j^*}$ be the last child in $\C_i$\;
                Set $W^{\textsf{next}(j^*)}_{i,\bu}(0) \gets 1$ and $W^{\textsf{next}(j^*)}_{i,\bu}(\ell) \gets 0$ for $\ell > 0$ \;
                \ForEach{child $R_j \in \C_i$ in decreasing order}{
                    Let $R_{\textsf{next}(j)}$ be the child immediately following $R_j$\;
                    \ForEach{score $\ell \in \llbracket L-1\rrbracket$}{
                        \If{$\ell < \phi(\bu)$}{
                            $W^j_{i,\bu}(\ell) \gets 0$\;
                        }
                        \Else{
                            $W^j_{i,\bu}(\ell) \gets \sum_{k=0}^{\ell - \phi(\bu)} M_{j, \bu[\key(j)]}(k) \cdot W^{\textsf{next}(j)}_{i, \bu}(\ell - \phi(\bu) - k)$\;
                        }
                    }
                }
                $M_{i, \bu[\key(i)]} \gets M_{i, \bu[\key(i)]} + W^\emptyset_{i,\bu}$\tcp*[r]{element-wise vector addition}
            }
            \ForEach{child $R_j \in \C_i$}{
                \ForEach{$\bv \in R_i[\key(i)]$ and $\ell \in \llbracket L-1\rrbracket$}{
                     Build prefix-sum array on $\langle W^j_{i, \bu}(\ell): \bu \in R_i\ltimes \bv\rangle$\;
                }
            }
        }
    }
\end{algorithm}

\subsection{Analysis of Optimized Algorithm}
\label{sec:analysis-optimized}
\newcommand{\bw}{\bm{w}}
% \noindent {\bf Correctness and Complexities.}
First, we establish the correctness of Algorithm~\ref{alg:fast-recursive}. 
Specifically, we show that for any call of $\DA(S_\ell, j)$ in Algorithm~\ref{alg:ss-rejected-batch},
Algorithm~\ref{alg:fast-recursive} returns the $\tau$-th tuple in $\mathcal{B}_\ell$, according to a fixed total ordering of the tuples in each bucket $\mathcal{B}_\ell$. Recall from Section~\ref{sec:revisited} that the actual ordering of the tuples is irrelevant, as long as it is fixed. Since Algorithm~\ref{alg:fast-recursive} is deterministic, it suffices to prove that for any two distinct calls of $\DA(S_{\hat{\ell}}, \hat{\tau})$ and $\DA(S_{\bar{\ell}}, \bar{\tau})$ in Algorithm~\ref{alg:ss-rejected-batch},
Algorithm~\ref{alg:fast-recursive} returns two distinct tuples $\hat{\bw}$ and $\bar{\bw}$ from $\join(Q)$.

Formally, we prove the correctness of Algorithm~\ref{alg:fast-recursive} using the following lemma.

\begin{lemma}
Let $\DA(S_{\hat{\ell}}, \hat{\tau})$ and $\DA(S_{\bar{\ell}}, \bar{\tau})$ be two distinct calls in Algorithm~\ref{alg:ss-rejected-batch}.
Let $\hat{\bw}\in \join(Q)$ be the tuple returned by Algorithm~\ref{alg:fast-recursive} on input $(S_{\hat{\ell}}, \hat{\tau})$, and let $\bar{\bw}\in \join(Q)$ be the tuple returned on input $(S_{\bar{\ell}}, \bar{\tau})$. Then $\hat{\bw}\neq \bar{\bw}$.
\end{lemma}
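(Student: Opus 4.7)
The plan is to prove a stronger, structural statement by induction on the size of the subtree $\T_i^j$, namely that for every fixed node $R_i$, child slot $j \in \C_i \cup \{\emptyset\}$, compatible tuple $\bv$, and score $\ell$, the mapping
\[
\tau \;\longmapsto\; \textsc{RecursiveAccess}(i,j,\bv,\ell,\tau)
\]
is a bijection from $[N^j_{i,\bv,\ell}]$ onto the set of join results in $\bigl(\Join_{R_h \in \T^j_i} R_h\bigr) \ltimes \bv$ with score exactly $\ell$, where $N^j_{i,\bv,\ell} = \sum_{\bu\in R_i \ltimes \bv} W^j_{i,\bu}(\ell)$. Once this is proved, the lemma follows immediately: if $\hat{\ell}\neq \bar{\ell}$, then $\hat{\bw}$ and $\bar{\bw}$ have different scores and are therefore distinct; and if $\hat{\ell}=\bar{\ell}$ but $\hat{\tau}\neq \bar{\tau}$, the two outputs are distinct by injectivity of the mapping at the top-level call $(r,\emptyset,\emptyset,\ell,\cdot)$.

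First I would handle the base case, when $R_i$ is a leaf: then $W^\emptyset_{i,\bu}(\ell)\in\{0,1\}$, so the prefix-sum search in Line~\ref{locate-u} sends distinct $\tau\in[N^\emptyset_{i,\bv,\ell}]$ to distinct tuples $\bu\in R_i\ltimes \bv$ with $\phi(\bu)=\ell$, and every such tuple is hit exactly once. For the inductive step I would track what happens to $\tau$ through the three decompositions the algorithm performs:
\begin{enumerate}
\item[(i)] The binary search in Line~\ref{locate-u} partitions $[N^j_{i,\bv,\ell}]$ into consecutive blocks of lengths $W^j_{i,\bu'}(\ell)$, one per $\bu'\in R_i\ltimes \bv$; Line~\ref{correct-index-u} re-indexes $\tau$ inside its block, so the reduced problem is to enumerate the score-$\ell$ join results in $\T^j_i$ that extend the chosen $\bu$.
\item[(ii)] The lexicographic search in Line~\ref{locate-score-pairs}, together with the update in Line~\ref{correct-index-score-pairs}, further partitions this block according to the recursion~\eqref{W-recursive}, isolating a single score split $(\ell_1,\ell_2)$ with $\ell_1+\ell_2 = \ell-\phi(\bu)$.
\item[(iii)] Lines~\ref{left-index}--\ref{right-index} use the standard row-column bijection $[ab]\leftrightarrow [a]\times[b]$ with $a=M_{j,\bu[\key(j)]}(\ell_1)$ and $b=W^{\textsf{next}(j)}_{i,\bu}(\ell_2)$ to map the residual $\tau$ to a unique pair $(\tau_1,\tau_2)$.
\end{enumerate}
Applying the inductive hypothesis to the two recursive calls in Lines~\ref{call-one}--\ref{call-two} then gives a unique tuple $\bu_1$ joinable with $\bu$ on $\key(j)$ with score $\ell_1$ from $\T_j$, and a unique tuple $\bu_2$ joinable with $\bu$ with score $\ell_2$ from $\T^{\textsf{next}(j)}_i$.

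It remains to argue that the combination step $\bu_1\Join \bu\Join \bu_2$ is injective in $(\bu,\ell_1,\ell_2,\tau_1,\tau_2)$ and lands in the correct target set. Injectivity is straightforward: different choices of $\bu$ yield different projections onto $\schema(R_i)$; with $\bu$ fixed but $(\ell_1,\ell_2)\neq (\ell_1',\ell_2')$, the two sub-results carry different scores, so the combined tuples differ; and with $(\bu,\ell_1,\ell_2)$ fixed but $(\tau_1,\tau_2)\neq (\tau_1',\tau_2')$, the inductive hypothesis yields $\bu_1\neq \bu_1'$ or $\bu_2\neq \bu_2'$, which forces the full joined tuples to differ on the attribute set they span. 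Combined with the score-additivity identity $\bar{\phi}(\bu_1\Join\bu\Join\bu_2)=\phi(\bu)+\ell_1+\ell_2=\ell$, this gives a well-defined map into the correct target set, and comparing cardinalities via~\eqref{W-recursive} shows the map is surjective, hence bijective.

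The main obstacle I foresee is purely bookkeeping: matching the combinatorial identity~\eqref{W-recursive} exactly to the three-level partition $\text{(tuple } \bu) \to \text{(score split } (\ell_1,\ell_2)) \to \text{(Cartesian pair } (\tau_1,\tau_2))$, so that every integer in $[N^j_{i,\bv,\ell}]$ is accounted for exactly once and ends up in a reduced subproblem whose counter is strictly smaller (ensuring well-foundedness of the induction). Once the identity and the invariant ``the score of the returned tuple is $\ell$ and it joins with $\bv$'' are set up carefully, the injectivity and surjectivity arguments reduce to inspecting each of the three steps (i)--(iii) and invoking the inductive hypothesis on the two strictly smaller subtrees.
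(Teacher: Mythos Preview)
Your proposal is correct and, if anything, proves a strictly stronger statement than the paper does. The paper's proof is more elementary and less structural: rather than setting up a bijection, it simply runs the two calls $(\hat{\ell},\hat{\tau})$ and $(\bar{\ell},\bar{\tau})$ through Algorithm~\ref{alg:fast-recursive} in parallel and argues, by strong induction on the height of the join tree, that the two executions must diverge. For $\hat{\ell}\neq\bar{\ell}$ it first shows (again by induction on height) that the returned tuple always lands in $\mathcal{B}_{\ell}$, so different scores force different outputs. For $\hat{\ell}=\bar{\ell}$ and $\hat{\tau}\neq\bar{\tau}$ it walks both executions through Lines~\ref{locate-u}--\ref{right-index}: either the chosen $\bu$'s differ, or the chosen score pairs $(\ell_1,\ell_2)$ differ, or the residual indices $(\tau_1,\tau_2)$ differ, and in each case the outputs are distinct (the last case by the inductive hypothesis, with an explicit modular-arithmetic check that $\hat{\tau}\neq\bar{\tau}$ forces $(\hat{\tau}_1,\hat{\tau}_2)\neq(\bar{\tau}_1,\bar{\tau}_2)$).

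The trade-off is this: your bijection argument is cleaner conceptually and yields surjectivity for free, but requires you to track cardinalities and match them against~\eqref{W-recursive}, which is exactly the bookkeeping you flag as the main obstacle. The paper's two-execution trace avoids the cardinality bookkeeping entirely --- it never needs surjectivity --- at the cost of a slightly ad hoc case analysis, including a small computation to show that when $\hat{\tau}_1=\bar{\tau}_1$ one necessarily has $\hat{\tau}_2\neq\bar{\tau}_2$. Either route works; yours is the more informative one.
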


\begin{proof}
We note that due to the correctness of preprocessing, all the $W$ and $M$-values are already correctly computed. 

We first assume that $\hat{\ell}\neq \bar{\ell}$. We show that $\hat{\bw}\in \mathcal{B}_{\hat{\ell}}$ and $\bar{\bw}\in \mathcal{B}_{\bar{\ell}}$. Consider any call of $\DA(S_{\ell'}, \tau')$ in Algorithm~\ref{alg:ss-rejected-batch}. We claim that Algorithm~\ref{alg:fast-recursive} always returns a join result from $\mathcal{B}_{\ell'}$. We prove this by strong induction on the height of the join tree $\T$.
If the join tree has a single leaf node $R_i$ (height $1$), then in Line~1 the algorithm only considers tuples of $R_i$ with score $\ell'$, so the returned tuple belongs to $\mathcal{B}_{\ell'}$.
Assume that the claim holds for all join trees of height at most $h$. Let $R_i$ be the root of a join tree of height $h+1$. In Line~4, the algorithm considers pairs $(\ell_1,\ell_2)$ such that $\ell_1+\ell_2+\phi(\bu)=\ell'$. By the induction hypothesis, we have $\phi(\bu_1)=\ell_1$ and $\phi(\bu_2)=\ell_2$, and therefore $\phi(\bu_1\Join \bu\Join \bu_2)=\ell'$. This proves the claim.

Applying this argument to $\DA(S_{\hat{\ell}},\hat{\tau})$ and $\DA(S_{\bar{\ell}},\bar{\tau})$, we obtain $\hat{\bw}\in \mathcal{B}_{\hat{\ell}}$ and $\bar{\bw}\in \mathcal{B}_{\bar{\ell}}$. Since $\hat{\ell}\neq \bar{\ell}$, we have $\mathcal{B}_{\hat{\ell}}\cap \mathcal{B}_{\bar{\ell}}=\emptyset$, and thus $\hat{\bw}\neq \bar{\bw}$.

We now assume that $\hat{\ell}=\bar{\ell}=\ell^*$ and $\hat{\tau}\neq \bar{\tau}$. Without loss of generality, suppose that $\hat{\tau}>\bar{\tau}$. We prove the claim by strong induction on the height of the join tree.

If the join tree consists of a single leaf node $R_i$ (height $1$), let $\hat{\bu}$ (resp., $\bar{\bu}$) be the smallest tuple found in Line~1 for the pair $(\hat{\ell},\hat{\tau})$ (resp., $(\bar{\ell},\bar{\tau})$). For every $\bu'\in R_i$, the value $W_{i,\bu'}^j(\ell^*)$ is equal to $1$ if $\phi(\bu')=\ell^*$ and $0$ otherwise. Hence, the sum $\sum_{\bu' \in R_i : \bu' \preceq \bu} W^j_{i,\bu'}(\ell^*)$ increases by one for each new qualifying tuple. Since $\hat{\tau}\neq \bar{\tau}$, we conclude that $\hat{\bu}\neq \bar{\bu}$ and therefore $\hat{\bw}\neq \bar{\bw}$.

Assume that the claim holds for all join trees of height at most $h$. Let $R_i$ be the root of a join tree of height $h+1$. If $\hat{\bu}\neq \bar{\bu}$ in Line~1, then clearly $\hat{\bw}\neq \bar{\bw}$. Hence, we assume that $\hat{\bu}=\bar{\bu}=\bu$. The sum $\sum_{\bu' \in R_i : \bu' \preceq \bu} W^j_{i,\bu'}(\ell^*)$ is identical for both instances, so $\hat{\tau}$ and $\bar{\tau}$ are reduced by the same amount in Line~2. By a slight abuse of notation, we again denote the updated values by $\hat{\tau}$ and $\bar{\tau}$.
Let $(\hat{\ell}_1,\hat{\ell}_2)$ and $(\bar{\ell}_1,\bar{\ell}_2)$ be the smallest pairs found in Line~5 for the inputs $(\hat{\ell},\hat{\tau})$ and $(\bar{\ell},\bar{\tau})$, respectively. If $\hat{\ell}_1\neq \bar{\ell}_1$ or $\hat{\ell}_2\neq \bar{\ell}_2$, then by the first part of the proof we immediately obtain $\hat{\bw}\neq \bar{\bw}$. Hence, we assume that $\hat{\ell}_1=\bar{\ell}_1=\ell_1$ and $\hat{\ell}_2=\bar{\ell}_2=\ell_2$.
The sum $\sum_{(\ell'_1,\ell'_2) \in \Phi: (\ell'_1,\ell'_2) \prec (\ell_1, \ell_2)} M_{j, \bu[\key(j)]}(\ell'_1) \cdot W^{\textsf{next}(j)}_{i, \bu}(\ell'_2)$ is identical for both instances, so $\hat{\tau}$ and $\bar{\tau}$ are again reduced by the same amount in Line~6.

We now reach the final stage of the algorithm, where the smallest tuples and score pairs coincide, but $\hat{\tau}>\bar{\tau}$. We compute $\hat{\tau}_1, \hat{\tau}_2$ and $\bar{\tau}_1, \bar{\tau}_2$ as in Lines~7 and~8. The denominator $W_{i,\bu}^{\mathsf{next}(j)}(\ell_2)$ in Line~7 is the same in both executions. If $\hat{\tau}_1\neq \bar{\tau}_1$, then by the induction hypothesis $\hat{\bu}_1\neq \bar{\bu}_1$, which implies $\hat{\bw}\neq \bar{\bw}$.
Otherwise, assume that $\hat{\tau}_1=\bar{\tau}_1=\tau_1$. There are two cases. If neither $\hat{\tau}$ nor $\bar{\tau}$ is divisible by $W_{i,\bu}^{\mathsf{next}(j)}(\ell_2)$, then
$\hat{\tau}_2=((\hat{\tau}-1)\bmod W_{i,\bu}^{\mathsf{next}(j)}(\ell_2))+1=\hat{\tau}\bmod W_{i,\bu}^{\mathsf{next}(j)}(\ell_2)>0$
and
$\bar{\tau}_2=((\bar{\tau}-1)\bmod W_{i,\bu}^{\mathsf{next}(j)}(\ell_2))+1=\bar{\tau}\bmod W_{i,\bu}^{\mathsf{next}(j)}(\ell_2)>0$.
Since $\hat{\tau}>\bar{\tau}$, it follows that $\hat{\tau}_2>\bar{\tau}_2$, and by the induction hypothesis $\hat{\bu}_2\neq \bar{\bu}_2$, again implying $\hat{\bw}\neq \bar{\bw}$.
In the second case, $\hat{\tau}$ is divisible by $W_{i,\bu}^{\mathsf{next}(j)}(\ell_2)$ but $\bar{\tau}$ is not. Then
$\hat{\tau}_2=((\hat{\tau}-1)\bmod W_{i,\bu}^{\mathsf{next}(j)}(\ell_2))+1=W_{i,\bu}^{\mathsf{next}(j)}(\ell_2)$
while
$\bar{\tau}_2=((\bar{\tau}-1)\bmod W_{i,\bu}^{\mathsf{next}(j)}(\ell_2))+1=\bar{\tau}\bmod W_{i,\bu}^{\mathsf{next}(j)}(\ell_2) < W_{i,\bu}^{\mathsf{next}(j)}(\ell_2)$.
Hence $\hat{\tau}_2>\bar{\tau}_2$, and by the induction hypothesis $\hat{\bu}_2\neq \bar{\bu}_2$, which concludes that $\hat{\bw}\neq \bar{\bw}$.
\end{proof}

Next, we bound the preprocessing time, the space, and the query time of our index.

\begin{lemma}
    \label{lem:preprocessing}
    The preprocessing step takes $O(\In \log \In \log\log \In)$ time, and $O(\In \log\In)$ space.
\end{lemma}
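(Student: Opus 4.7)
\medskip

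\noindent\textbf{Proof proposal for Lemma~\ref{lem:preprocessing}.} The plan is to analyze space and time separately, bounding each quantity computed by Algorithm~\ref{alg:preprocess} in a bottom-up sweep of the join tree $\T$. Throughout, I will treat $k=|Q|$ and $|\C_i|$ as constants (data complexity), and use $L=O(\log N)$.

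\medskip

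\noindent\textbf{Space.} For each relation $R_i$, each tuple $\bu\in R_i$, each child $R_j\in\C_i\cup\{\emptyset\}$, and each score $\ell\in\llbracket L-1\rrbracket$, we store one value $W^{j}_{i,\bu}(\ell)$. Since $|\C_i|=O(1)$, this contributes $O(|R_i|\cdot L)$ words per relation, so $O(N\log N)$ overall. The $M$-values $M_{i,\bv}(\ell)$ over all $\bv\in R_i[\key(i)]$ and all $\ell$ contribute $O(|R_i|\cdot L)$ words (since $|R_i[\key(i)]|\le|R_i|$). Each prefix-sum array over $\langle W^j_{i,\bu}(\ell):\bu\in R_i\ltimes\bv\rangle$ has size $O(|R_i\ltimes\bv|)$; summing over $\bv$, $j$, and $\ell$ gives $O(|R_i|\cdot L)$. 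Adding across relations yields the claimed $O(N\log N)$ space bound.

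\medskip

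\noindent\textbf{Time.} For leaf nodes, initializing $W^\emptyset_{i,\bu}$ and updating $M_{i,\bu[\key(i)]}$ takes $O(L)$ work per tuple, total $O(|R_i|L)$. For an internal node $R_i$, the nontrivial cost comes from evaluating the recursion~(\ref{W-recursive-apd}) for each tuple $\bu\in R_i$ and each child $R_j\in\C_i\cup\{\emptyset\}$. A direct evaluation costs $\Theta(L^2)$ per $(\bu,j)$, yielding $O(NL^2)=O(N\log^2 N)$ overall, which is too slow. The key step is to observe that~(\ref{W-recursive-apd}) is a discrete convolution of the two length-$L$ vectors $M_{j,\bu[\key(j)]}(\cdot)$ and $W^{\textsf{next}(j)}_{i,\bu}(\cdot)$ (shifted by $\phi(\bu)$). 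Hence we compute the entire vector $W^{j}_{i,\bu}(\cdot)$ in a single FFT-based convolution in $O(L\log L)$ time. Summing over all tuples and children gives $O(N\cdot L\log L)=O(N\log N\log\log N)$. Updating $M$-values by one vector addition per tuple, and building the prefix-sum arrays, are lower-order $O(N\log N)$ terms. Combining everything yields the claimed $O(N\log N\log\log N)$ preprocessing time.

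\medskip

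\noindent\textbf{Main obstacle.} The only nontrivial point is recognizing and exploiting the convolution structure; once that is noted, all remaining accounting is straightforward linear summation over the join tree. A secondary subtlety is verifying that, thanks to the processing order over children of $R_i$ (decreasing in the sibling order, with the base case $W^{\textsf{next}(j^*)}_{i,\bu}$), the second operand $W^{\textsf{next}(j)}_{i,\bu}(\cdot)$ is already available at the moment $W^{j}_{i,\bu}(\cdot)$ is computed, so each FFT convolution has both inputs of length $O(L)$, giving the $O(L\log L)$ cost as required. No other step exceeds the stated bounds.
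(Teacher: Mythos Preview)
Your proposal is correct and follows essentially the same approach as the paper: both argue $O(NL)$ space by counting the $W$- and $M$-values, and both obtain the $O(NL\log L)$ time bound by recognizing that~(\ref{W-recursive}) is a length-$L$ convolution and applying FFT once per $(\bu,j)$ pair. Your treatment is in fact slightly more thorough in that it explicitly accounts for the prefix-sum arrays and the availability of $W^{\textsf{next}(j)}_{i,\bu}$ via the child-processing order, points the paper's proof leaves implicit.
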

\begin{proof}[Proof of Lemma~\ref{lem:preprocessing}]
    For any relation $R_i$, and for every tuple $\bu \in R_i$, we store $O(|C_i|\cdot L) = O(L)$ different $W$ values. So, in total, the space needed for storing all the $W$ values is $O(\In L)$. Moreover, for each $\bv \in R_i[\key(i)]$, we store $O(L)$ different $M$ values. So, in total, the space needed for storing the $M$ values is $O(\In L)$. Hence, the overall space used is $O(\In L) = O(\In \log\In)$. 

    To compute each $W$ value, we use Equation (\ref{W-recursive}). Calculating each $W$ value by (\ref{W-recursive}), needs a summation of $O(L)$ different terms, leading to an $O(NL^2)$ overall running time for calculating all the $W$ values naively. To speed up the process, let's fix a node $R_i \in Q$, a tuple $\bu \in R_i$, and a child node $R_j \in \C_i$ as an example. Now, we need to compute $W^j_{i,\bu}(\cdot)$ for $L$ different scores. There is a convolution structure between $W^{j}_{i, \bu}(\cdot)$, $M_{j, \key(j)}(\cdot)$ and $W^{\textsf{next}(j)}_{i, \bu}(\cdot)$ in Equation~(\ref{W-recursive}), so we can apply the Fast Fourier Transform for computing these $L$ scores together. This can improve the $O(L^2)$ computations to $O(L\log L)$. More formally, consider the multiplication of two degree $L$ polynomials:
    $$p(x) = 
    %\left(
    M_{j,\bu[\key(j)]}(L-1)x^{L-1} + M_{j,\bu[\key(j)]}(L - 2)x^{L - 2} + \cdots + M_{j,\bu[\key(j)]}(0)x^{0}
    %\right)
    ,$$
    and,
    $$q(x) = 
    %(
    W^{\textsf{next}(j)}_{i, \bu}(L-1)x^{L-1} + W^{\textsf{next}(j)}_{i, \bu}(L - 2)x^{L - 2} + \cdots + W^{\textsf{next}(j)}_{i, \bu}(0)x^0
    %)
    .$$

    By Equation~(\ref{W-recursive}), notice that for all $\ell \in \llbracket L-1 \rrbracket$, the value $W^j_{i,\bu}(\ell)$ is equal to the coefficient of $x^{\ell - \phi(u)}$ in the polynomial $p(x) \cdot q(x)$. Using the Fast Fourier Transform, we can calculate all the coefficients of the degree $2L$ polynomial $p(x)\cdot q(x)$ in $O(L \log L)$ time. Hence, for a fixed node $R_i \in Q$, a tuple $\bu \in R_i$, and a child node $R_j \in \C_i$, we can find all the $L$ different $W^j_{i,\bu}(\cdot)$ values in $O(L \log L)$ time. Putting everything together, calculating all the $W$ values takes $O(NL\log L) = O(\In \log \In \log\log \In)$.
    Having $W$-values, then calculating $M$-values is straightforward using Equation~(\ref{eq:M}) in $O(NL)$ total time. Therefore, the overall running time of the preprocessing step is $O(\In \log \In \log\log \In)$.
   
\end{proof}

\begin{lemma}
\label{lem:bestRun}
Algorithm~\ref{alg:fast-recursive} runs in $O(\log \In)$ time.  
\end{lemma}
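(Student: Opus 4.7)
The plan is to separate the analysis into two pieces: (i) an amortized bound on the total number of recursive invocations triggered by a top-level call \textsc{RecursiveAccess}$(r, \emptyset, \emptyset, \ell, \tau)$, and (ii) a per-invocation cost bound of $O(\log N)$. Since the schema size of $Q$ is treated as a constant (hence the join tree $\T$ has $O(1)$ nodes), the product will give the desired bound.

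First, I would argue that the recursion traces out at most a constant number of positions in $\T$. The two recursive calls in Lines~\ref{call-one}--\ref{call-two} correspond exactly to the disjoint decomposition $\T_i^j = \T_j \sqcup \T_i^{\textrm{next}(j)}$ that underlies Equation~(\ref{W-recursive-apd}): the first call descends into the subtree rooted at the child $R_j$, while the second moves on to the remaining sibling subtrees under $R_i$. Thus the sequence of pairs $(i,j)$ encountered by the recursion is exactly the set of (node, child-pointer) positions visited by an in-order walk of $\T$, i.e.\ at most $O(\sum_{R_i\in\T}(|\C_i|+1)) = O(|V(\T)|) = O(1)$ invocations in total. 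The base cases in Line~\ref{base-case} (leaf reached) and the convention $\tilde W^{\textrm{next}(j^*)}_{i,\bu}(\cdot)$ for the last sibling (which makes the recursion terminate with a trivial contribution) ensure this walk is finite.

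Next, I would bound the per-invocation work by $O(\log N)$. The first nontrivial step is Line~\ref{locate-u}: locating the smallest $\bu \in R_i \ltimes \bv$ whose cumulative $W^j_{i,\cdot}(\ell)$ covers $\tau$. Because the preprocessing phase stored, for every $(i,j,\bv,\ell)$, a prefix-sum array over $\langle W^j_{i,\bu}(\ell):\bu \in R_i\ltimes \bv\rangle$ in the fixed ordering of $R_i$, this reduces to a binary search on an array of size $|R_i \ltimes \bv| = O(N)$, costing $O(\log N)$ time; the subtraction in Line~\ref{correct-index-u} is then $O(1)$ once the search index is known. The second nontrivial step is Lines~\ref{phi}--\ref{correct-index-score-pairs}: the set $\Phi$ of score pairs $(\ell_1,\ell_2)$ with $\ell_1+\ell_2 = \ell-\phi(\bu)$ has size at most $\ell+1 \le L+1 = O(\log N)$, so iterating through $\Phi$ in lexicographic order while maintaining the running sum of $M_{j,\bu[\key(j)]}(\ell_1') \cdot W^{\textrm{next}(j)}_{i,\bu}(\ell_2')$ takes $O(\log N)$ time (each $M$- and $W$-value is fetched in $O(1)$ via hash lookup on $\bu[\key(j)]$, since tuples of $R_i$ and projections in $R_i[\key(i)]$ are hashed in the data structure). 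The remaining steps (the arithmetic in Lines~\ref{left-index}--\ref{right-index} and the final join of $\bu_1,\bu,\bu_2$) are clearly $O(1)$.

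Combining, each invocation costs $O(\log N)$ work outside of the recursive calls, and the recursion contributes $O(1)$ invocations in total, so the overall running time of \textsc{RecursiveAccess} is $O(\log N)$. The only subtle point I anticipate is ruling out pathological doubling of invocations when many children share the same $\bu$; this is handled automatically because the second recursive call passes $\textrm{next}(j)$, shifting to a strictly later sibling each time, so the number of sibling hops at a fixed node $R_i$ is bounded by $|\C_i|$.
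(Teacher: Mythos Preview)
Your proposal is correct and follows essentially the same approach as the paper: bound the per-invocation work by $O(\log \In)$ (binary search over the prefix-sum array for Line~\ref{locate-u}, a linear scan over the $O(L)=O(\log \In)$ pairs in $\Phi$ for Lines~\ref{locate-score-pairs}--\ref{correct-index-score-pairs}), and then observe that the total number of recursive invocations is a constant depending only on the schema. The paper's own proof is terser---it simply notes that there are at most $k^2$ distinct $(i,j)$ pairs and each is visited once---whereas your in-order-walk argument gives the slightly sharper bound $O\bigl(\sum_{R_i}(|\C_i|+1)\bigr)=O(k)$, but both are $O(1)$ under data complexity so the difference is immaterial.
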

\begin{proof}[Proof of Lemma~\ref{lem:bestRun}]
We will analyze its cost step by step. Line~\ref{locate-u} performs a binary search on the prefix-sum tree for $\In$ tuples, which takes $O(\log \In)$ time. Line~\ref{correct-index-u} can also be efficiently done through the prefix-sum tree in $O(\log \In)$ time, since any prefix sum can decomposed into $O(\log \In)$ canonical nodes in this tree. Line \ref{base-case} takes $O(1)$ time. Lines \ref{locate-score-pairs} - \ref{correct-index-score-pairs} take $O(L) = O(\log \In)$ time, since there are at most $L$ pairs in $\Phi$ to be explored. Line~\ref{left-index} takes $O(1)$ time. Hence, all lines before the recursion takes $O(\log \In)$ time. As there are at most $k^2$ different combinations of $(i,j)$, and the recursion is only invoked once for each combination of $(i,j)$, the total number of recursions is $O(k^2)$. As $k$ is a constant, the total running time is $O(\log  \In)$. 
\end{proof}

Finally, we are ready to prove Theorem~\ref{thm:best}.

\begin{proof}[Proof of Theorem~\ref{thm:best}]
The cost of the preprocessing step directly follows from Lemma~\ref{lem:preprocessing}. Below, we focus on the query answering time.
In answering a subset sampling query, whenever some join result needs to be retrieved from $\mathcal{B}_\ell$, we distinguish the following two cases. If $\ell = L$, we evaluate the full join results in $\mathcal{B}_{\ge L}$ (if these join results have not been computed before) to support the direct access. Otherwise, we call the \textsc{\bf RecursiveAccess} procedure accordingly. For the first case, the expected cost is $O(1)$ because there are at most $O(\In^{\rho^*})$ join results in $\mathcal{B}_{\ge L}$ that takes $O(\In^{\rho^*})$ time to materialize~\cite{atserias2013size}, and the probability that at least a join result is retrieved from $\mathcal{B}_{\ge L}$ is at most 
$1-(1-p_L^+)^{|\mathcal{B}_{\ge L}|}\leq 1-(1-\frac{1}{\In^{2\rho^*}})^{\In^{\rho^*}}\leq \frac{1}{\In^{\rho^*}}$, since $|\mathcal{B}_{\ge L}| = \In^{\rho^*}$, and $p^+_L \le \frac{1}{2^{L}} \leq \frac{1}{\In^{2\rho^*}}$.  
Following Lemma~\ref{lm:nss} and Lemma~\ref{lem:bestRun}, the expected query answering time is $O(1+\mu_\Psi\log \In)$.
\end{proof}

\section{Missing Materials in Section~\ref{sec:oneshot}}
\label{appendix:oneshot}

\noindent{\bf Analysis of Algorithm~\ref{alg:one-shotv2}.} 
The correctness follows that of Algorithm~\ref{alg:fast-recursive}, since we exactly simulate the execution of all invocations of Algorithm~\ref{alg:fast-recursive}. 
The data statistics inherited from Theorem~\ref{thm:best} can be computed in $O(\In\log \In\log\log \In)$ time. All $X$-arrays can be computed in $O(\In\log\In)$ time since for every $\bu\in R_i$ there is a unique $\bv\in R_i[\key(i)]$ such that $\bu\in R_i\ltimes \bv$. All $Y$-arrays can be computed in $O(\In\log^2\In)$ time, since there are $O(\In\log\In)$ arrays and each array has size $O(\log\In)$. Using the precomputed $M$ and $W$ statistics, every value in every $Y$ array is computed in $O(1)$ time.

We fix a table $R_i$ and one of its children $R_j$ in $\T$.
Since $|\P_{i,j}|=O(1+\mu_\Psi)$, Radix sort runs in $O(\In+\mu_{\Psi})$ expected time.
To compute the smallest tuple $\bu$ for each quintuple in $\P_{i,j,\bv,\ell}$, we iterate over all quintuples in $\P_{i,j,\bv,\ell}$ and traverse $X_{i,j,\bv,\ell}$ once, in the worst case. Since for every $\bu \in R_i$ there is a unique $\bv$ with $\bu \in R_i\ltimes\bv$ and every quintuple in $\P_{i,j}$ belongs to a unique group $\P_{i,j,\bv,\ell}$, we have $\sum_{\ell}(|X_{i,j,\bv,\ell}|+|\P_{i,j,\bv,\ell}|)=O(\In\log\In+ \mu_\Psi)$.
Similarly, computing all pairs $(\ell_1,\ell_2)$ requires a single pass over all quintuples in each $\P_{i,j,\bu,\ell}$ and all $Y$ tables. Since there are $O(\In\log\In)$ tables $Y$ of size $O(\log\In)$, the total cost is $O(\In\log^2\In + \mu_\Psi)$ expected time. 
The number of nodes in $\T$ is $k=O(1)$ and every node has $O(1)$ children, so, overall, the running time of the one-shot algorithm is $O(\In\log^2\In + \mu_\Psi)$ expected time. 

%We note that if we do not use Algorithm~\ref{alg:one-shotv2} we can still get a solution for the one-shot problem 

\begin{algorithm}[t]
    \caption{\textsc{\bf BatchRecursiveAccess}$(i, j, \P_{i,j})$}
    \label{alg:one-shotv2}
    $\mathcal{U} \gets \emptyset$\; 
    Sort $\P_{i,j}$ by $\tau$ using Radix sort\;
    Group $\P_{i,j}$ further by $(i,j,\bv,\ell)$ creating groups $\P_{i,j,\bv,\ell}$ using the sorted order by $\P_{i,j}$\;
    \For{$\ell\in\llbracket L-1\rrbracket$}{
        \For{each $(i,j,\bv,\ell,\tau)\in \P_{i,j,\bv,\ell}$ in sorted order\label{forloop0}}{
            $\bu\gets$ first tuple in $X_{i,j,\bv,\ell}$\;
            \lWhile{$X_{i,j,\bv,\ell}(\bu)<\tau$}{
                $\bu\gets$ next tuple in $X_{i,j,\bv,\ell}$
            }
            \lIf{$R_i$ is a leaf node}{Add $\bu$ as $\vec{\bu}(i,j,\bv,\ell,\tau)$ to $\mathcal{U}$}
            $\P_{i,j}\gets (\P_{i,j}-\{(i,j,\bv,\ell,\tau)\})\cup \{(i,j,\bu,\ell, \tau-X_{i,j,\bv,\ell}(\mathsf{prev}(\bu)))\}$\;
        }   
    }
    \lIf{$R_i$ is a leaf node}{\Return $\mathcal{U}$}
     
    Sort $\P_{i,j}$ by $\tau$ using Radix sort\;
    Group $\P_{i,j}$ further by $(i,j,\bv,\ell)$ creating groups $\P_{i,j,\bv,\ell}$ using the sorted order by $\P_{i,j}$.\;
    $\P_{j,\emptyset}\gets\emptyset, \P_{i,\mathsf{next}(j)}\gets\emptyset$\; 

\For{$\ell\in\llbracket L-1\rrbracket$}{
        \For{each $(i,j,\bu,\ell,\tau)\in \P_{i,j,\bu,\ell}$ in sorted order\label{whileloop1}}{
            $(\ell_1,\ell_2)\gets$ first pair in $Y_{i,j,\bu,\ell}$\;
            \lWhile{$Y_{i,j,\bv,\ell}((\ell_1,\ell_2))<\tau$}{
                $(\ell_1,\ell_2) \gets$ the next pair in $Y_{i,j,\bu,\ell}$
            }
            Compute $\tau_1, \tau_2$ based on $\ell_1, \ell_2$ according to Line \ref{left-index} of Algorithm~\ref{alg:fast-recursive}\;
           % $\mathcal{P}_{i,j} \gets \mathcal{P}_{i,j} - \{(i,j,\bu,\ell,\tau)\}$\; 
             $\mathcal{P}_{j,\emptyset} \gets \mathcal{P}_{j,\emptyset} \cup \{(j,\emptyset, \bu,\ell_1,\tau_1)\}$\; 
            $\mathcal{P}_{i,\textrm{next}(j)} \gets \mathcal{P}_{i,\textrm{next}(j)} \cup \{(i,\textrm{next}(j),\bu,\ell_2,\tau_2)\}$\;
        }
        
    }
    $\mathcal{U}\gets \mathcal{U} \cup \textsc{\bf BatchRecursiveAccess}(j, \emptyset, \P_{i,\emptyset})$\;
    $\mathcal{U} \gets \mathcal{U} \cup \textsc{\bf BatchRecursiveAccess}(i, \textrm{next}(j), \P_{i,\textrm{next}(j)})$\;
    %\Return $\bigcup_{(i,j,\bu,\ell,\tau)\in \P_{i,j}}(\vec{\bu_1}[(\ell,\tau)]\Join \vec{\bu}[(\ell,\tau)]\Join \vec{\bu}_2[(\ell,\tau)])$\;
    %$\mathcal{U}\gets\emptyset$\;
    \For{$(i,j,\bu,\ell,\tau)\in \P_{i,j}$}{
    $\vec{\bu}(i,j,\bu,\ell,\tau) \gets \vec{\bu}(j,\emptyset, \bu, \ell_1,\tau_1) \Join \bu \Join \vec{\bu}(i, \textrm{next}(j),\bu,\ell_2,\tau_2)$\; \tcp{$\ell_1,\ell_2,\tau_1,\tau_2$ are computed in Lines 17 and 18 above}
    Add $\vec{\bu}(i,j,\bu,\ell,\tau)$ to $\mathcal{U}$\;
    }
    \Return $\mathcal{U}$\;
\end{algorithm}

\section{Discussion on other functions}
\label{appendix:other-functions}

We next discuss the extension to other aggregation functions. For our algorithms, the results can be extended to support $\textsf{MIN}$, $\textsf{MAX}$, and $\textsf{SUM}$. We will show how to adapt the high-level idea in Section~\ref{sec:first-algorithm} to these functions.

\smallskip
\noindent \textbf{$\textsf{MIN}$ and $\textsf{MAX}$.} Without loss of generality, we consider the function $\textsf{MIN}$.
We assume that a \DA oracle is available for the input join $Q$ (under some fixed ordering), such that it receives an integer $i \in [|\join(Q)|]$, and returns the $i$-th element of $\join(Q)$. Similar to before, the ordering can be arbitrary but must remain consistent across multiple invocations of this oracle. 
Let $L=\lceil2\rho^*\log\In\rceil$. 
%The first step of partitioning the tuples in each relation is the same as with the product function.
The first step of partitioning join result is exactly the same as before. 
The second step of applying subset sampling to all sub-instances also follows. Let $\bj = (j_1,j_2,\ldots,j_k)$. Every join result in the sub-instance $Q_{\bj}$ has the probability $\min_{i\in[k]} 2^{-\bj_i} = 2^{-\max_{i\in[k]} j_i}$ to be sampled. We first point out the following observation on each sub-instance in such a partition: 

\begin{lemma}[Either Light or Near-uniform Sub-instance] 
\label{lem:uniform-light-min}
Let $\bj = (j_1,j_2,\ldots,j_k)$. If $\max{i\in[k]} j_i \ge L$, the instance $\Psi_{\bj}$ is light, and otherwise, it is $2$-uniform. 
\end{lemma}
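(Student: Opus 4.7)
The plan is to mirror the structure of the proof of Lemma~\ref{lem:uniform-light}, with the aggregation function replaced by $\textsf{MIN}$. The central observation is that the bucketing of tuples already gives tight coordinatewise bounds on each component weight: every tuple $\bu \in R_i^{\langle j_i \rangle}$ satisfies $\fp_i(\bu) \in (2^{-j_i-1}, 2^{-j_i}]$ when $j_i < L$, and $\fp_i(\bu) \in (0, 2^{-L}]$ when $j_i = L$. Since $\min$ preserves coordinatewise bounds, the aggregated weight of any join result $\bu \in \join(Q_{\bj})$ inherits both an upper and a lower bound in terms of the $j_i$'s.

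First, I would dispatch the light case. Assume $\max_{i\in[k]} j_i \ge L$, and let $i^\star$ be an index attaining the maximum, so $j_{i^\star} = L$. Then every component $\fp_{i^\star}(\bu[\schema(R_{i^\star})])$ is at most $2^{-L}$, and hence $\fp(\bu) = \min_{i}\fp_i(\bu[\schema(R_i)]) \le 2^{-L}$ for every $\bu \in \join(Q_{\bj})$. Using $L = \lceil 2\rho^*\log \In\rceil$ together with the AGM bound $|\join(Q)| \le \In^{\rho^*}$, we get $2^{-L} \le 1/\In^{\rho^*} \le 1/|\join(Q)| \le 1/|\join(Q_{\bj})|$, which is exactly the definition of a light instance.

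Second, I would handle the near-uniform case. Assume $\max_{i\in[k]} j_i < L$, and set $m = \max_{i\in[k]} j_i$. Then every $j_i$ lies in $\llbracket L-1\rrbracket$, so for any $\bu \in \join(Q_{\bj})$ and every $i$ we have $\fp_i(\bu[\schema(R_i)]) \in (2^{-j_i-1}, 2^{-j_i}]$. Taking the minimum over $i$ yields $\fp(\bu) \in (2^{-m-1}, 2^{-m}]$, because the coordinate attaining $j_i = m$ forces $\fp(\bu) \le 2^{-m}$ while every coordinate is strictly greater than $2^{-j_i-1} \ge 2^{-m-1}$. Consequently, $\max_{\bu}\fp(\bu) \le 2 \cdot \min_{\bu}\fp(\bu)$, i.e., $\Psi_{\bj}$ is $2$-uniform.

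I do not expect a genuine obstacle in this proof; the argument is essentially a coordinatewise translation of the $\textsf{PRODUCT}$ case. The one conceptual difference worth flagging is that under $\textsf{PRODUCT}$ it is the sum $\sum_i j_i$ that controls the aggregated weight, whereas under $\textsf{MIN}$ the relevant quantity is $\max_i j_i$; this is precisely what makes the uniformity constant improve from $2^k$ (in Lemma~\ref{lem:uniform-light}) to $2$ here, as only the single ``bottleneck'' coordinate contributes to the ratio between the largest and smallest join-result probabilities. The symmetric argument for $\textsf{MAX}$ is identical, with the roles of $\min$ and $\max$ exchanged and $\min_i j_i$ replacing $\max_i j_i$.
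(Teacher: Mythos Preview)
Your proposal is correct and follows essentially the same approach as the paper's informal argument, which simply observes that under $\textsf{MIN}$ the probability of any join result in $Q_{\bj}$ is controlled by $2^{-\max_i j_i}$, so that $\max_i j_i \ge L$ forces lightness via $2^{-L}\le 1/|\join(Q)|$, while $\max_i j_i < L$ confines every join-result probability to a single factor-$2$ interval. Your version spells out the coordinatewise bounds more explicitly than the paper does, but the underlying idea is identical.
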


Intuitively, if $\max_{i\in[k]} j_i \ge L$, every join result has its probability at most $\frac{1}{|\join(Q)|}$, so by definition this sub-instance is light. Otherwise, $\max_{i\in[k]} j_i < L$. Now, the ratio between the maximum and the minimum probability is at most $2$, so by definition this sub-instance is $2$-uniform. Hence,
\begin{itemize}[leftmargin=*]
    \item {\bf Preprocessing phase:} We partition input tuples by weights, as well as the join instance $Q$. For each sub-instance $Q_{\bj}$ with $\bj \in \llbracket L\rrbracket^k$, we build \DA oracles and compute the join size.
    \item {\bf Query phase:} We invoke Algorithm~\ref{alg:ss-rejected-batch} with input $\{\langle Q_{\bj}, p\rangle: \bj \in \llbracket L\rrbracket^k\}$ with $p^+_{\bj} = 2^{-\max_{i \in [k]} \bj_i}$. 
    Whenever a join result is sampled from $Q_{\bj}$, we use the \DA oracle to retrieve it.
\end{itemize}

For the optimized index, for every $\bu$ in $R_i$ we define the score $\phi(\bu) = \lfloor - \log \fp_i(\bu) \rfloor$. For a join result $\bu \in \join(Q)$, its score is $\bar{\phi}(\bu) = \min_{i=1}^k \phi(\bu[\schema(R_i)])$. The buckets $\mathcal{B}_\ell=\{\bu\in \join(Q)\mid \bar{\phi}(\bu)=\ell\}$ are defined in a similar way, as for the product function. We note that for any $\bu\in \mathcal{B}_\ell$, $2^{-\ell-1}\leq \fp(\bu)\leq 2^{-\ell}$, hence the ratio between then maximum and minimum probability in each bucket $\mathcal{B}_\ell$ is at most $2$. The index is similar to the one we constructed in Section~\ref{sec:index-optimized}. During the query phase, we run an algorithm similar to Algorithm~\ref{alg:fast-recursive}. One difference is that in steps 4 and 5 all pairs we try are not $L=O(\log \In)$ as we had with the product function but $O(L^2)=O(\log^2 \In)$.
Overall, Theorem~\ref{thm:best} extends to the $\textsf{MIN}$ (and $\textsf{MAX}$) function with the same guarantees up to a $\log\In$ factor.
Similarly, the results for the one-shot and the dynamic variations can also be extended up to an additional $\log\In$ factor.

\smallskip
\noindent \textbf{$\textsf{SUM}$.}
Similarly, our indexing framework naturally extends to the $\textsf{SUM}$ function. Interestingly, $\textsf{SUM}$ can be handled in exactly the same way as $\textsf{MAX}$: we define identical scores and buckets for the $\textsf{SUM}$ function as those used for the $\textsf{MAX}$ function.
Namely,
for every $\bu$ in $R_i$ we define the score $\phi(\bu) = \lfloor - \log \fp_i(\bu) \rfloor$. For a join result $\bu \in \join(Q)$, its score is $\bar{\phi}(\bu) = \max_{i=1}^k \phi(\bu[\schema(R_i)])$. For every $\ell$ we define the bucket $\mathcal{B}_\ell=\{\bu\in \join(Q)\mid \bar{\phi}(\bu)=\ell\}$. We note that for any $\bu\in \mathcal{B}_\ell$, $2^{-\ell-1}\leq \fp(\bu)\leq k\cdot 2^{-\ell}$, hence the ratio between the maximum and minimum probability in each bucket $\mathcal{B}_\ell$ is at most $2k=O(1)$. The index is similar to the one we constructed in Section~\ref{sec:index-optimized}. During the query phase, we run an algorithm similar to Algorithm~\ref{alg:fast-recursive}. One difference is that in steps 4 and 5 all pairs we try are not $L=O(\log \In)$ as we had with the product function but $O(L^2)=O(\log^2 \In)$.
Overall, Theorem~\ref{thm:best} extends to the $\textsf{SUM}$ function with the same guarantees up to a $\log\In$ factor.
Similarly, the results for the one-shot and the dynamic variations can also be extended up to an additional $\log\In$ factor.

\section{Proof of Theorem~\ref{thm:dynamic}}
\label{appndx:dynamic}

\begin{lemma}
    The amortized update time is $O(\log^3 N \log\log \In)$.
\end{lemma}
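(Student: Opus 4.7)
The plan is to prove the $O(\log^3 N \log\log N)$ amortized bound via a global charging argument: bound the total work performed across the entire stream of $N$ insertions, then divide by $N$. I would organize the analysis into three steps: (i) bound the total number of $\tilde{M}$-value changes over the stream, (ii) bound the per-change work, and (iii) combine them via a tuple-level charging scheme.

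For step (i), observe that $\tilde{M}_{i,\bv}(\ell) = 2^{\lceil \log \hat{M}_{i,\bv}(\ell)\rceil}$ is a dyadic upper bound on $\hat{M}_{i,\bv}(\ell)$, which is monotonically nondecreasing (since the stream is insertion-only) and is globally bounded by the AGM bound $N^{\rho^*}$ on the join size. Hence, for each fixed triple $(i,\bv,\ell)$, the value $\tilde{M}_{i,\bv}(\ell)$ can only double, and thus change, at most $O(\log N^{\rho^*}) = O(\log N)$ times. The number of distinct triples is at most $O(k\cdot N\cdot L) = O(NL)$ (constantly many relations, at most $N$ distinct key values per relation, and $L = O(\log N)$ scores), so the global number of $\tilde{M}$-value changes over the stream is $O(NL \log N)$.

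For step (ii), when $\tilde{M}_{i,\bv}(\ell)$ changes, Algorithm~\ref{alg:update} iterates over the tuples $\bu \in R_p \ltimes \bv$ (with $p = \textsf{parent}(i)$) and, for each $\bu$, recomputes the entire length-$L$ array $\tilde{W}^j_{p,\bu}(\cdot)$ for every child $R_j \in \C_p$. Because the recurrence in Equation~(\ref{eq:tilde-W}) is a convolution in the score dimension, this recomputation can be performed jointly for all $L$ scores in $O(L \log L)$ time via FFT. Since $|\C_p| = O(1)$, the per-tuple recomputation cost is $O(L \log L)$. For step (iii), I would charge each FFT recomputation performed on a tuple $\bu \in R_p$ to the specific $\tilde{M}_{i,\bu[\key(i)]}(\ell)$ change that triggered it. For fixed $\bu$, the number of such charges is at most the number of $\tilde{M}_{i,\bu[\key(i)]}(\ell)$ changes summed over $k$ children $i \in \C_p$ and $L$ scores $\ell$, which is $O(kL \log N) = O(L \log N)$ by step (i). Multiplying by the per-charge cost from step (ii), the total FFT work attributable to $\bu$ across the stream is $O(L \log N \cdot L \log L) = O(L^2 \log N \log L)$. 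Summing over all $N$ inserted tuples yields total FFT work $O(NL^2 \log N \log L)$; dividing by $N$ insertions gives the amortized bound $O(L^2 \log N \log L) = O(\log^3 N \log\log N)$. The base cost of initializing $\tilde{W}^\emptyset_{i,\bu}(\cdot)$ upon insertion (via one FFT on the children's statistics) is $O(L \log L)$ and is subsumed by this bound.

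The main obstacle is controlling the cascading recursive \textbf{Update} calls and their associated prefix-sum-tree maintenance, which do not directly appear in the FFT accounting. Each time $\tilde{W}^\emptyset_{p,\bu}(\ell')$ increases, Algorithm~\ref{alg:update} recursively invokes \textbf{Update} at $\textsf{parent}(p)$, paying $O(\log N)$ for a prefix-sum update and possibly cascading further up the tree. To absorb this into the stated bound, I would argue that any \emph{cascading} $\tilde{M}$-value change at a higher level is itself counted in the global $O(NL\log N)$ tally from step (i), and that the join tree has constant depth $k = O(1)$, so the total prefix-sum work is $O(NL \log N) \cdot O(\log N) = O(N \log^3 N)$, which is strictly dominated by the $O(N \log^3 N \log\log N)$ FFT cost. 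The delicate part is making this cascading analysis rigorous across levels and verifying that no class of work (initial insertion, FFT recomputations, prefix-sum updates, or cascading \textbf{Update} calls) breaks the target bound.
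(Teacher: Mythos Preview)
Your proposal is correct and mirrors the paper's argument: count the $O(NL\log N)$ total $\tilde{M}$-changes over the stream, charge each change the $O(L\log L)$ FFT recomputation cost, and amortize over $N$ insertions to get $O(L^2\log N\log L)=O(\log^3 N\log\log N)$; your explicit treatment of cascading and prefix-sum maintenance is in fact more careful than the paper's terse version. The one item you omit is the periodic full rebuild (doubling $N$) that the paper appends to handle $L$'s dependence on the unknown stream length, but this is a standard technicality whose amortized cost is dominated by your bound.
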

\begin{proof}
    The key observation is that for each relation $R_i$ and its parent $R_p$, we update the values $\tilde{W}_{p,\bu}^j(\ell')$ if and only if $\tilde{M}_{i,\bv}(\ell)$ changes, where $\bu \in R_p \ltimes \bv$ (note that for each $\bu$ there exists a unique $\bv$ such that $\bu \in R_p \ltimes \bv$).  
    For any fixed $i,\bv,$ and $\ell$, the value $\tilde{M}_{i,\bv}(\ell)$ changes at most $O(\log \In)$ times.

    Whenever an update occurs, we recompute all values $\tilde{W}_{p,\bu}^j(\ell')$ for every $\ell' \in \llbracket L-1 \rrbracket$ using the FFT algorithm, which takes $O(L \log L)$ time. Since there are $L$ possible scores, for fixed $i$ and $\bv$ there are $O(L)$ distinct values $\tilde{M}_{i,\bv}(\ell)$. Hence, the values $\tilde{W}_{p,\bu}^j(\ell')$ are updated at most $O(L \log \In)$ times in total, each costing $O(L \log L)$ time.

    This yields an amortized update time of $O(\log^3 N \log\log \In)$.

    Finally, after every $\In$ insertions, we rebuild the dynamic index from scratch and update $\In \gets 2\In$. This rebuilding step preserves the same amortized update bound of $O(\log^3 N \log\log \In)$.
\end{proof}

\begin{lemma}
    The algorithm correctly updates the values $\tilde{W}$, $\hat{M}$, and $\tilde{M}$.
\end{lemma}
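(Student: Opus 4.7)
The plan is to prove the invariants by induction on the stream of insertions, with a structural sub-induction on the join tree $\T$. After processing every insertion, I claim the following hold for every node $R_i$, every $\bu \in R_i$, every $\bv \in R_i[\key(i)]$, every child $R_j \in \C_i$, and every score $\ell \in \llbracket L-1\rrbracket$:
(i) $\tilde{W}^j_{i,\bu}(\ell)$ equals the value prescribed by the recursive formula (\ref{eq:tilde-W}), with the leaf base case $\tilde{W}^\emptyset_{i,\bu}(\phi(\bu))=1$ and $0$ otherwise;
(ii) $\hat{M}_{i,\bv}(\ell) = \sum_{\bu \in R_i\ltimes\bv} \tilde{W}^\emptyset_{i,\bu}(\ell)$, matching (\ref{eq:M-dynamic});
(iii) $\tilde{M}_{i,\bv}(\ell) = 2^{\lceil \log \hat{M}_{i,\bv}(\ell)\rceil}$.

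The base case (empty stream) holds vacuously. For the inductive step, suppose the invariants hold just before an insertion of $\bu$ into $R_i$. First, I would argue that initializing $\tilde{W}^\emptyset_{i,\bu}$ according to the leaf/internal rule restores invariant (i) locally at $R_i$: for a leaf, it is by definition, and for an internal node we compute $\tilde{W}^\emptyset_{i,\bu}(\ell)$ via (\ref{eq:tilde-W}) using the children's $\tilde{M}$-values, which are correct by the pre-insertion invariants (no child is modified by this insertion). Then \textsc{Update} is invoked for each score $\ell$, which is where the bulk of the work happens.

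The main obstacle is to verify that \textsc{Update}$(i,\bv,\ell,\Delta)$ preserves the invariants globally along the root-ward path from $R_i$. I would proceed by induction on the height of the node being updated. Line~\ref{line:dynM} increments $\hat{M}_{i,\bv}(\ell)$ by exactly the change in $\sum_{\bu \in R_i\ltimes\bv} \tilde{W}^\emptyset_{i,\bu}(\ell)$: for the initial call this is the newly-initialized $\tilde{W}^\emptyset_{i,\bu}(\ell)$ of the inserted tuple (or $1$ for a leaf), and for recursive calls it is the change $\Delta'$ propagated from a child. This re-establishes (ii) at $(i,\bv,\ell)$, and the subsequent reassignment of $\tilde{M}_{i,\bv}(\ell)$ to $2^{\lceil\log \hat{M}_{i,\bv}(\ell)\rceil}$ re-establishes (iii). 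The delicate point is (i) at the parent $R_p=\textsf{parent}(i)$: the only $\tilde{W}^j_{p,\bu'}(\ell')$ that can depend on $\tilde{M}_{i,\bv}(\ell)$ via (\ref{eq:tilde-W}) are those with $\bu' \in R_p \ltimes \bv$, so when $\tilde{M}_{i,\bv}(\ell)$ actually changes, Line~\ref{line:dynW} recomputes exactly these values through the convolution formula, restoring (i) at $R_p$ for every score $\ell'$ and every child index $j \in \C_p$. Crucially, if $\tilde{M}_{i,\bv}(\ell)$ does not change (the increment of $\hat{M}$ did not cross a power-of-two threshold), no $\tilde{W}$-value at the parent can change either, so skipping the propagation is sound.

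It then remains to argue that the recursive invocation terminates correctly along the path to the root: each time a $\tilde{W}^\emptyset_{p,\bu}(\ell')$ increases by some amount $\Delta'$, \textsc{Update} is called on the grandparent with the exact delta, and by the inductive hypothesis on tree height the invariants are restored there. At the root, $\textsf{parent}(r)=\mathsf{null}$ and the recursion stops, which is consistent because no further values depend on the root's $\tilde{M}$. Combining the local correctness at $R_i$ after initialization with the inductive propagation argument along the root-ward path yields that all three invariants hold after the insertion, completing the induction.
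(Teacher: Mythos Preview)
Your proposal is correct and follows essentially the same approach as the paper's proof: both argue by induction on the height of the node at which \textsc{Update} is invoked, both distinguish the case where $\tilde{M}_{i,\bv}(\ell)$ crosses a power-of-two threshold from the case where it does not, and both observe that only the parent tuples $\bu' \in R_p \ltimes \bv$ have $\tilde{W}$-values depending on the changed $\tilde{M}$, so the recomputation in Line~\ref{line:dynW} together with the recursive call restores the invariants up the root-ward path. Your version is somewhat more explicit in naming the three invariants and in framing the outer induction on the stream of insertions, but the underlying argument is the same.
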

\begin{proof}
%Line~\label{line:prefsum} in Algorithm~\ref{alg:update} makes sure that 

    Without loss of generality, assume that all values $\tilde{W}$, $\tilde{M}$, and $\hat{M}$ have been maintained correctly on the subtree $\T_i$ up to the moment when \textsc{Update}$(i, \bv, \ell, \Delta)$ is invoked for the first time at a node $R_i$ located at level $h$ of $\T$. Let $\Delta$ denote the correct increment to $\hat{M}_{i,\bv}(\ell)$.

    We distinguish two cases. If $\tilde{M}_{i,\bv}(\ell)$ does not change, then by Equation~\eqref{eq:tilde-W}, no value $\tilde{W}$ needs to be updated in any ancestor of $R_i$. Since $\tilde{W}$ remains unchanged, all corresponding $\hat{M}$ values on the ancestors of $R_i$ also remain unchanged. Therefore, no further updates are required.

    Otherwise, suppose that $\tilde{M}_{i,\bv}(\ell)$ changes. We then recompute all values $\tilde{W}_{p,\bu}^j(\ell')$ according to Equation~\eqref{eq:tilde-W} (Algorithm~\ref{alg:update}, line~\ref{line:dynW}). If a value $\tilde{W}_{p,\bu}^j(\ell')$ with $j \neq \emptyset$ increases, then no further $\tilde{W}$ or $\hat{M}$ values in the ancestors of $R_i$ need to be updated. If a value $\tilde{W}_{p,\bu}^{\emptyset}(\ell')$ does not increase, again no changes are required in the ancestors.

    Finally, if $\tilde{W}_{p,\bu}^{\emptyset}(\ell')$ increases by a nonzero amount $\Delta'$, then by Equation~\eqref{eq:M-dynamic}, the value $\hat{M}_{i,\bv}(\ell')$ must be increased by $\Delta'$. Consequently, the recursive call $\textbf{Update}(p, \bu[\key(p)], \ell', \Delta')$ is triggered, and $\hat{M}_{i,\bv}(\ell')$ is updated at line~\ref{line:dynM} of Algorithm~\ref{alg:update}. Applying the same argument recursively establishes the correctness of the entire update process.
\end{proof}

\begin{lemma}
    The query procedure runs in $O(1+\mu_\Psi \log N)$ expected time.
\end{lemma}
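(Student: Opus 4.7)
The plan is to decompose the expected query time into two factors: (i) the expected number of positions drawn from the composite index (some of which may correspond to ``dummy'' slots of the approximate array and get rejected), and (ii) the worst-case cost of a single direct-access invocation on the dynamically maintained approximate statistics. I would then argue that (i) is $O(1+\mu_\Psi)$ and (ii) is $O(\log N)$, which multiply to the claimed bound.

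For the per-retrieval cost, I would mirror the analysis of Lemma~\ref{lem:bestRun}, but applied to the variant of Algorithm~\ref{alg:fast-recursive} that reads $\tilde W$, $\tilde M$, and the dynamic prefix-sum trees. Each binary search and prefix-sum query on these trees runs in $O(\log N)$ time; the exhaustive search over score pairs $(\ell_1,\ell_2)$ in Line~\ref{locate-score-pairs} still costs $O(L)=O(\log N)$; and the recursion has constant depth because $k$ is fixed. Hence each direct-access call, and each subsequent rejection test against the exact weight $\fp(\bu)$ of a candidate real join result, takes $O(\log N)$ time.

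For the expected number of retrievals, the key claim is that $\tilde W^{j}_{i,\bu}(\ell)\le c_k\cdot W^{j}_{i,\bu}(\ell)$ and $\tilde M_{i,\bv}(\ell)\le c_k\cdot M_{i,\bv}(\ell)$ for some constant $c_k$ that depends only on the (constant) join size $k$. I would prove this by induction, following the bottom-up order in which the statistics are defined: at a leaf the approximate and exact values coincide; at an internal node the rounding $\tilde M=2^{\lceil\log\hat M\rceil}\le 2\hat M$ inflates the child's ratio by at most a factor of $2$, and the linear recurrence~(\ref{eq:tilde-W}) then propagates this blow-up multiplicatively across the sibling chain. With this inequality, the implicit bucket size seen by the composite index satisfies $\sum_{\bu\in R_r}\tilde W^\emptyset_{r,\bu}(\ell)\le c_k\cdot|\mathcal{B}_\ell|$, so Lemma~\ref{lm:batched-index} gives an expected total of
\[
O\!\left(1+\sum_{\ell<L}c_k\cdot|\mathcal{B}_\ell|\cdot 2^{-\ell}+|\mathcal{B}_{\ge L}|\cdot 2^{-L}\right)=O(1+\mu_\Psi)
\]
positions sampled across the score buckets and the tail bucket, where the tail contribution is negligible exactly as in the proof of Theorem~\ref{thm:best}. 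Multiplying by the $O(\log N)$ per-retrieval cost yields the target $O(1+\mu_\Psi\log N)$.

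The main obstacle will be establishing the constant-factor approximation claim cleanly. The recurrence (\ref{eq:tilde-W}) mixes a $\tilde M$-factor (which absorbs one fresh rounding step at the current level) with a $\tilde W^{\textsf{next}(j)}$-factor (which already carries accumulated blow-up from sibling subtrees processed earlier in the decreasing child order). The induction therefore cannot be stated purely in terms of the height of the join tree; it has to be organized over pairs $(i,j)$ in the traversal order used by the preprocessing, so that each approximation is charged exactly once. Once this bookkeeping is done carefully, the constant $c_k$ comes out as $2^{O(k)}$, and the remainder of the proof is routine.
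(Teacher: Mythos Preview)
Your proposal is correct and follows essentially the same approach as the paper: both establish the constant-factor inequalities $W\le\tilde W\le c\cdot W$ and $M\le\tilde M\le c\cdot M$ for a constant depending only on the schema, use them to argue that the implicit (dummy-padded) buckets are only a constant factor larger than the real $|\mathcal B_\ell|$, invoke Lemma~\ref{lm:batched-index} to bound the expected number of retrievals by $O(1+\mu_\Psi)$, and multiply by the $O(\log N)$ per-access cost. Your write-up is in fact more detailed than the paper's proof of this lemma, which simply asserts the constant-approximation and the resulting $\mu_{\tilde\Psi}=O(\mu_\Psi)$ without spelling out the induction over $(i,j)$ pairs that you correctly identify as the delicate part.
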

\begin{proof}
    By definition, there exists a constant $c^\star$, depending only on $|\T|$, such that
\[
W_{i,\bu}^j(\ell) \leq \tilde{W}_{i,\bu}^j(\ell) \leq c^\star W_{i,\bu}^j(\ell)
\quad\text{and}\quad
M_{i,\bu}(\ell) \leq \tilde{M}_{i,\bu}(\ell) \leq c^\star M_{i,\bu}(\ell),
\]
for all $i,j,\ell,$ and $\bu$. 
Therefore, we can simulate the execution of Algorithm~\ref{alg:ss-rejected-batch} (and Algorithm~\ref{alg:fast-recursive}) considering that the number of elements in each bucket $\mathcal{B}_\ell$ is 
$\tilde{M}_{r,\emptyset}(\ell)=O(M_{r,\emptyset}(\ell))$,
%$\sum_{\bu\in R_r}\tilde{W}^{\emptyset}_{r,\bu}(\ell)=O(\sum_{\bu\in R_r}W^{\emptyset}_{r,\bu}(\ell))$
where $r$ is the root of $\T$. 
In other words, our index implicitly contains a superset of the true join results that may include dummy tuples.
%Each time we get a dummy tuple we skip it and continue the sampling procedure.
Since the number of non-dummy tuples is a constant fraction of this superset, we have that the query procedure runs in $O(1+\mu_\Psi \log N)$ expected time. 
Indeed, let $\Psi$ be the instance of subset sampling considering the real $W_{i,\bu}^j(\ell), M_{i,\bu}(\ell)$ and let $\tilde{\Psi}$ be the instance considering the approximations $\tilde{W}_{i,\bu}^j(\ell), \tilde{M}_{i,\bu}(\ell)$. 
For $\ell\in\llbracket L-1\rrbracket$, let $\kappa_\ell$ be the number of tuples from instance $\Psi$ in bucket $\mathcal{B}_\ell$.
For every bucket $\mathcal{B}_\ell$, the instance $\tilde{\Psi}$ contains all tuples from instance $\Psi$ in the same bucket plus $\tilde{c}\cdot \kappa_\ell$ additional dummy tuples, where $\tilde{c}$ is a constant that depends on $|\T|$.
%Indeed, in terms of running time, the instance $\Psi$ (considering the real $W_{i,\bu}^j(\ell), M_{i,\bu}(\ell)$) is equivalent to an instance $\tilde{\Psi}$ (considering the approximations $\tilde{W}_{i,\bu}^j(\ell), \tilde{M}_{i,\bu}(\ell)$) in which each bucket $\mathcal{B}_\ell$ contains the same tuples as bucket $\mathcal{B}_\ell$ in $\Psi$, plus a constant-fraction additional number of (dummy) tuples.
Thus $\mu_{\Psi'}=O(\mu_{\Psi})$ and the result follows.
%, the probability of sampling a non-dummy tuple using Algorithm~\ref{alg:fast-recursive} is at least $2^{-c}$, where $c$ is a constant depending only on $|\T|$, similarly to~\cite{DHY24}. Consequently, the expected query time remains $O(1 + \mu_\Psi \log \In)$.
\end{proof}

\begin{lemma}
    The query procedure returns a valid subset sampling.
\end{lemma}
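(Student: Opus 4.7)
The plan is to reduce correctness to the static batched rejection framework (Lemma~\ref{lm:batched-index}) applied to an enlarged subset sampling instance $\tilde{\Psi}$ whose buckets $\tilde{\mathcal{B}}_\ell$ are the union of the true bucket $\mathcal{B}_\ell$ with a deterministic collection of dummy tuples induced by the gap between $W$ and $\tilde W$ (and between $M$ and $\tilde M$). Since at query time the maintained statistics $\tilde W^j_{i,\bu}(\ell)$ and $\tilde M_{i,\bv}(\ell)$ are deterministic functions of the current database, the sizes $|\tilde{\mathcal{B}}_\ell| = \tilde M_{r,\emptyset}(\ell)$ are fixed, and the recursive decomposition used by Algorithm~\ref{alg:fast-recursive} induces a well-defined bijection between integer ranks $\tau \in [|\tilde{\mathcal{B}}_\ell|]$ and elements of $\tilde{\mathcal{B}}_\ell$, with each true join result appearing exactly once. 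This turns the modified direct-access routine into a bona fide \DA oracle for $\tilde{\mathcal{B}}_\ell$ under some fixed ordering.

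Next, I would assign every element of $\tilde{\mathcal{B}}_\ell$ the uniform upper-bound probability $p^+_\ell = 2^{-\ell}$ (and $p^+_{\ge L} = 2^{-L}$ for the tail bucket). By Lemma~\ref{lem:uniform-light} adapted to this enlarged instance, each $\tilde{\mathcal{B}}_\ell$ with $\ell < L$ is $2$-uniform, and $\tilde{\mathcal{B}}_{\ge L}$ is light, so the hypotheses of the batched framework are satisfied. Lemma~\ref{lm:batched-index} then guarantees that Algorithm~\ref{alg:ss-rejected-batch} returns, for every element $u \in \tilde{\mathcal{B}}_\ell$, an \emph{independent} inclusion event of probability $p^+_\ell$. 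We then compose this with the per-element rejection coin used in the query phase: a dummy tuple is always rejected (effective weight $0$), while a real $\bu \in \mathcal{B}_\ell$ is retained with probability $\fp(\bu)/p^+_\ell \in [0,1]$ (well-defined because $\bu \in \mathcal{B}_\ell$ implies $\fp(\bu) \le 2^{-\ell} = p^+_\ell$). Multiplying the two independent coins yields an inclusion probability of exactly
\[
  p^+_\ell \cdot \frac{\fp(\bu)}{p^+_\ell} \;=\; \fp(\bu)
\]
for every real $\bu$, and the events remain mutually independent across $\bu$ and across buckets because the meta-sample, the per-bucket uniform sub-sample, and the rejection coins are drawn from independent sources of randomness. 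The tail bucket is handled identically: its join results are materialized on demand, and the same uniform-plus-rejection argument applies with $p^+_{\ge L}$.

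The main obstacle will be formally justifying the bijection between integer ranks and elements of $\tilde{\mathcal{B}}_\ell$. Concretely, one must verify, by strong induction on the height of the join tree $\T$, that the recursive descent driven by the approximate prefix sums of $\tilde W$ behaves exactly like the descent in Algorithm~\ref{alg:fast-recursive}: distinct ranks $\tau$ map to distinct outputs, and among these outputs, the subset that corresponds to genuine join results lands in $\mathcal{B}_\ell$ with the correct multiplicity (once each). Once this structural claim is established, the remainder of the proof is a direct instantiation of the validity argument for the static batched index, since the approximation only inflates bucket sizes without changing the set of real join results or their true weights $\fp(\bu)$.
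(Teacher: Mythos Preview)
Your approach is essentially the same as the paper's: both reduce correctness to Lemma~\ref{lm:batched-index} applied to the enlarged buckets $\tilde{\mathcal{B}}_\ell$ of size $\tilde{M}_{r,\emptyset}(\ell)$, with the excess ranks treated as dummy tuples of weight $0$ that are always rejected; your write-up is simply more explicit about the \DA bijection and the independence of the rejection coins than the paper's one-line invocation. One small wrinkle: the enlarged bucket $\tilde{\mathcal{B}}_\ell$ is not literally $2$-uniform (dummies have weight $0$, so $\min_e \fp(e)=0$), but that hypothesis of Lemma~\ref{lm:batched-index} is used only for the runtime bound, not for the validity claim, so your correctness argument is unaffected---just drop the appeal to Lemma~\ref{lem:uniform-light} and rely directly on $\fp(\bu)\le 2^{-\ell}=p^+_\ell$ for real $\bu\in\mathcal{B}_\ell$.
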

\begin{proof}
Since $W_{i,\bu}^j(\ell) \leq \tilde{W}_{i,\bu}^j(\ell) \leq c^\star W_{i,\bu}^j(\ell)$ and $M_{i,\bu}(\ell) \leq \tilde{M}_{i,\bu}(\ell) \leq c^\star M_{i,\bu}(\ell),$
for all $i,j,\ell,$ and $\bu$, the correctness follows directly from Algorithm~\ref{alg:ss-rejected-batch} and Lemma~\ref{lm:batched-index}, having $p_\ell^+=c\cdot 2^{-\ell}$, for a constant $c$ that depends on $\T$, and $|S_\ell|=|\mathcal{B}_\ell|=\tilde{M}_{r,\emptyset}(\ell)$, for every $\ell\in \llbracket L-1\rrbracket$. 
%and Algorithm~\ref{alg:fast-recursive}. For completeness, consider bucket $\mathcal{B}_\ell$ and let $\bu\in \join(Q)$ be a tuple in bucket $\mathcal{B}_\ell$. The probability that $\bu$ will be selected is $$q_\ell \cdot p_\ell^+ \cdot \frac{\fp(\bu)}{q_\ell\cdot p_\ell^+}=\fp(\bu),$$ and similarly the probability that $\bu$ will not be sampled is $$(1-q_\ell) + q_\ell\cdot \left(1-p_\ell^+ + p_{\ell}^+\cdot\left(1-\frac{\fp(\bu)}{q_\ell\cdot p_\ell^+}\right)\right)=1-\fp(\bu),$$ where $q_\ell=1-(1-p_\ell^+)^{\sum_{\bu\in R_r}\tilde{W}^{\emptyset}_{r,\bu}(\ell)}$ and $q_\ell^+=c\cdot 2^{-\ell}$, for a constant $c$ that depends on $|\T|$. Recall that i) each time we get a dummy tuple we reject it and we simply continue the query procedure, and ii) each sample we get is independent. The result follows.
%The exact values of $q_\ell$ and $p_\ell^+$ do not matter for the correctness, as observed in Section~\ref{}
\end{proof}

% \cut{
% \section{Extensions to cyclic joins}
% \label{appendix:cyclic}
% \todo{Extend our results for acyclic joins to general joins. }

\end{document}